\documentclass{article}
\usepackage[utf8]{inputenc}
\usepackage{graphicx}
\usepackage{authblk}
\usepackage{amssymb}
\usepackage{amsthm}
\usepackage{url}
\usepackage{amsmath}
\usepackage{textcomp}
\usepackage{xcolor}
\usepackage{tikz} 
\usepackage{tkz-graph}
\usepackage{geometry}
\usepackage{float}
\usepackage{caption}
\usepackage{subcaption}

\newtheorem{prop}{Proposition}
\newtheorem{lem}{Lemma}
\newtheorem{theo}{Theorem}
\newtheorem{remark}{Remark}
\newtheorem{definition}{Definition}

\title{Generating Boolean functions on totalistic automata networks}
\author[1,2]{Eric Goles}
\author[1]{Andrew Adamatzky}
\author[2]{Pedro Montealegre}
\author[3,4]{Martín Ríos-Wilson}

\affil[1]{Unconventional Computing Laboratory, University of the West of England, Bristol, UK.}
\affil[2]{Facultad de Ingenier\'{i}a y Ciencias, Universidad Adolfo Ib\'{a}\~{n}ez, Santiago, Chile.}
\affil[3]{Departamento de Ingeniería Matemática, FCFM, Universidad de Chile, Santiago, Chile.}
\affil[4]{ Aix Marseille Univ, Université de Toulon, CNRS, LIS, Marseille, France.}
\date{}

\begin{document}

\maketitle

\begin{abstract}
\noindent
We consider the problem of studying the simulation capabilities of the dynamics of arbitrary networks of finite states machines. In these models, each node of the network takes two states 0 (passive) and 1 (active). The states of the nodes are updated in parallel following a local totalistic rule, i.e., depending only on the sum of active states. Four families of totalistic rules are considered: linear or matrix defined rules (a node takes state 1 if each of its neighbours is in state 1), threshold rules (a node takes state 1 if the sum of its neighbours exceed a threshold), isolated rules (a node takes state 1 if the sum  of its neighbours equals to some single number) and interval rule (a node takes state 1 if the sum of its neighbours belong to some discrete interval). We focus in studying the simulation capabilities of the dynamics of each of the latter classes. In particular,  we show that  totalistic automata networks governed by matrix defined rules can only implement constant functions and other matrix defined functions. In addition, we show that t by threshold rules can generate any monotone Boolean functions. Finally, we show that networks driven by isolated and the interval rules exhibit a very rich spectrum of boolean functions as they can, in fact, implement any arbitrary Boolean functions. We complement this results by studying experimentally the set of different Boolean functions generated by totalistic rules on random graphs.  \\
\vspace{3mm}
\noindent
\emph{Keywords:} non-linear dynamics,computational biology model, totalistic automata, random graphs, signal interactions, Boolean functions, computational universality
\end{abstract}

 Corresponding author: Eric Goles, Facultad de Ingeniería y Ciencias, Universidad Adolfo Ibáñez, eric.chacc@uai.cl

\section{Introduction}

Unconventional computing aims to uncover principles of information processing in chemical, physical and living computing substrates~\cite{adamatzky2016advances}. A predominant majority of living systems are comprised of two key networks: vascular system (metabolites transfer and processing) and nervous or other signalling systems (information transfer and processing). Both types of networks are conducive to propagation of electrical~\cite{hodgkin1952propagation,fromm2007electrical}, mechanical/sound~\cite{heimburg2005soliton,shrivastava2014evidence,fichtl2016protons} and optical~\cite{contreras2013non,appali2012comparison} signals. When these electrical, mechanical or optical solitons interact with each other while travelling and colliding on the networks they change velocity vectors, states or existence. Thus by encoding Boolean values in presence and absence of solitons one can implement logical circuits on the networks. 

An idea to implement a computation by using collisions of signals travelling along one-dimensional non-linear geometries can be traced back to the mid 1960s when Atrubin developed a chain of finite-state machines executing multiplication~\cite{atrubin1965one}, Fisher designed prime numbers generators in cellular automata~\cite{fischer1965generation} and Waksman proposed the eight-state solution for a firing squad synchronisation problem~\cite{waksman1966optimum}. 
In 1986, Park, Steiglitz, and Thurston~\cite{park1986soliton} designed a parity filter in cellular automata with soliton-like dynamics of localisations. Their design led to a construction of a one-dimensional particle machine, which performs the computation by colliding particles in one-dimensional cellular automata, i.e. the computing is embedded in a bulk media~\cite{squier1994programmable}. In 1990s Goles and colleagues demonstrated that sand pile and chip firing game are universal computers. That is by representing logical truth by presence of a sand grain or a chip and logical false by absence of the grain/chip one route information as avalanches and implement logical gates via interaction of avalanches in an appropriate geometrical structure~\cite{goles1996sand,gajardo2006crossing,goles1997universality}.

Most close to biophysical reality models of computing on biological networks have been implemented with molecular structures of verotoxin protein~\cite{adamatzky2017computing} and actin monomer~\cite{
adamatzky2017logical}, actin bundles networks derived from experimental laboratory data~\cite{adamatzky2019computing}, plant leaf vascular system~\cite{adamatzky2019plant} and microscopic images of three-dimensional fungal colonies~\cite{adamatzky2020boolean}. These approaches employed the following method. Two loci (atoms, molecules, parts of the network) are considered to be input and all other loci outputs. Sequences (01), (10), (11), represented by solitons or impulses are sent to the inputs and solitons/impulses are recorded on the outputs. Each of the outputs implements a two-input-one-output logical gate. In the computational experiments with the molecular, polymer, vascular or mycelial networks \cite{adamatzky2017computing,adamatzky2019computing,adamatzky2019plant,adamatzky2020boolean} we did not analyse where exactly in the networks the computation takes place. In this context, and in order to fill the gap in our knowledge,  we study the generation of Boolean functions on totalistic automaton networks, i.e., each site changes state according to specific values of the sum of active sites in its neighbourhood. From a mathematical point of view, totalistic rules are well-known models in the context of the study of cellular automata and automata networks as dynamical systems \cite{wolfram1984universality,marr2009outer}. In addition, the approach of studying the computational complexity of some specific decision problems that are some how related to the dynamical behaviour of totalisitc automata network has been also broadly studied, particularly in regards to how difficult is to predict the dynamical behaviour of some specific entity in the network \cite{goles2014computational,goles2018complexity,goles2020complexity}. Although,  as from a theoretical viewpoint, computational complexity has been proposed as an approach for somehow measure the complexity of the dynamics of an specific automata network model, the simulation capabilities of the network, in the sense of the experiments performed in the previous context, has remained unexplored. In this paper, we characterise the complexity of totalistic automata networks  according to its capabilities in order to simulate Boolean gates by the automaton's dynamics. Particularly, we explore this approach from a mathematical point of view, by systematizing and formalizing the spectrum of an automata network as a measure of the different boolean networks that the system is able to implement. In this context we proved that linear or matrix rules generate few different Boolean networks:only constant or matrix defined ones. Threshold totalistic can generate any monotone Boolean Function and the isolated or interval ones, every Boolean function. Further, we study by computational experiments the generation of Boolean functions capabilities in random totalistic networks.

\section{Preliminaries}
\label{Preliminaries}

An automata network  is a tuple $ \mathcal{A}= (G,Q,\mathcal{F})$ where $G=(V,E)$ is an undirected finite graph such that $|V|=n,$ $Q=\{0,1\}$ is the finite set of states  called $\textit{alphabet}$ and $\mathcal{F} = \{f_v: v \in V\}$ is a collection of functions called \textit{local functions} such that  each local function $f_v:N_v \to Q$ takes it arguments as a neighbourhood of $v$ in $G$ given by $N_v = \{u\in V: uv \in E\}.$ We define  $F:Q^n \to Q^n$ as the \textit{global transition function} of the automata network defined by $F(x)_v = f_v(x|_{N_v})$ for all $x \in Q^n$ and for all $v \in V$ where $x|_{N_v}$ are the coordinates of $x$ that are representing the neighbours of $v$. Formally, $x|_{N_v} \in Q^{N_v}$ and for all $u \in N_v$ we have $(x|_{N_v})_u = x_u.$  We consider only the case where every local function $f_v$ is \textit{totalistic}, i.e., a result of the function depends on the sum of the active values (states $1$'s ) as its arguments. Suppose that the maximun degree  $\Delta(G)$ on $G$ is such that $\Delta(G) = \Delta$, so the sum may take values in the set
$\{1, ..., \Delta\} $, i.e,  given a configuration $x\in Q^{n}$ we have  $f_v(x|_{N_v})= 1$  if and only if   $\sum \limits_{u\in N_v} x_u \in \mathcal{I}_v = \{a_1, ..,a_s\}$, where   $\{a_1, ..,a_s\}\subseteq \{1, ..., \Delta\}.$ We will call the set $\mathcal{I}_v$ the activation set of $f_v$.
We refer to each local rule by the digit associated to  $\mathcal{I}_v $,  so if  $\mathcal{I}_v = \{a_1,\hdots, a_s\}$ then, the  totalistic rule number will be $a_1a_2 \cdots a_s$. For instance, rule $25$ means that the associated vertices become $1$ if and only if the sum if either $2$ or $5$. Note that, as each of these rules depends also on the set of values that each node in some neighborhood will take, we can consider that totalistic rules are defined over the set $S_{\Delta} = \{1,\hdots, \Delta\}$. This is required because in further sections we will work with a fixed set of totalistic rules and use them to define automata networks over different graphs. In this regard,  as we will always work over some class of graphs $\mathcal{G}$ in which every graph have at most degree $\Delta$, we note that each totalistic rule having an activation set $I \subseteq \{1,\hdots,\Delta\}$  will be well defined over any graph in $\mathcal{G}$.  For example, if $\Delta = 10$ rule $25$ will be well-define over any graph in $\mathcal{G}$. More precisely,  for any totalistic function $f$ such that $I_f \subseteq S_{\Delta}$ we can assume that $f: S_{\Delta} \to \{0,1\}$ and that $f(s) = 1$ if and only if $s \in I_f$.

\subsection{The problem}

Let $\mathcal{A} = (G, \mathcal{F})$  be a totalistic automata network with global transition function $F$.  A configuration $\overline{x} \in Q^n$ is a fixed point of $\mathcal{A}$ if $F(\overline{x}) = \overline{x}.$ Note that, by definition,  $\vec{0}$ is always a fixed point of $\mathcal{A}$. Consider the set $\text{Fix}(\mathcal{A})$ of fixed points of $\mathcal{A}$. Since $\vec{0}$ is a fixed point,  $\text{Fix}(\mathcal{A}) \not = \emptyset$. Now we consider two arbitrary disjoint sets of vertices $I = \{i_1, \hdots, i_l\}\subseteq V$ and $O  = \{o_1,\hdots,o_s\} \subseteq V$ for $s,l \geq 1$. We call these sets the  \textit{input}  set of  $\mathcal{A}$ and the  \textit{output} set of  $\mathcal{A}$ respectively. As we are interested in the simulation of logic gates, during the rest of the paper, we will focus on the case of at most two outputs, that is to say $O \subseteq \{o_1,o_2\},$ with special emphasis in the case $O = \{o\}$. 

 Let $t \geq 0$, we call a tuple $(I,o,t ) \in 2^{V} \times V \times \mathbb{N}$ an $I/O$ setting for $\mathcal{A}$. The variable $t$ represents some specific time step in which we are interested to analyze the output of the automata network in $o$. We call $t$ an \textit{observation} time.


Consider now a fixed point $\overline{x} \in \text{Fix} (\mathcal{A})$ with $I/O$ setting $(I,o,t) \in 2^{V} \times V \times \mathbb{N}.$  Let $z = (z_1, ..., z_l) \in \{0,1\}^{l}$ be an assignation of values for the input set $I = \{i_1, ...i_l\}$.   We say that a configuration $y \in Q^n$ is a perturbation of $\overline{x}$ by $z$ if $y_u = \overline{x}_u$ for all $u \in V \setminus I$ and $y_u= z_u$ for $u \in I$. Now, given some observation time $t\geq 0$ we are interested in studying all possible states of the output $o$ after $t$ time steps given different assigments of variables to the inputs. More precisely, given a perturbation $y$ of $\overline{x}$ for the last $I/O$ setting  we define a realisation of $\overline{x}$ as a Boolean function $g^{(I,o,t)}_{\overline{x}}: \{0,1\}^l \to \{0,1\}$ such that $g^{(I,o,t)}_{\overline{x}}(y) = F^t(y)_o.$  Let $\binom{V}{l}$ be a  collection of subsets of $V$ with size $l$. Considering all possible combinations of inputs of certain size $l$ and possible outputs for a fixed time, we define the spectrum of Boolean functions with $l$ inputs generated by $\overline{x}$ at time $t$ as the set of realisations of $\overline{x}$ given by  $\mathbb{F}^{t,l}_{\overline{x}} = \{g^{(I,o,t)}_{\overline{x}}: (I,o) \in 2^V \times V\}$ and the spectrum of functions with $l$ inputs of $\mathcal{A}$ at time $t$ as  the set $\mathbb{F}^{t,l}_{\mathcal{A}} = \bigcup \limits_{\overline{x} \in \text{Fix}(\mathcal{A})} \mathbb{F}_{\overline{x}}$. Previous set may consider the "same" Boolean function but only by changing the choosing input-output indexes of variables. To avoid that, in  previous set we only consider  different Boolean functions, i.e., that is its difer in al least one set of Boolean inputs. On the other hand, the set of realisable Boolean functions depends on the observation time that we are considering and in the interaction graph of the network. Thus, it is strictly related to classical dynamical properties such as transient length and the attractor landscape of the automata network. 
 
\subsection{Totalistic classes of functions}

We recall that $S_{\Delta}$ is the set of natural numbers $\{1,\hdots,\Delta\}$ for some bound $\Delta \geq 2$. One of the advantages of working with totalistic rules is that we can fix a collection of these rules and change the underlying interaction graph in order to define different automata networks. For example, if we fix $n \in \mathbb{N}$ and we consider the set of local functions $\{f_k: S_{\Delta} \to \{0,1\}, k=1,\hdots,n\}$ in which every rule is the rule $1$, i.e., $f_k(s) = 1$ if and only if $s = 1$ for every  $k \in V$, we can, for every graph $G = (V,E)$ with $n$ nodes, define an automata network $\mathcal{A}_G = (G,\{\tilde{f}_k:  N(k) \to \{0,1\}\})$ where $\tilde{f}(x|_{N(k)}) = f(\sum_{v \in N(k)} x_v)$. In further sections, we will write simply $f$ while referring both to $\tilde{f}$ and $f$ to simplify the notation. Note now that if $G$ and $G'$ are two different graphs with $n$ nodes then their corresponding automata networks $\mathcal{A}_G$ and $\mathcal{A}_{G'}$ will have the same kind of global rule (each vertex has the same totalistic function). The degree of the graph $G$ plays a fundamental role in this definition. For example, suppose that some rule $f_k$ needs $l$ active neighbors in order to activate the node, i.e. $f_k(x|_S) = 1$ if and only if $\sum \limits_{v \in S} x_v  = l$ for some, $l \in \mathbb{N}$, for every $S\subseteq V$ and  $x \in \{0,1\}^n$. In addition, suppose that the degree of node $k$ is less than $l$. In that case, $f_k$ will be fixed in the initial configuration. Roughly this will not have an important effect in our results (nor theoretical or numerical) as we will work with connected graphs with bounded maximum degree and, in addition, it will allow us to keep the latter set-up simple. In addition, this way to generated different automata networks from the same class of local functions is more practical in order to study the dynamical behaviour of different totalistic rules for a fixed large collections of randomly generated graphs.

Roughly, in this context we will say that the complexity of $\mathcal{A}$ is given by the number of different Boolean functions that could be generated over any fixed point we consider.  Formally, we define the simulation complexity for a $\mathcal{A}$ by  $\rho(\mathcal{A},t,l) = |\mathbb{F}^{t,l}_{\mathcal{A}}|$. The large is variery of  Boolean functions  generated by the latter procedure, the more complex the automaton will be considered. In this context we will say the complexity of $\mathcal{A}$ is given by the number of different Boolean functions that could be generated over any fixed point we consider. In addition, we can define the complexity of a class of totalistic functions $\mathcal{F}$ related to some class of graph $\mathcal{G}$ as $\rho(\mathcal{F},\mathcal{G},t,l) =|\mathbb{F}^{G,t,l}_{\mathcal{F}}|$

\subsection{The spectrum as a measure of complexity}
\label{spectrum}

Generally speaking, the spectrum of a set or a class of totalistic rules is a measure of how many different types of a Boolean gates it can simulate. Nevertheless, it is well known that every Boolean function $f: \{0,1\}^r \to \{0,1\}^s$ can be represented by a directed graph $C$ in which every node is a Boolean gate. An asynchronous evaluation of every Boolean gate in $C$ performs the evaluation of the function $f$.  In this regard, we are interested in the study sets of totalistic rules  which not only are capable of simulating different Boolean gates, but to organize them in way that, they can simulate the evaluation of a Boolean circuit.

Roughly, our main idea is to show that some class $\mathcal{F}$ is able to simulate a complete set of logic gates, for example, $\textsc{AND}, \textsc{NOT}, \textsc{OR}$. Let $\mathcal{A}_{\textsc{AND}},\mathcal{A}_{\textsc{NOT}}$ and $\mathcal{A}_{\textsc{OR}}$ be the automata networks that simulates each of this gates. Then, we will try to combine its different underlying graphs in order to simulate an arbitrary circuit $C$, by considering for each gate $g$ in c  one of the latter graphs and then, try to connect them somehow. Note that this process is not straightforward as every automata network simulates a logic gate through its dynamics and so, it is not trivial how we should glue them in order to generate coherent global dynamical behaviour.  In simple words, what we want to achieve is, exhibit a large automata network that has a set of small subgraphs simulating logic gates. This big network will simulate the evaluation of $C$ through its dynamics, in the sense that, by identifying a group of nodes as ``input nodes'' we can read the same output we would have read after the evaluation of $C$ by reading the state of another group of nodes labelled as ``output nodes'', after some time steps.  More precisely, we introduce the following definition:


\begin{definition}
	Let $\Delta \in \mathbb{N}$ and let $\mathcal{G}$ be a collection of graphs with maximum degree at most $\Delta$. Let $\mathcal{F}$ be a set of totalistic rules and $f:\{0,1\}^r \to \{0,1\}^s$ an arbitrary Boolean function. We say that $\mathcal{F}$ simulates $f$ in  $\mathcal{G}$  if there exists such $n \in \mathbb{N}$ such that $n = r^{\mathcal{O}(1)}$, a graph $G_n = (V,E) \in \mathcal{G}$ with $|V| = n$ with global rule $F_n$ and  $t = n^{\mathcal{O}(1)}$ such that $f(y) = (F_n^t(x|_I))|_O$ for every $y \in \{0,1\}^r$, for some $x \in \{0,1\}^n$ and for some sets $I,O \subseteq V$ such that $|I|=r$ and $|O|=s$.
\end{definition}

\begin{remark}
	Note that, in the latter definition, the sets $I$ and $O$ are one-to-one related to the input and outputs of the graph. This means the simulation is very strong in the sense that inputs and outputs are represented by one node in the network.
\end{remark}


On the other hand, we note that the latter definition is intrinsically related to the computation complexity of some decision problems that have been studied in order to measure the complexity of the dynamics of automata networks. In particular, it is closely related to \textit{prediction problem}. Given an automata network $\mathcal{A} = (G,\mathcal{F})$, this problem is roughly defined by a given configuration $x \in Q^n$ and a node $v \in V$ for which we would want to know if the state of $v$ will change at some point in the orbit of $x$. More precisely, we ask if there exist $t$ such that $F^t(x)_v \not = x_v.$ Depending on $\mathcal{A}$, it can be shown that the complexity of this decision problem is closely related to the capability of $\mathcal{A}$ of simulating the evaluation of an arbitrary Boolean circuit.  This is because, depending on the rules defining $\mathcal{A}$ this problem can be verified or solved in a polynomial time, and thus the complexity bounds are deduced through a reduction to canonical problems such as \textsc{Circuit value problem} or \textsc{SAT}.


\section{Results}

In this section we present different results on the complexity of different totalistic rules in the sense of its spectrum. In particular, we focus in exhibiting for different classes of totalistic rules,  small automata networks, that we call gadgets, that can simulate logic gates. Then, we show how we can combine them in order to simulate arbitrary Boolean functions. In this regard, we present a classification based in the structure of the activation sets $\mathcal{I}_v$ of the different totalistic rules in $\mathcal{F}$. We start by studying the simple case where $\mathcal{I}_v = \{1\}$. This relates to disjunctive and conjunctive networks, and, in a more general way, rules which dynamics are defined by a some sort of matrix product. Then, we study the classic case in which $\mathcal{I}_v$ is given by some interval $[\theta_v,\Delta] \subseteq \{0,\hdots, \Delta\}.$ This class includes the well-known threshold networks which, as we show in this section,  have the capability of simulating any monotone Boolean network.  In particular, we explore the case in which isolated activation values are considered. Roughly we explore the case in which if $a \in \mathcal{I}_v$ then $a-2,a-1,a+1 \not \in \mathcal{I}_v$ and we find that this class is also capable of simulating arbitrary Boolean networks. Finally we study the intermediate case in which $\mathcal{I}_v = \{\alpha,\hdots, \beta\}$ with $\alpha \leq\beta \leq \Delta$. Particularly, we are interested in studying automata networks $\mathcal{A} = (G,\mathcal{F})$ in which for any $v \in V(G)$ its local function is such that $\mathcal{I}_v = [\alpha,\beta]$ with $\beta < \delta_v$, i.e. there exist a threshold that deactivates the local function for each node. We illustrate the dynamics of each subclass of threshold networks in Figure \ref{fig:threshold}.
\begin{figure}[!tbp]	
	\centering
	\begin{subfigure}[t]{3in}
		\centering
		\includegraphics[scale=0.15]{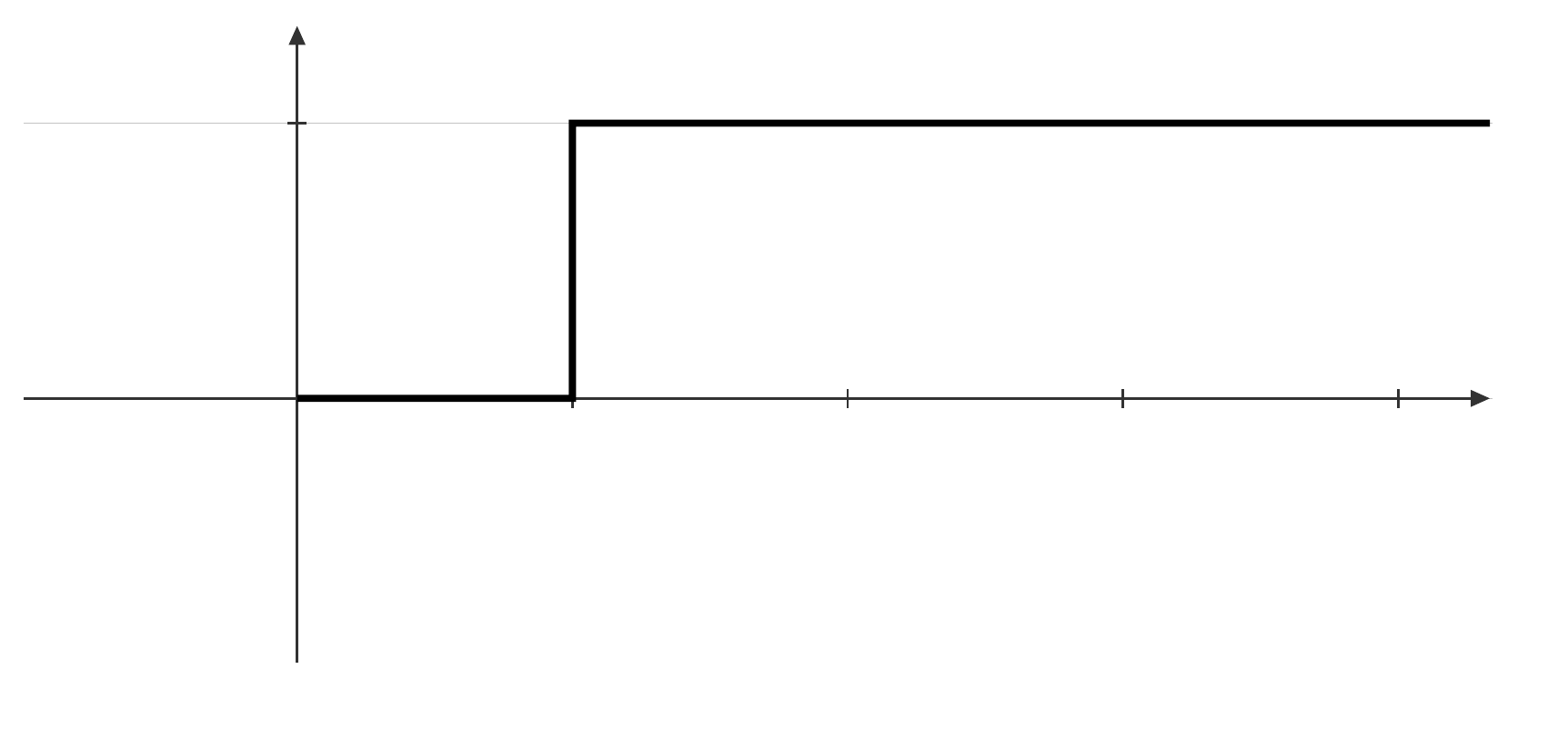}
		\caption{Threshold rules}\label{fig:1a}		
	\end{subfigure}
	\quad
	\begin{subfigure}[t]{3in}
		\centering
		\includegraphics[scale=0.15]{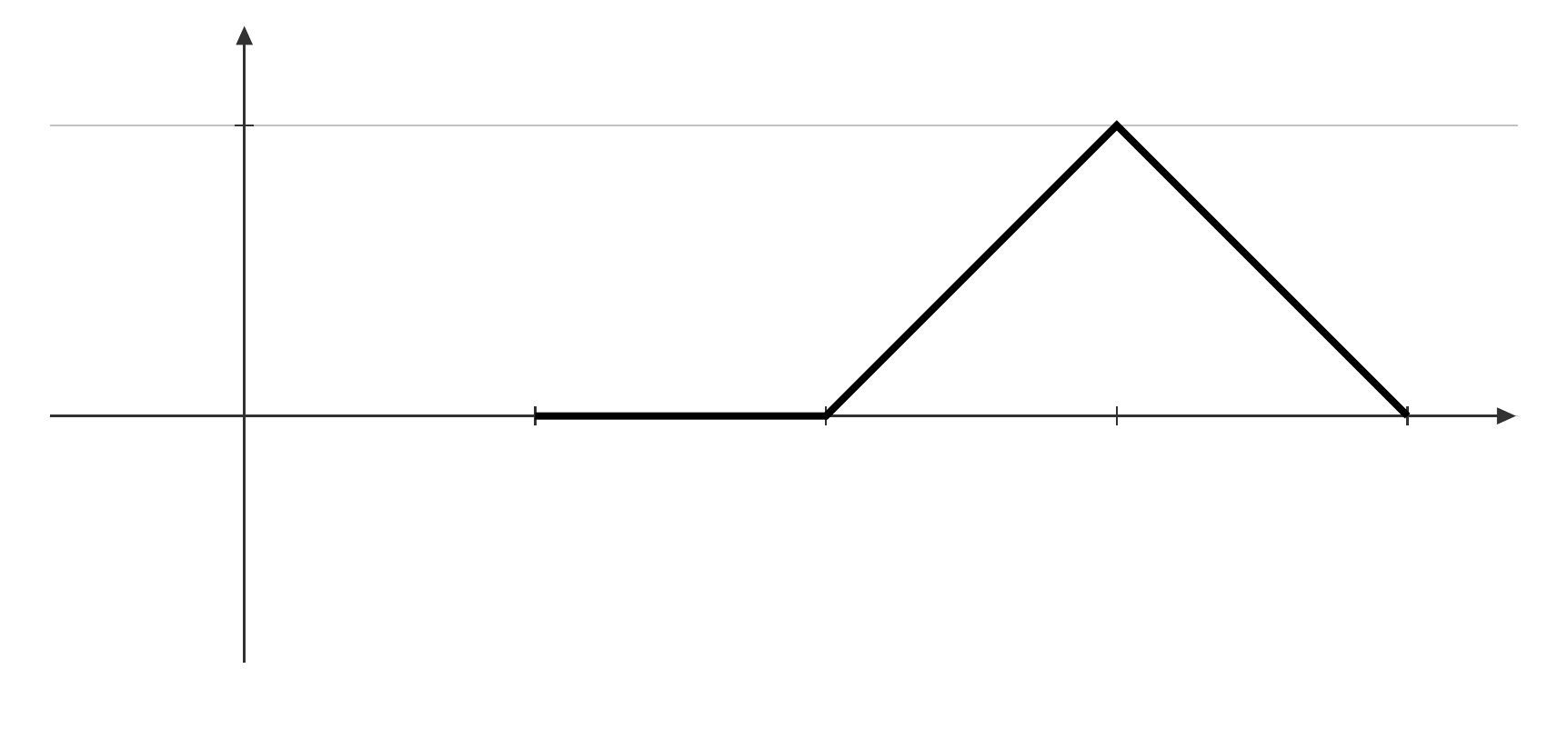}
		\caption{Isolated rules}\label{fig:1b}
	\end{subfigure}
	\begin{subfigure}[t]{3in}
	\centering
	\includegraphics[scale=0.15]{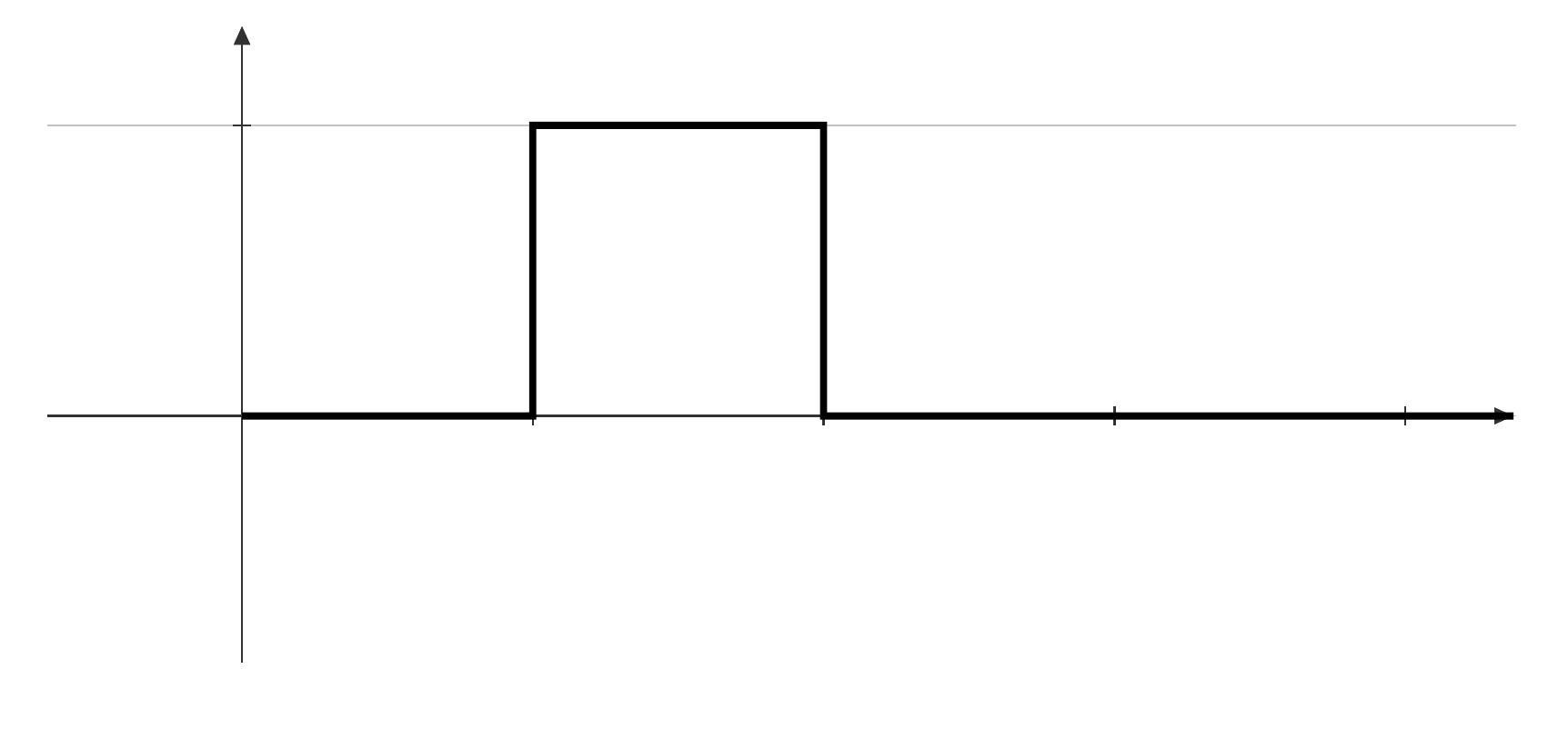}
	\caption{Interval rules}\label{fig:3b}
\end{subfigure}
	\caption{Different classes of threshold totalistic rules}\label{fig:threshold}
\end{figure}
We resume our main results in the following table:
\begin{table}[H]
	\resizebox{\textwidth}{!}{%
		\begin{tabular}{|l|l|}
			\hline
			\textbf{Class of Totalistic Rules} &\textbf{Simulation capabilities}                               \\ \hline
			Matrix-defined rules      & Constant functions and other matrix-defined functions. \\ \hline
			Threshold rules           & Arbitrary monotone Boolean functions.                  \\ \hline
			Isolated rules            & Arbitrary Boolean functions.                           \\ \hline
			Interval rules            & Arbitrary Boolean functions.                           \\ \hline
		\end{tabular}%
	}
	\caption{Classification of totalistic rules according to their simulation capabilities.}
	\label{tab:summary}
\end{table}
\subsection{Matrix-defined rules}
We start by studying canonical cases of totalistic rules such as disjunctive (conjunctive) networks. Let $Q =  \{0,1\}$ and let $\mathcal{G}$ be a family of graphs. We say that some totalistic rule $f:S_{\Delta}\to Q$ is disjunctive if it takes the value $1$ if and only if there exist at least one $1$ in its assignment, i.e.,  $\mathcal{I}_f = \{1\}$. We say that a set or a class of totalistic rules $\mathcal{F}$ is disjunctive if every $f \in \mathcal{F}$ is disjunctive. Analogously, we can define a conjunctive totalistic function $f$ over a graph $G$ in some node $v \in V(G)$ by defining the transition to $1$ only in the case in which every neighbour of $v$ is in state $1$, i.e  $\mathcal{I}_v= \{\delta_v\}$. Note that in this case we cannot define the rule independently of the interaction graph. Nevertheless, both rules are completely analogous as it suffices to change the role of $1$ and $0$ in order to change from disjunctive to conjunctive and vice versa. As a consequence of these, and in order to simplify following reasoning we focus on disjunctive rules but of course all of the next results are valid also for conjunctive rules.

\begin{lem}
	Let $\mathcal{G}$ be an arbitrary family of graphs and take some graph $G \in \mathcal{G}$. Let $\mathcal{D} = (G,\mathcal{F})$ be an automata network where $\mathcal{F}$ is disjunctive then the spectrum of $\mathcal{D}$ contains only constant gates (everything goes to $1$ or $0$) and disjunctive gates (OR gates). For any $t,l\geq 1$ we have $\mathbb{F}^{(t,l)}_{\mathcal{D}}  \subseteq \{0,1\} \cup \{\vee_J\}_{J \subseteq I} $ where $\vee :\{0,1\}^l \to \{0,1\}$ is such that $\vee (z_1,\hdots,z_l) = \bigvee \limits^l_{k=1} z_l$  for $J \subseteq I.$
\end{lem}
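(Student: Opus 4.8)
The plan is to reduce everything to the standard description of disjunctive dynamics in terms of walks in $G$. Since every local rule has activation set $\{1\}$, the global map satisfies $F(x)_v=\bigvee_{u\in N_v}x_u$ (with the empty disjunction read as $0$, which also accounts for isolated vertices and any vertex whose totalistic rule happens to be frozen). By a straightforward induction on $t$ one then obtains $F^t(x)_v=\bigvee_{u\in W_t(v)}x_u$, where $W_t(v)$ denotes the set of endpoints of walks of length exactly $t$ starting at $v$. This identity is the engine of the whole argument and it is elementary: the base case $t=1$ is the definition of a disjunctive rule, and the inductive step just uses that a walk of length $t{+}1$ from $v$ is one edge to a neighbour $u$ followed by a walk of length $t$ from $u$, so $W_{t+1}(v)=\bigcup_{u\in N_v}W_t(u)$.

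Next I would fix an arbitrary fixed point $\overline{x}\in\text{Fix}(\mathcal{D})$ and an I/O setting $(I,o,t)$ with $I=\{i_1,\dots,i_l\}$, and evaluate the realisation $g^{(I,o,t)}_{\overline{x}}$ on a perturbation $y$ of $\overline{x}$ by an input assignment $z\in\{0,1\}^l$. Splitting the disjunction $F^t(y)_o=\bigvee_{u\in W_t(o)}y_u$ into the vertices of $W_t(o)$ outside $I$, where $y_u=\overline{x}_u$, and the input vertices $i_k\in W_t(o)$, where $y_{i_k}=z_k$, yields $g^{(I,o,t)}_{\overline{x}}(z)=c\vee\bigvee_{k\in J}z_k$, where $c:=\bigvee_{u\in W_t(o)\setminus I}\overline{x}_u\in\{0,1\}$ is a constant independent of $z$ and $J:=\{k:i_k\in W_t(o)\}\subseteq\{1,\dots,l\}$ records which inputs are reachable from $o$ by a length-$t$ walk.

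Finally I would run the three-way case analysis. If $c=1$ the realisation is the constant $1$; if $c=0$ and $J=\emptyset$ it is the constant $0$; and if $c=0$ and $J\neq\emptyset$ it is exactly $\vee_J$, the OR of the inputs indexed by $J$ (a projection/dictator when $|J|=1$). Since these cases are exhaustive, every element of $\mathbb{F}^{(t,l)}_{\mathcal{D}}$ lies in $\{0,1\}\cup\{\vee_J\}_{J\subseteq I}$, which is the claim. I do not expect a genuine obstacle here; the only points requiring a little care are making the walk-set formalism cleanly absorb the degenerate situations (disconnected $G$, isolated or low-degree vertices with a frozen rule, and the fact that $I$ and $o$ are disjoint by definition so $o$ only ever contributes to the constant $c$), and noting that the inclusion is genuinely an inclusion rather than an equality — for instance constant $1$ need not be realisable if no fixed point of $\mathcal{D}$ other than $\vec 0$ exists, which is consistent with the statement as written.
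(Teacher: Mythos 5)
Your proof is correct, and it takes a noticeably cleaner route than the paper's. Both arguments rest on the same engine — the observation that iterating a disjunctive rule gives $F^t(x)_o=\bigvee_{u\in W_t(o)}x_u$, which the paper phrases via powers of the adjacency matrix and the power graph $G^t$ — but from there the treatments diverge. The paper first pins down $\mathrm{Fix}(\mathcal{D})=\{\vec 0,\vec 1\}$ (which requires $G$ connected) and then analyses the two fixed points separately, with a further subdivision of the $\vec 0$ case into time regimes $t\le d$, $d\le t\le D$, $t\ge D$ determined by shortest-path distances from $I$ to $o$; the $\vec 1$ case in particular is handled somewhat informally. You instead derive, for an \emph{arbitrary} fixed point $\overline{x}$ and arbitrary $t$, the single closed form $g^{(I,o,t)}_{\overline{x}}(z)=c\vee\bigvee_{k\in J}z_k$ with $c=\bigvee_{u\in W_t(o)\setminus I}\overline{x}_u$ and $J=\{k:i_k\in W_t(o)\}$, after which the three-way case analysis on $(c,J)$ is immediate. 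What your version buys is uniformity: it needs neither connectivity of $G$ nor an enumeration of the fixed points, it absorbs the degenerate cases (frozen low-degree vertices, disconnected graphs) into the walk-set formalism, and it makes explicit why only the subset $J$ of inputs reachable by a length-$t$ walk can appear — a point the paper's distance-based case split obscures. The one caution is that your induction must use walks of length \emph{exactly} $t$ (not at most $t$), which you state correctly; this matters, e.g., on bipartite graphs where $W_t(o)$ and $W_{t+1}(o)$ can be disjoint.
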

\begin{proof}
	Let $G \in \mathcal{G}$, $t,l\geq 1$ and $\mathcal{A} = (G,\mathcal{D}(G))$. Let $F$ be a global transition function of $G$. Fix an input $I \subset V.$ Note that $\text{Fix}(\mathcal{A}) = \{\vec{0},\vec{1}\}.$ Also note that there exists a matrix $A \in M_n(\{0,1\})$ such that $F^t(x) = A^t \vee x = \bigvee \limits_{i \in N_v(G^t) \cap I} x_i $ for all $t \geq 0$. In particular, $A$ is the adjacency matrix of $G$. Note that for every $i \in I$ there exists a path between $i$ and $o$ of length $t$ if and only if $(A^t)_{io} = 1$ and thus, the $t$-th power of $A$ define the power graph $G^t$. By definition we have that 
	\begin{equation}
	F^t(x)_o = (A^tx)_o = \bigvee \limits_{i \in N_v(G^t)} x_i = \left( \bigvee \limits_{i \in N_v(G^t) \cap I} x_i \right) \vee \left( \bigvee \limits_{i \in N_v(G^t) \cap V \setminus I} x_i \right).
	\label{eq:or}
	\end{equation} 
	Now, note that, if we start perturbing $\vec{1}$ then we have that $\mathbb{F}^{(G,t,l)}_{\vec{1}} \subseteq \{1,\vee\}$ as the only case in which we can do something different than $1$ is when $l = \delta_o$, $I = N_o$ and $t = 1$.
	
	On the other hand, as $G$ is connected, consider $P_1, \hdots, P_l$ as all the minimum length paths connecting each node in $I$ to $o$. Let $d_1,\hdots, d_l$ be the lengths of each of the path and let $d = \min \limits_{i \in \{1,\hdots,l\}} d_i$ and $D = \max \limits_{i \in \{1,\hdots,l\}} d_i.$ If we perturb $\vec{0}$, we have that, for $t\leq d$ where we have $g^{I,o,t} \equiv 0.$ For $t \geq d$ we can have that not all the nodes in $I$ are connected in $G^t$ with $o$ and thus by (\ref{eq:or}) we have that $g^{I,o,t} \equiv \vee_{J}$ for some $J \subseteq I$. Finally if $t \geq D$ then, we can have influence of  external nodes in $N_v(G^t) \cap V\setminus$.  The influence is given by an OR function. Therefore we have two possible cases: a) one external have state $1$ at time step $t$ and then  $ g^{I,o,t} \equiv 1$ or all stay in state $0$ and then there is no influence. In every case we conclude that $\mathbb{F}^{(t,l)}_{\vec{0}} \subseteq \{0,1,\vee\} \cup \{\vee_J\}_{J \subseteq I}$ and thus the lemma holds.

%

%
%
\label{lemma:OR}
\end{proof}

\begin{remark}
	Note that if $G= (V,E)$ is such that every totalistic function takes the value $1$ when the sum of the states of all neighbours of certain vertex is odd, then, we have the known XOR rule. More precisely, we have the XOR rule if for every $v \in V$ we have that $\mathcal{I}_v = \{a \in \{0,\hdots,\delta_v\}: a \text{ is odd} \}$. Also the global rule of that automata network in that case can be seen as a matrix product, i.e. $F^t(x) = A^t x$ where the product is the usual product in $\mathbb{F}_2.$ Thus, the previous result holds for XOR rules. 
\end{remark}
%
%
%
\subsection{Threshold networks}
In this section we introduce a class of  totalistic functions called threshold functions. Roughly, in this family we have that a function takes the value $1$ if the sum of the states of the neighbours of the corresponding vertex is in some interval $[\theta_v,\delta_v]$ where $\delta_v$ is the degree of the vertex $v$ that we are considering and $\theta_v$ is some positive threshold. We present this notion in the following definition:
\begin{definition}
	A totalistic function $f: S_{\Delta} \to \{0,1\}$ is a threshold if there exists some positive integer $\theta$ such that $ [\theta,\Delta] \subseteq \mathcal{I}_f $, where $I_f$ is the activation set of $f$.
\end{definition}
  We denote by $\mathcal{T}$ the class of all totalistic functions that are threshold, i.e. $f \in \mathcal{T}$ if and only if $f$ is threshold. We will show that there exist a class of graphs $\mathcal{G}$ for which $\mathcal{T}$ simulates any monotone Boolean function. 
\begin{lem}
	There are two automata networks $\mathcal{A}_2 = (G_1=(V_1,E_1),\mathcal{F}_1)$ and  $\mathcal{A}_2 = (G_2=(V_2,E_2),\mathcal{F}_2)$ with global rules $F_1$ and $F_2 $ respectively, such that $\mathcal{F}_i \in \mathcal{T}$ and that :
	\begin{enumerate}
		\item $\wedge(x,y) = F^2_1(z_1)_{o} = F^2_1(z_1)_{o'}.$
		\item  $\vee(x,y) = F^2_2(z_2)_{o} = F^2_2(z_2)_{o'},$
	\end{enumerate}
  for some $z_i \in \{0,1\}^{|V_i|},$ $i=1,2.$
  \label{lem:andorthreshold}
\end{lem}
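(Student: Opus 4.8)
The plan is to exhibit both automata networks explicitly as small graphs on five vertices and to verify the two identities by a direct two‑step computation of the global rule. In both cases I would take the underlying graph $G=(V,E)$ with $V=\{i_1,i_2,a,o,o'\}$ and $E=\{i_1a,\ i_2a,\ ao,\ ao'\}$, so that $a$ has degree $4$, every other vertex has degree $1$, and $\Delta(G)=4$. The input set is $I=\{i_1,i_2\}$ and the two outputs are $o$ and $o'$. Since every totalistic rule used below has activation set contained in $\{1,\dots,4\}$, the configuration $\vec 0$ is a fixed point of the network, hence a legitimate base configuration to perturb; for $(x,y)\in\{0,1\}^2$ let $z$ be the perturbation of $\vec 0$ assigning $x$ to $i_1$, $y$ to $i_2$ and $0$ elsewhere.

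For the conjunction network $\mathcal{A}_1$ I would assign to $a$ the threshold rule with $\theta=2$, i.e.\ $\mathcal{I}_a=\{2,3,4\}$, and to each $v\in\{i_1,i_2,o,o'\}$ the threshold rule with $\theta=1$, i.e.\ $\mathcal{I}_v=\{1,2,3,4\}$. At time $1$ the vertex $a$ sees the sum of the states of its neighbours $i_1,i_2,o,o'$, which equals $x+y$ because $o,o'$ are still $0$; hence $F_1(z)_a=f_a(x+y)$ equals $1$ exactly when $x+y\ge 2$, i.e.\ $F_1(z)_a=x\wedge y$. Meanwhile $o$ and $o'$ see only the (still zero) state of $a$, so $F_1(z)_o=F_1(z)_{o'}=0$. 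At time $2$ each of $o,o'$ sees exactly the state of $a$ at time $1$, so $F_1^2(z)_o=F_1^2(z)_{o'}=f_o(x\wedge y)=x\wedge y=\wedge(x,y)$, which is item (1). (The states of $i_1,i_2,a$ at time $2$ play no role.)

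For the disjunction network $\mathcal{A}_2$ I keep the same graph and the same rules for $i_1,i_2,o,o'$, but give $a$ the threshold rule with $\theta=1$, i.e.\ $\mathcal{I}_a=\{1,2,3,4\}$. Repeating the computation verbatim, $F_2(z)_a=f_a(x+y)$ is now $1$ exactly when $x+y\ge 1$, i.e.\ $F_2(z)_a=x\vee y$, and then $F_2^2(z)_o=F_2^2(z)_{o'}=f_o(x\vee y)=x\vee y=\vee(x,y)$, which is item (2).

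There is no deep obstacle here; the only points needing care are that perturbing the inputs cannot corrupt the value eventually read at the outputs, and that each local rule genuinely is a threshold rule. The first holds because $a$ records the input values already at time $0$, before the ``feedback'' edges $o\to a$ and $a\to i_j$ can transmit anything, and $o,o'$ read $a$ at time $1$; thus the backward influence never reaches an output within the two observed steps. The second is immediate, since each activation set is an interval $[\theta,4]$ with $\theta\in\{1,2\}$. Finally, it is worth remarking that this gadget delivers two identical copies $o,o'$ of the gate output at no extra cost, which is exactly the fan‑out primitive needed when these gadgets are later wired together to realise arbitrary monotone Boolean functions.
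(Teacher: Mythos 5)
Your proposal is correct and follows essentially the same route as the paper: an explicit five-vertex gadget whose all-zero configuration is a fixed point, with the central node's threshold set low ($\theta=1$) for \textsc{OR} and higher for \textsc{AND}, verified by a direct two-step computation reaching both outputs $o$ and $o'$ simultaneously. The only cosmetic difference is that you place the inputs and both outputs on a single degree-$4$ hub and use $\theta=2$ there (rather than $\theta=\delta_v$ as the paper's figure suggests), which is sound because $o,o'$ are provably still in state $0$ when the hub evaluates its sum.
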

\begin{proof}
	Consider the graph $G_1$ and $G_2$ given in Figure \ref{fig:andthrehsold} and Figure \ref{fig:orthrehsold}. Observe that $\overline{z}_1 = (0,0,0,0,0)$ and $\overline{z}_2 = (0,0,0,0,0)$ are fixed points for $F_1$ and $F_2$ respectively. Then, we can define $z_1$ and $z_2$ as a perturbation of these fixed points as it is shown in Figures  \ref{fig:andthrehsold} and \ref{fig:orthrehsold}. As we stated in the last section, it suffices to define $\theta_v \in \{1, \delta_v\}$ in order to define an AND or an OR function. Observe that this is exactly the threshold defined for each vertex in Figure \ref{fig:andthrehsold} and Figure \ref{fig:orthrehsold}. The result follows from the calculations in latter figures. 
\end{proof}

\begin{figure}[!tbp]
\centering
	\includegraphics[scale=0.5]{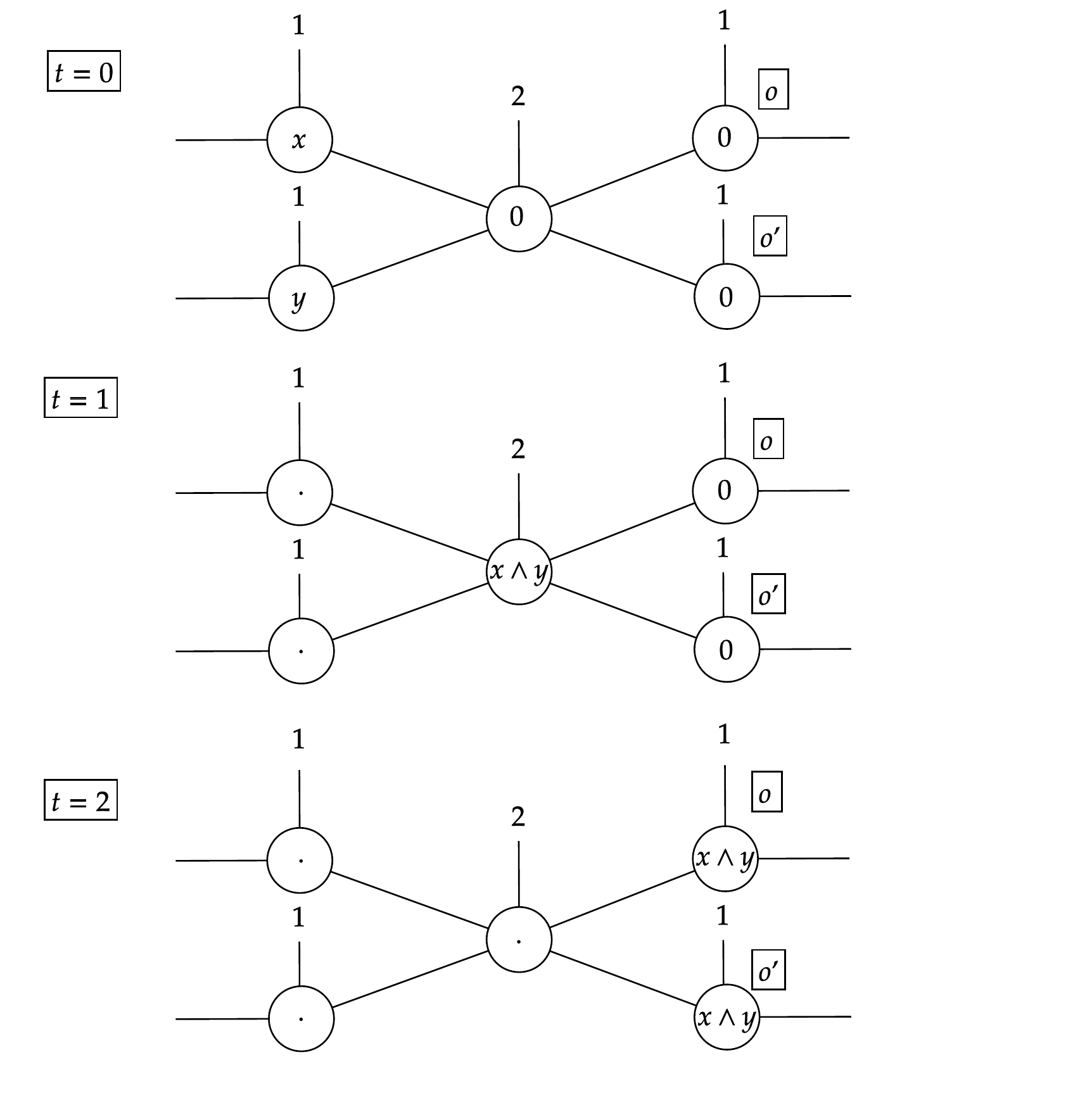}
		\caption{AND gadget for the class of threshold totalistic functions.}
		\label{fig:andthrehsold}
\end{figure}

\begin{figure}[!tbp]
\centering
	\includegraphics[scale=0.5]{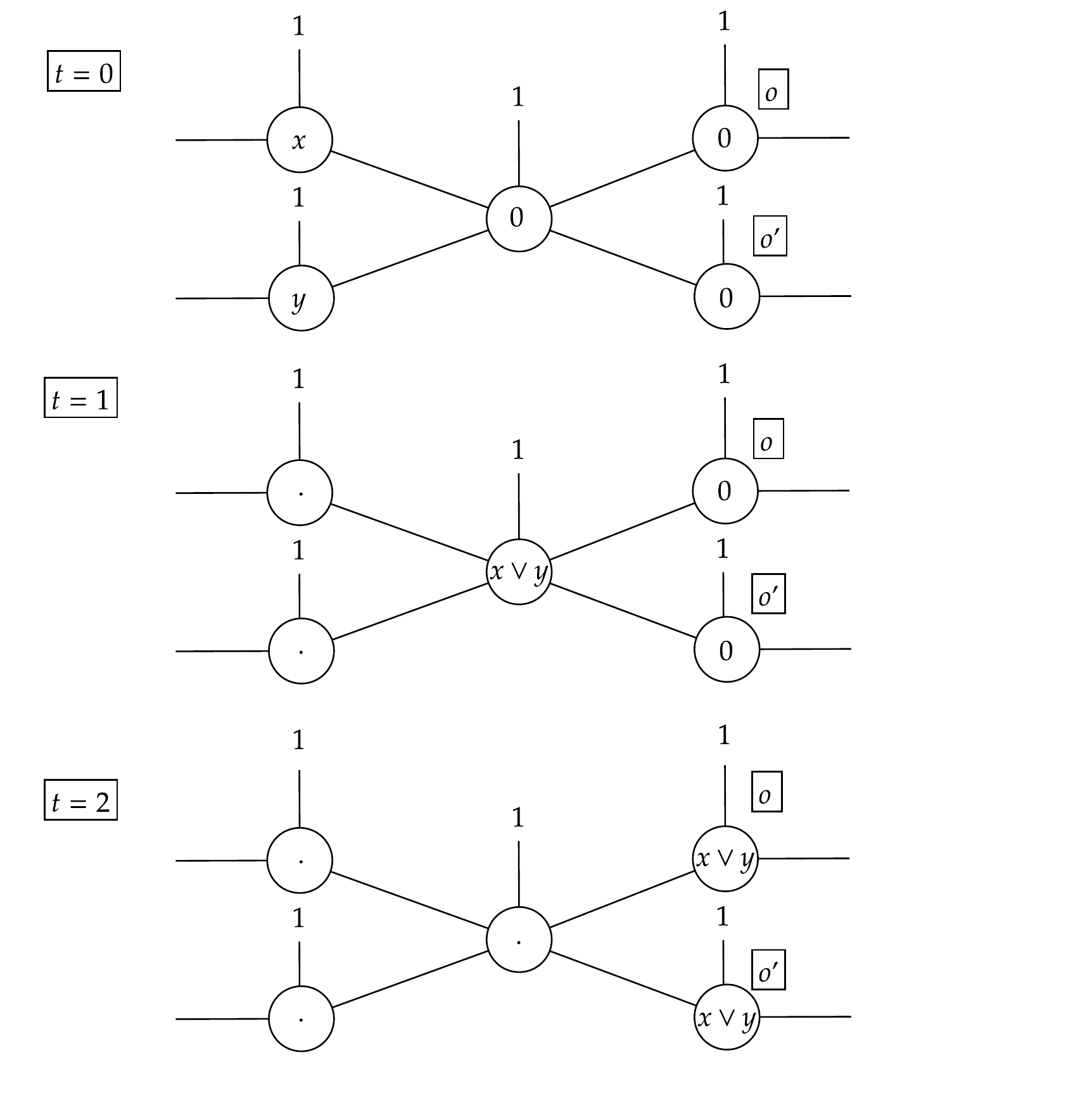}
	\caption{OR gadget for the class of threshold totalistic functions.}
	\label{fig:orthrehsold}
\end{figure}
 \begin{theo}
 	Let $r,s \in \mathbb{N}$ and  $f:\{0,1\}^r \to \{0,1\}^s$ be a monotone Boolean function. There exist a collection of graphs $\mathcal{G}$ such that $\mathcal{T}$ simulates $f$ in $\mathcal{G}$.
 	\label{teo:thershold}
 \end{theo}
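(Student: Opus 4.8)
The plan is to realize an arbitrary monotone Boolean function $f$ as a monotone Boolean circuit and then to ``wire together'' the $\textsc{AND}$ and $\textsc{OR}$ gadgets furnished by Lemma \ref{lem:andorthreshold}, plus a trivial fan-out/copy gadget, so that the resulting big graph, equipped with threshold local rules, simulates $f$ in the sense of the Definition. First I would invoke the standard fact that every monotone Boolean function $f:\{0,1\}^r\to\{0,1\}^s$ can be computed by a monotone Boolean circuit $C$ — one built solely from $\textsc{AND}$ and $\textsc{OR}$ gates of fan-in $2$, with fan-out realized by explicit duplication nodes — of size $m=r^{\mathcal{O}(1)}$ and depth $d=r^{\mathcal{O}(1)}$ (e.g. by taking the DNF of each output coordinate, which is monotone, and balancing the resulting $\textsc{OR}$/$\textsc{AND}$ trees). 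This reduces the problem to simulating each gate of $C$ by a threshold automata-network gadget and composing.

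Next I would set up the composition. For each gate $g$ of $C$ take the corresponding gadget $G_1$ (for $\wedge$) or $G_2$ (for $\vee$) from Lemma \ref{lem:andorthreshold}, noting the crucial feature that each gadget produces its output at \emph{two} distinguished nodes $o,o'$ after exactly $2$ steps starting from a perturbation of the all-zero fixed point; the duplicated output is precisely what lets us feed the value into two subsequent gates (fan-out $2$); for larger fan-out one composes copy gadgets, or one simply widens the output layer of the gadget with extra threshold-$1$ nodes each adjacent to the gate output, which copy its state in one more step. The big graph $G_n$ is then the disjoint union of all these gadgets, with the input node(s) of a gadget for gate $g$ identified with (a copy of) the output node of the gadget feeding $g$. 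One must check this identification keeps the maximum degree bounded by a fixed $\Delta$ (it does: each node in a gadget has bounded degree, and gluing an output node to an input node only adds a constant), so all the threshold rules remain well-defined over the class $\mathcal{G}$ of all graphs arising this way, and that the all-zero configuration is still a fixed point of the glued network (it is, since $\vec 0$ is a fixed point of each gadget and thresholds are positive).

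The main obstacle — and the step I would spend the most care on — is \textbf{timing synchronization}. In $C$ different gates sit at different depths, but in the automata-network simulation every gadget takes exactly two steps to ``fire''; if the input of a gadget arrives at different times along different paths, the gadget sees an inconsistent configuration and the totalistic sum is wrong. The fix is to layer the circuit by depth and insert \emph{delay gadgets} (e.g. a short path of threshold-$1$ nodes, each of which just copies its predecessor's state one step later, which is exactly the disjunctive/identity behaviour available inside $\mathcal{T}$) on every wire so that all wires entering a given layer have the same length; since $C$ has polynomial depth this adds only a polynomial number of nodes and keeps $n=r^{\mathcal{O}(1)}$. One then argues by induction on the layer index: if after time $2k$ (plus accumulated delay offsets) the nodes encoding the layer-$k$ gate outputs hold exactly the values these gates have in the evaluation of $C$ on input $y$, and all other non-input nodes are still $0$, then after two more steps the layer-$(k+1)$ gadgets output the correct values while the already-computed nodes, being at their gadget's stable output with no further active input, retain their values (here one uses that a threshold gadget, once it has output, stays put — which should be read off from the same Figures \ref{fig:andthrehsold}, \ref{fig:orthrehsold} used for Lemma \ref{lem:andorthreshold}, or enforced by freezing the output with an auxiliary self-sustaining clique of threshold nodes). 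Setting $t$ equal to twice the depth plus the total delay, which is $n^{\mathcal{O}(1)}$, and letting $I$ be the $r$ input nodes, $O$ the $s$ final output nodes, gives $f(y)=(F_n^t(x|_I))|_O$ with $x$ the all-zero fixed point, which is exactly the required simulation.
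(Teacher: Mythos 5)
Your proposal is correct and follows essentially the same route as the paper: decompose $f$ into a polynomial-size monotone circuit of fan-in/fan-out $2$, replace each gate by the AND/OR gadget of Lemma \ref{lem:andorthreshold} (using the duplicated outputs $o,o'$ for fan-out), identify gadget outputs with inputs of successor gadgets, and induct on layers with each gadget firing in exactly $2$ steps, giving $t = 2\,\mathrm{depth}(C_f)$. The only place you are more explicit than the paper is synchronization: the paper invokes the standard layered-circuit normal form (citing Greenlaw et al.) so that every wire crosses exactly one layer, whereas you achieve the same effect by padding wires with delay gadgets --- both are fine.
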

\begin{proof}
Fix $r,s \in \mathbb{N}$ and $f: \{0,1\}^r \to \{0,1\}^s$ an arbitrary function. It is well known that $f$ can be represented by a Boolean circuit $C_f:\{0,1\}^{r} \to \{0,1\}^{s}$. More precisely, for every variable assignment $y \in \{0,1\}^r$ the evaluation of the circuit computes $f(z)$. In addition, it suffices to consider bounded fanin and fanout circuits (more specifically we can always assume fanin and fanout $2$ for all gates, with the exception of input and output gates) and vertex set can always be considered as partitioned in layers (see  \cite[Section 6.2]{greenlaw1995limits}). Each layer is defined by the length of longest path connecting a gate to an input gate. We are going to show that there exist some $t \in \mathbb{N}$, a graph $G=(V,E)$ and a set of threshold rules $\mathcal{F} = \{f_v:\{0,1\}^{N(v)} \to \{0,1\}\}$ defining an automata network $\mathcal{A}_f$ such that its associated global rule $F$ is such that $f(y) = F(x|_I)|_O$ for some sets nodes $I,O \subseteq V$ and some $x$ depending on $y$. Let $D_f$ the digraph defining circuit $C_f$. Without loss of generality, we can assume that $C_f$ is monotone, i.e., any gate computes only an AND or an OR gate. In other words, any node $v \in V(D_f)$ is labelled by a symbol $l(d) \in \{\wedge, \vee\}$ which represents the corresponding gate in the circuit. We define $G$ in the following way: for each $v \in V(D_f)$ that is not an input we assign one of the gadgets $\varphi(v)$ in Figure \ref{fig:andthrehsold} or  Figure \ref{fig:orthrehsold} according to $l(v)$. For input gates we consider input nodes of gadgets representing gates in the first layer. Note that in order to represent output gates it is sufficient to consider output nodes in some gadget given by Figure \ref{fig:andthrehsold} or  Figure \ref{fig:orthrehsold}. In addition, we can assume that $\delta^{+}_v = \delta^{-}_v = 2$. Note also that $\varphi(v)$ has two possible outputs $o$ and $o'$ in Figure \ref{fig:andthrehsold}. We define edges  in $G$ locally by the connections in each gadget $\varphi(v)$ for each $v \in V(D_f)$ and also we identify the output of gadget $\varphi(v)$ with one of the inputs of gadget $\varphi(v')$ if $v' \in N^{-}(v)$. Note that $|V(G)| \leq \sum \limits_{v \in V(D_f)} |\varphi(v)| = 5 |V(D_f)| = r^{\mathcal{O}(1)}.$ From previous lemma we now that $\varphi(v)$ computes $\wedge(x,y)$ or $\vee(x,y)$ where $x,y$ are its inputs. We know also that it is done in a uniform time $t=2$ and there is also two possible choices for the outputs $o$ and $o'$ which receive the signal carrying the result of the computation  at the same time. We define now an automata network $(G,\mathcal{F})$ where $\mathcal{F}$ contains all the rules defined for each gadget $\varphi(v)$ for each $v \in V(D_f)$. We define sets $I$ and $O$ as the nodes in $G$ corresponding to input gates in $D_f$ and the output nodes of gadgets representing output gates in $D_f$.  Now, we locally set every gadget $\varphi(v)$ to its fixed point configuration and we call it $x$. We assign $z =(z_1,z_2,\hdots, z_r)$ to each of the inputs of corresponding input gadgets. We claim that at time $t = 2 \text{deph}(D_f) = r^{\mathcal{O}(n)}$ the global function satisfies $f(z)|_I = F^t(x)|_{O}.$ In fact, it is not difficult to see that inductively, in $t_1 = 2$ all the gadgets in the first layer compute the assignment $(z_1,z_2,\hdots, z_r)$ and each of the outputs that are associated inputs in the first layer have now this information as a perturbation of their fixed point configuration $x$. Now assume that in some time $t = 2k$, gadgets in the $k$-th layer are computing the information received from layer $k-1.$ Again, because of Lemma \ref{lem:andorthreshold} we know that each gadget $\varphi(v)$ produces consistently an AND or an OR computation of its inputs in uniform time $t = 2$ and thus, $k+1$-th layer computes information of $k$-th layer in time $2k+2 = 2(k+1)$. Then, the claim holds. As a consequence of the claim we have that $(G,\mathcal{F})$ simulates $f$. The result holds.
\end{proof}

\subsubsection{Majority rules}
 
 An important example is the case in which each local rule will change to $1$ when the majority of the nodes in the neighbourhood of its associated node $v$ is in state $1$. More precisely, when $\theta_v = \frac{\delta_v}{2}.$ When this happens, we say that local rule $f_v$  is a majority rule. Of course this depend on the graph. Now we will show that we can simulate any monotone Boolean network by using only majority rules. Analogously to the previous result, we show first that we can find AND and OR functions as a part of some automata networks defined by majority rules.
  \begin{lem}
 	There exist two automata networks $\mathcal{A}_2 = (G_1=(V_1,E_1),\mathcal{F}_1)$ and  $\mathcal{A}_2 = (G_2=(V_2,E_2),\mathcal{F}_2)$ with global rules $F_1$ and $F_2 $ respectively, such that $\mathcal{F}_i$ are majority rules and

 \begin{enumerate}
 		\item $\wedge(x,y) = F^2_1(z_1)_{o} = F^2_1(z_1)_{o'}.$
 		\item  $\vee(x,y) = F^2_2(z_2)_{o} = F^2_2(z_2)_{o'},$
 	\end{enumerate}	for some $z_i \in \{0,1\}^{|V_i|},$ $i=1,2.$
 	\label{lem:andormaj}
 \end{lem}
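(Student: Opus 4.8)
The plan is to recycle the \textsc{AND} and \textsc{OR} gadgets of Lemma~\ref{lem:andorthreshold} (Figures~\ref{fig:andthrehsold} and~\ref{fig:orthrehsold}) by means of one structural observation that converts an arbitrary threshold into a majority. If a vertex $v$ has degree $\delta_v$ and threshold $\theta_v$, and we attach to it $\max(0,2\theta_v-\delta_v)$ auxiliary neighbours permanently frozen at $0$ together with $\max(0,\delta_v-2\theta_v)$ auxiliary neighbours permanently frozen at $1$, then the enlarged vertex has degree $\delta_v'=\delta_v+|2\theta_v-\delta_v|$, and its majority rule (fire when at least $\delta_v'/2$ neighbours are active) fires exactly when the number of active \emph{non-auxiliary} neighbours reaches $\theta_v$. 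In particular, attaching frozen neighbours is dynamically transparent: a padded vertex transforms the original part of its neighbourhood exactly as before. So I would first note that a threshold vertex is already a majority vertex precisely when $\theta_v=\delta_v/2$, leave every such vertex of the two gadgets untouched, and pad the remaining ones (in the gadgets of Lemma~\ref{lem:andorthreshold} these are only the conjunctive/disjunctive central vertices, and the $\theta_v\in\{1,\delta_v\}$ thresholds need at most a constant amount of padding). This yields graphs $G_1$ and $G_2$ on which every local rule is the majority rule.

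The only genuine difficulty is making the frozen auxiliary vertices immovable under the two-step dynamics in which they are embedded: unlike a free threshold, a majority rule is tied to the degree, so an auxiliary vertex left with too few same-coloured neighbours would flip. I would handle this by realising each group of auxiliaries inside a small constant cluster attached only to the vertex it pads: the frozen-$1$ vertices are placed in a clique of ones (each then always sees at least half of its neighbours active and so stays $1$), and each frozen-$0$ vertex is given degree at least $3$ by joining it to further frozen-$0$ vertices, so that at most one of its neighbours — the vertex it pads — can ever be active and it therefore stays $0$. With this choice one checks directly that $\overline{z}_1=\vec{0}$, and the configuration $\overline{z}_2$ that is $1$ on the frozen-$1$ auxiliaries and $0$ everywhere else, are fixed points of $F_1$ and $F_2$: at those configurations every padded conjunctive/disjunctive vertex sees fewer than half of its neighbours active. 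We then let $z_1$ and $z_2$ be the perturbations of these fixed points installing $(x,y)$ on the two input vertices.

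It remains to reproduce the timing analysis of the proof of Theorem~\ref{teo:thershold} verbatim. Since the frozen auxiliaries keep their value for all time and padding does not alter how a vertex acts on its original neighbours, each gadget evolves, step by step, exactly like its threshold counterpart in Lemma~\ref{lem:andorthreshold}; hence at time $1$ the central vertex of $G_1$ equals $\wedge(x,y)$ and the central vertex of $G_2$ equals $\vee(x,y)$, and at time $2$ both output vertices carry that value. Therefore $F^2_1(z_1)_{o}=F^2_1(z_1)_{o'}=\wedge(x,y)$ and $F^2_2(z_2)_{o}=F^2_2(z_2)_{o'}=\vee(x,y)$, which is precisely the statement.

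The main obstacle is thus not any computation — those are the ones already carried out in Lemma~\ref{lem:andorthreshold} — but the self-consistency of the padding: one must verify that for every input assignment and at each of the two relevant time steps each auxiliary vertex still sees at least (respectively fewer than) half of its neighbours active, so that it genuinely behaves as a frozen $1$ (respectively $0$); this is the single place where I would write the local updates out in full. A minor bookkeeping point to flag is that padding inflates degrees, so the family $\mathcal{G}$ used here has a larger maximum degree $\Delta$ than the one in Lemma~\ref{lem:andorthreshold}; this is harmless, since the definition of simulation only requires $\mathcal{G}$ to have bounded degree.
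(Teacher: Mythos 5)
Your proposal is correct in substance, but it takes a genuinely different route from the paper. The paper's proof is a bare existence-by-exhibition argument: it displays two explicit five-vertex gadgets (Figures~\ref{fig:andmaj} and~\ref{fig:ormaj}), declares $\overline{z}_1=\overline{z}_2=\vec{0}$, and delegates the verification to the step-by-step calculations drawn in the figures, exactly as it did for Lemma~\ref{lem:andorthreshold}. You instead give a general reduction: any threshold vertex with degree $\delta_v$ and threshold $\theta_v$ becomes a majority vertex after attaching $|2\theta_v-\delta_v|$ frozen auxiliaries (frozen at $0$ when $\theta_v>\delta_v/2$, at $1$ when $\theta_v<\delta_v/2$), and the padded degrees $2\theta_v$ and $2(\delta_v-\theta_v)$ are conveniently even, so no rounding of $\delta_v'/2$ intervenes. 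Your arithmetic checks out, and your stabilisation of the auxiliaries (cliques of ones; degree-$\geq 3$ clusters of zeros with at most one potentially active neighbour) is sound, with the one caveat you yourself flag: when only a single frozen-$1$ neighbour is needed, a degree-one auxiliary would simply copy the vertex it pads, so the clique must be enlarged beyond the number of attachment points — this is exactly the self-consistency check you defer, and it does need to be written out. What your approach buys is generality (it converts \emph{any} threshold gadget, indeed any threshold network, into a majority one, so Lemma~\ref{lem:andorthreshold} immediately implies Lemma~\ref{lem:andormaj} and the subsequent majority simulation theorem needs no new figures) at the cost of larger gadgets and a dependency on the unpictured internal structure of the threshold gadgets (e.g.\ your claim that $\overline{z}_1$ can stay $\vec{0}$ presumes the AND gadget needs no frozen-$1$ padding). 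The paper's approach buys minimality and self-containedness. Both are valid; yours would arguably make the paper's later remark that the majority theorem is ``completely analogous'' to Theorem~\ref{teo:thershold} into an actual corollary.
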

\begin{proof}
	Consider the graph $G_1$ and $G_2$ given in Figure \ref{fig:andmaj} and Figure \ref{fig:ormaj}. We define $\overline{z}_1 = (0,0,0,0,0)$ and $\overline{z}_2 = (0,0,0,0,0)$. The result follows from the calculations in latter figures. 
\end{proof}

\begin{figure}[!tbp]
\centering
	\includegraphics[scale=0.5]{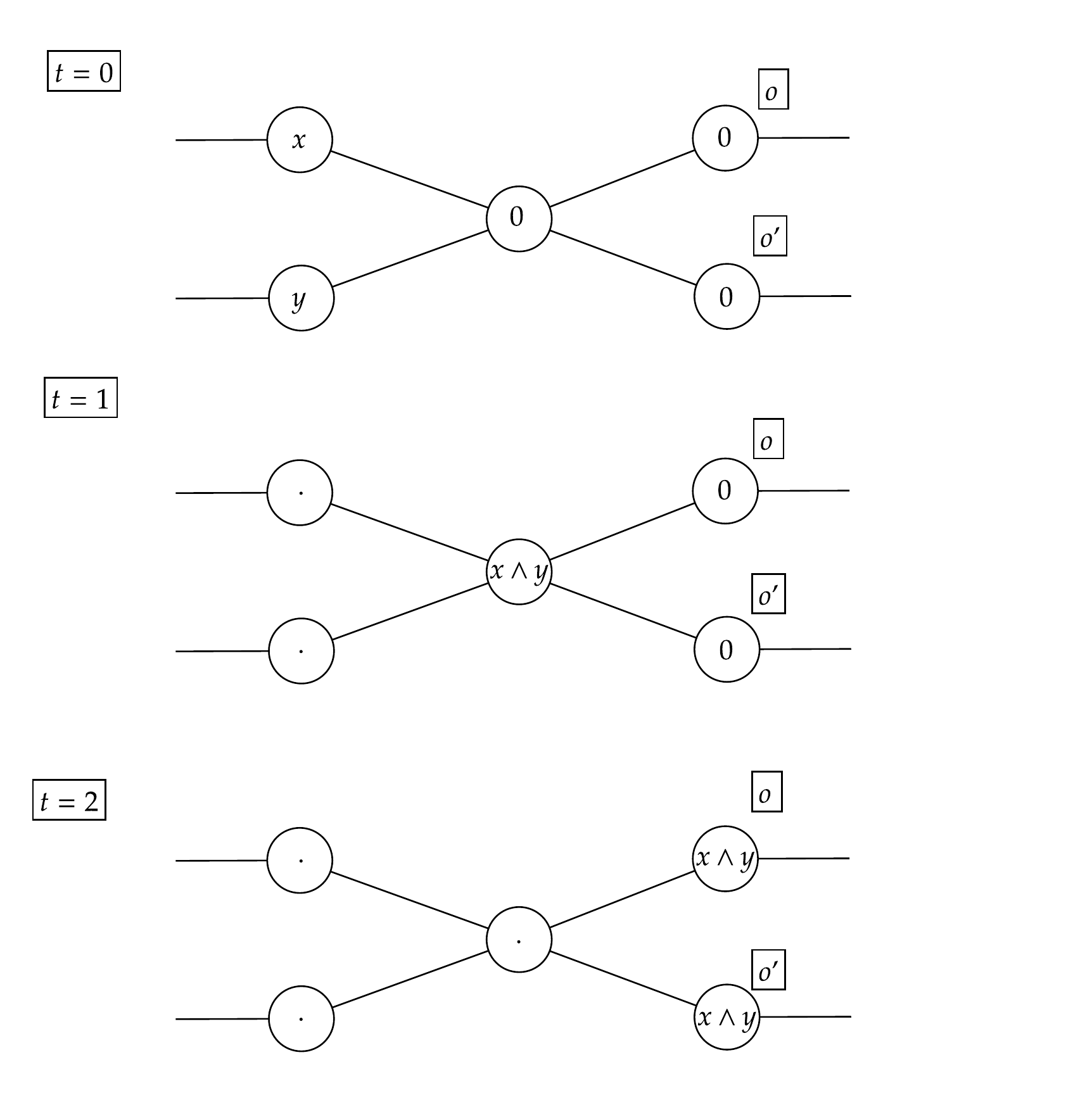}
	\caption{AND gadget for the class of majority totalistic functions.}
	\label{fig:andmaj}
\end{figure}
\begin{figure}
\centering
	\includegraphics[scale=0.5]{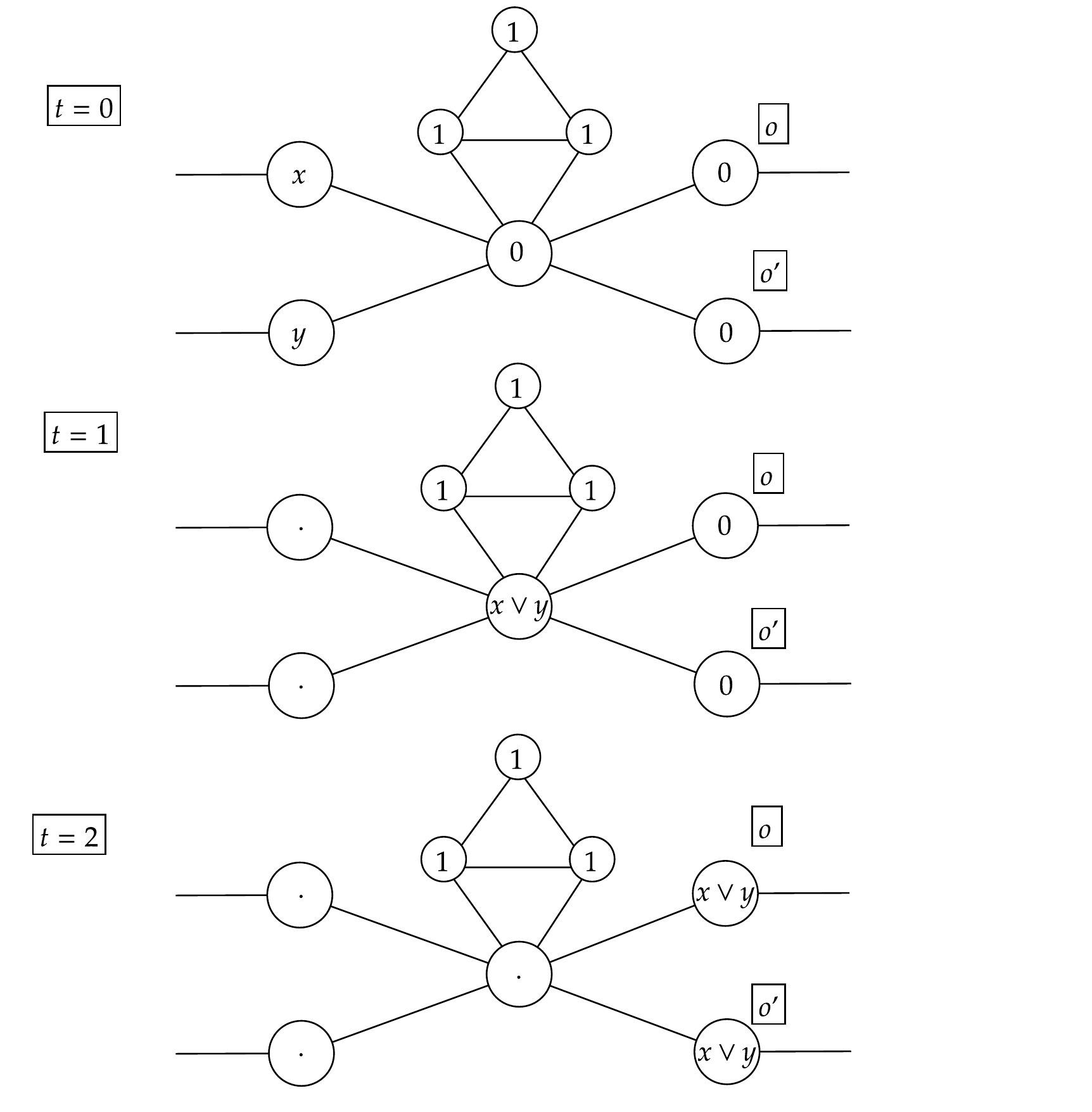}
	\caption{OR gadget for the class of majority totalistic functions.}
	\label{fig:ormaj}
\end{figure}
\begin{theo}

Let $r,s \in \mathbb{N}$ and  $f:\{0,1\}^r \to \{0,1\}^s$ be a monotone Boolean function. There exist an automata network $\mathcal{A}_f = (G,\mathcal{F})$ with global rule $F$ such that $\mathcal{F}$ are majority rules such that there exist $t= r^{\mathcal{O}(1)}$ satisfying $f(y) = F(x|_I)|_O$ for all $y \in \{0,1\}^r$, some sets $I,0 \subseteq V$ and some $x$ depending on $y$.
\end{theo}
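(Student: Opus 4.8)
The plan is to mirror the proof of Theorem~\ref{teo:thershold} almost line for line, feeding it the majority gadgets of Lemma~\ref{lem:andormaj} instead of the general threshold gadgets of Lemma~\ref{lem:andorthreshold}. First I would fix a monotone Boolean circuit $C_f$ computing $f$, with fanin and fanout at most $2$ at every internal gate and vertices partitioned into layers by longest distance to an input (as in \cite[Section 6.2]{greenlaw1995limits}); write $d=\operatorname{depth}(C_f)=r^{\mathcal{O}(1)}$. For each internal gate $v$ I substitute a copy $\varphi(v)$ of the $\wedge$-gadget of Figure~\ref{fig:andmaj} or the $\vee$-gadget of Figure~\ref{fig:ormaj} according to the label of $v$. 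Fanout $2$ is realized exactly as in Theorem~\ref{teo:thershold}: each gadget exposes two synchronized outputs $o,o'$ that simultaneously carry the computed value, and I route $o$ and $o'$ each into one input of a downstream gadget; the $r$ input variables are identified with the input nodes of the first-layer gadgets, and the $s$ output variables with output nodes of last-layer gadgets. This produces a graph $G$ with $|V(G)|=\mathcal{O}(|V(C_f)|)=r^{\mathcal{O}(1)}$, and a candidate fixed point $x$ in which every gadget sits at its prescribed all-zero configuration.

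The single genuinely new point, and the one I expect to be the main obstacle, is that for a majority rule the threshold $\theta_v=\delta_v/2$ is rigidly determined by the degree, so the act of gluing gadgets together — which raises the degree of every interface node — changes its local rule and can destroy the AND/OR behaviour certified by Lemma~\ref{lem:andormaj}. I would resolve this by degree normalization: pick an even integer $\Delta$ larger than every degree occurring in $G$, and pad each node $u$ up to degree exactly $\Delta$ by adjoining some number of constant-$1$ and constant-$0$ auxiliary neighbours, the counts chosen so that ``a majority of the $\Delta$ padded neighbours is active'' coincides with ``the activation condition $u$ had inside its gadget''. Since every gadget node needs only to reach a threshold at most its own degree, such a split into a constant-$1$ part (to lower the effective threshold) and a constant-$0$ part (pure padding) always exists once $\Delta$ is large enough and even. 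The constants themselves are supplied by majority rules: a triangle frozen at $(0,0,0)$, respectively $(1,1,1)$, is a fixed point of the majority rule and stays a fixed point after a bounded number of extra edges are attached to its vertices (each extra edge raises a vertex's degree, hence its threshold, by roughly one half, so the all-$1$ triangle stays above threshold and the all-$0$ triangle below it); using a fresh supply of such triangles keeps the total number of pad nodes $r^{\mathcal{O}(1)}$, so $|V(G)|$ remains $r^{\mathcal{O}(1)}$. After padding, every local rule of $G$ is literally a majority rule, the source triangles are fixed points, and each $\varphi(v)$ again computes $\wedge$ or $\vee$ of its two inputs in uniform time $2$ with the result appearing simultaneously at $o$ and $o'$.

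With these invariants in hand the conclusion follows from the same induction on layers as in Theorem~\ref{teo:thershold}. Starting from the global fixed point $x$ (all gadgets zero, all source triangles frozen), perturb the $r$ input nodes by $y$; the source triangles, being fixed points, feed their constants at all times, so after $2$ steps the first-layer gadgets hold $(z_1,\dots,z_r)$, and if the layer-$k$ gadgets hold the correct partial evaluation at time $2k$ then, by Lemma~\ref{lem:andormaj}, the layer-$(k+1)$ gadgets hold theirs at time $2k+2$. Hence at $t=2d=r^{\mathcal{O}(1)}$ we get $f(y)=F^{t}(x|_I)|_O$ with $|I|=r$ and $|O|=s$, which is the assertion. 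The delicate step is precisely the degree–threshold coupling: exhibiting one uniform even $\Delta$ that simultaneously corrects every gadget node, verifying the constant-$1$/constant-$0$ split arithmetic for all of them at once, and checking that the frozen triangles genuinely remain frozen under the extra edges they are forced to carry.
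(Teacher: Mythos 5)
Your proposal follows the same route as the paper, which proves this theorem in a single line by declaring it ``completely analogous'' to Theorem~\ref{teo:thershold}, i.e., by wiring the majority AND/OR gadgets of Lemma~\ref{lem:andormaj} into a layered, bounded-fanin/fanout monotone circuit and inducting on layers. The degree-normalization step you add does not appear in the paper but is a genuine and necessary repair --- for majority rules the threshold $\theta_v=\delta_v/2$ is tied to the degree, so identifying outputs of one gadget with inputs of another silently changes the local rules --- and your padding arithmetic (choose even $\Delta\geq 2\delta$ and attach $\Delta/2-\theta$ constant-$1$ and $\Delta/2-\delta+\theta$ constant-$0$ auxiliary neighbours) is sound, the only loose end being the one you already flag: each constant-$0$ source vertex must carry few enough edges into the gadget that it cannot be flipped when those gadget nodes activate, which your convention of drawing from a fresh supply of triangles with a bounded number of attachments per vertex handles while keeping $|V(G)|=r^{\mathcal{O}(1)}$.
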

\begin{proof}
	The proof of this result is completely analogous to Theorem \ref{teo:thershold}. 
\end{proof}
\begin{remark}
 It is also possible to simulate the evaluation of an arbitrary Boolean circuit by a monotone circuit. The construction duplicates the gates and uses De Morgan's laws in order to simulate the evaluation of NOT gates. Roughly, for each gate $v$ we work with a two duplicates $v_+$ and $v_{-}$  such that $v_+$ is true if and only if $v$ is true and $v_-$ is true if and only if $v$ is false. By duplicating in this way every gate in the original circuit we use $v_{-}$ any time we need to evaluate a NOT gate.  However, we mention this only as a remark because we consider it goes quite far away from our definition of simulation (though, one could adapt things to make it work).
\end{remark}
%
%
%

%
\subsection{Isolated totalistic rules}

 Now we introduce another class of totalistic rules that we  call \textit{isolated}. In general, the class of isolated totalistic rules are rules that are activated by a precise level of activation in the neighbourhood of  a given node. In fact, these rules will be activated if and only if the amount of active neighbours is exactly some value $\alpha$ and will be $0$ for any other value in sufficiently large enough interval containing $\alpha$. We detail this as following.
\begin{definition}
		A totalistic rule $f:S_{\Delta} \to \{0,1\}$ is isolated if there is a positive integer $\alpha\geq3$ such that $[\alpha-2,\alpha+1] = \{\alpha-2,\alpha-1, \hdots, \alpha+1\} \cap \mathcal{I}_f = \{\alpha\}.$
\end{definition}
For example the rule $3$ is isolated because configurations that have an amount of $1$s in the interval $[1,4]$ will only produce $1$ as image if they have exactly $3$ ones. Note that, for example, any other totalistic rule of the form $3a$ with $a\geq 5$ will be isolated with $\alpha = 3$. In the next section, we will call the value $\alpha$ an isolated value for some fixed rule. For example $3$ is an isolated value for rule $35$ and so for rule $3$. Note also that one fixed isolated rule can have multiple isolated values. For example, rule $36$ has $3$ as isolated value and also $6$. 

\begin{remark}
	Note that in the latter definition, taking $\alpha -2$ as a non active value for the rules is necessary in order to avoid considering matrix-defined functions that we have already studied. In fact, if we assume $\alpha=3$ and we allow $1$ to be active, then rule $135$ is the XOR rule for a neighbourhood with $6$ nodes.
\end{remark}
%
\begin{lem}
  For each $\alpha \geq 3$ there is an automata network $\mathcal{A}_\alpha= (G_\alpha,\mathcal{F}_\alpha)$ such that every  $f \in \mathcal{F}_\alpha$ is a  totalistic function with isolated value $\alpha$ and such that its global rule $F_{\alpha}$ satisfies that:  $\exists i_1, i_2 ,o_1,o_2\in V(G): F_{\alpha}^3(x)_{o_j}  = \textbf{NAND}(x|_{i_1},x|_{i_2}), j=1,2$ for any $x$ that is a perturbation in $i_0$ and $i_1$  of some $z \in \text{Fix}(\mathcal{A}_\alpha)$, i.e. $x_v = z_v$ for all $v \not \in \{i_1,i_2\}.$ In particular, NAND gate is in the spectrum of $\mathcal{A}_{\alpha}$ for $t=3$ and $l = 2$.
  \label{lemma:NAND}
\end{lem}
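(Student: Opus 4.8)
The strategy is to build, for each fixed $\alpha \ge 3$, an explicit small gadget graph $G_\alpha$ whose local rules are all totalistic with isolated value $\alpha$ and which, by its dynamics, computes a NAND of two input nodes at a designated output after exactly $3$ time steps. The key design idea is that the rule ``fires iff exactly $\alpha$ neighbours are active'' lets us engineer a node that is active \emph{by default} (in the reference fixed point) precisely because it has a constant supply of $\alpha$ active ``padding'' neighbours, and that gets \emph{switched off} when the two genuine inputs push the active count above $\alpha$ (to $\alpha+1$ or $\alpha+2$), i.e.\ out of the isolated window $[\alpha-2,\alpha+1]$. This is the mechanism that produces the negation inherent to NAND: both inputs $1$ $\Rightarrow$ output $0$, otherwise output $1$. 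Concretely I would attach to the output node a cluster of $\alpha$ permanently-active ``source'' vertices (each being a node whose own rule keeps it at $1$ on the reference configuration — e.g.\ a small clique or a node wired to have exactly $\alpha$ active neighbours), plus the two input lines $i_1,i_2$; then $|N_o|=\alpha+2$, the count is $\alpha$ when at most ... wait — one must be careful: with the padding at $\alpha$ the count is already $\alpha$ with \emph{zero} inputs, so to get the NAND truth table one instead wants padding $=\alpha-2$ together with a one-step ``doubling'' or relay stage so that a single active input contributes $0$ to the relevant sum while two active inputs contribute $2$; alternatively use padding $=\alpha$ and read NAND off the complement, which is fine since we only need the output node's rule, not its count, to realize NAND. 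I would fix the cleanest of these once the figure is drawn, and verify the $3$-step propagation (time $1$: inputs perturb their relay neighbours; time $2$: relays settle; time $3$: output node evaluates) exactly as in the threshold gadgets of Lemma \ref{lem:andorthreshold}.

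The steps, in order, would be: (1) construct the constant-$1$ ``battery'' sub-gadget — a set of vertices that, on a chosen fixed point $z$, stay at state $1$ forever and feed a controlled number of active signals into the rest of the gadget, checking each such vertex's own totalistic count lies in its activation set; (2) lay out two input paths of equal length $3$ from $i_1,i_2$ to the output locus $o_1$, inserting relay/combiner nodes so that the number of active neighbours arriving at $o$ at time $3$ is an affine function of $x_{i_1}+x_{i_2}$ taking the three values $\{c, c+1, c+2\}$ for the three input sums $\{0,1,2\}$; (3) choose the local rules (all with isolated value $\alpha$, i.e.\ activation set a subset of $S_\Delta$ meeting $[\alpha-2,\alpha+1]$ exactly in $\{\alpha\}$) so that $o$ fires iff the arriving count is in $\{c,c+1\}$ but not $c+2$, which forces $c,c+1 \notin \{\text{the count }\alpha\}$ impossible — so instead one tunes the battery so the count is $\alpha$ exactly when at least one input is $0$; (4) duplicate the output node as $o_2$ sharing the same neighbourhood so $F^3(x)_{o_1}=F^3(x)_{o_2}$; (5) verify $z$ is a fixed point and that for every $x$ that perturbs $z$ only at $i_1,i_2$ the orbit reaches the NAND value at $t=3$, by a short case check over the four input assignments exactly as the cited figures do for threshold AND/OR.

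The main obstacle, and the part deserving the most care, is arranging the arithmetic so that the \emph{same} isolated window $[\alpha-2,\alpha+1]$ with a \emph{single} active value $\alpha$ simultaneously (a) keeps the battery and relay nodes stable on $z$, (b) keeps $o$ active on $z$ and on the three ``output $=1$'' input patterns, and (c) turns $o$ off on the unique ``output $=0$'' pattern where both inputs are $1$ — all while every node's degree stays at most $\Delta$ and every activation set genuinely has isolated value $\alpha$ (in particular $\alpha-2$ must be inactive everywhere, which is exactly why $\alpha \ge 3$ is assumed and why one cannot let a count drop into $\{1,\dots\}$ freely). I expect this to come down to choosing the battery size as $\alpha$ (or $\alpha-1$) and routing the two inputs through a single combiner node so their joint contribution is $+2$ but each alone is effectively $0$ at the output's evaluation time — a ``collision'' of the two signals — which is the soliton-interaction picture the introduction advertises, and the rest is the routine four-row truth-table verification that I would delegate to the accompanying figure.
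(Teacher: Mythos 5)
Your proposal follows essentially the same route as the paper: the paper's proof of Lemma~\ref{lemma:NAND} simply exhibits the gadget of Figure~\ref{fig:NANDiso}, whose load-bearing ingredients are exactly your ``battery'' idea (permanently active $K_{\alpha+1}$ cliques, stable because each clique vertex sees exactly $\alpha$ active neighbours) feeding a combiner that fires only when both inputs raise its count to $\alpha$, followed by an output padded to count $\alpha$ by default that is pushed out of the isolated window---hence inverted---when the combiner fires, with the result read at $t=3$. Your sketch correctly isolates why $\alpha\ge 3$ is needed and lands on the right final design, though like the paper it ultimately delegates the four-row verification to a figure you have not drawn.
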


\begin{proof}
	Let $ \alpha  \geq 3$. We show explicitly the structure and  dynamics of $\mathcal{A}_\alpha$ in Figure \ref{fig:NANDiso}. Note that the graph structure strongly depends on the fact that complete graphs $K_{\alpha +1}$ are stable connected components for state $1$ in the sense that nodes inside this clique will be always in state $1.$ From the Figure \ref{fig:NANDiso} it is evident that fixed point $z$ is given by the state in which $z_v = 0$ for any $v$ which is not part of one the two cliques in the graph and that computation of NAND is a consequence of a perturbation of $z$.
\end{proof}

\begin{figure}[!tbp]
	\includegraphics[scale=0.45]{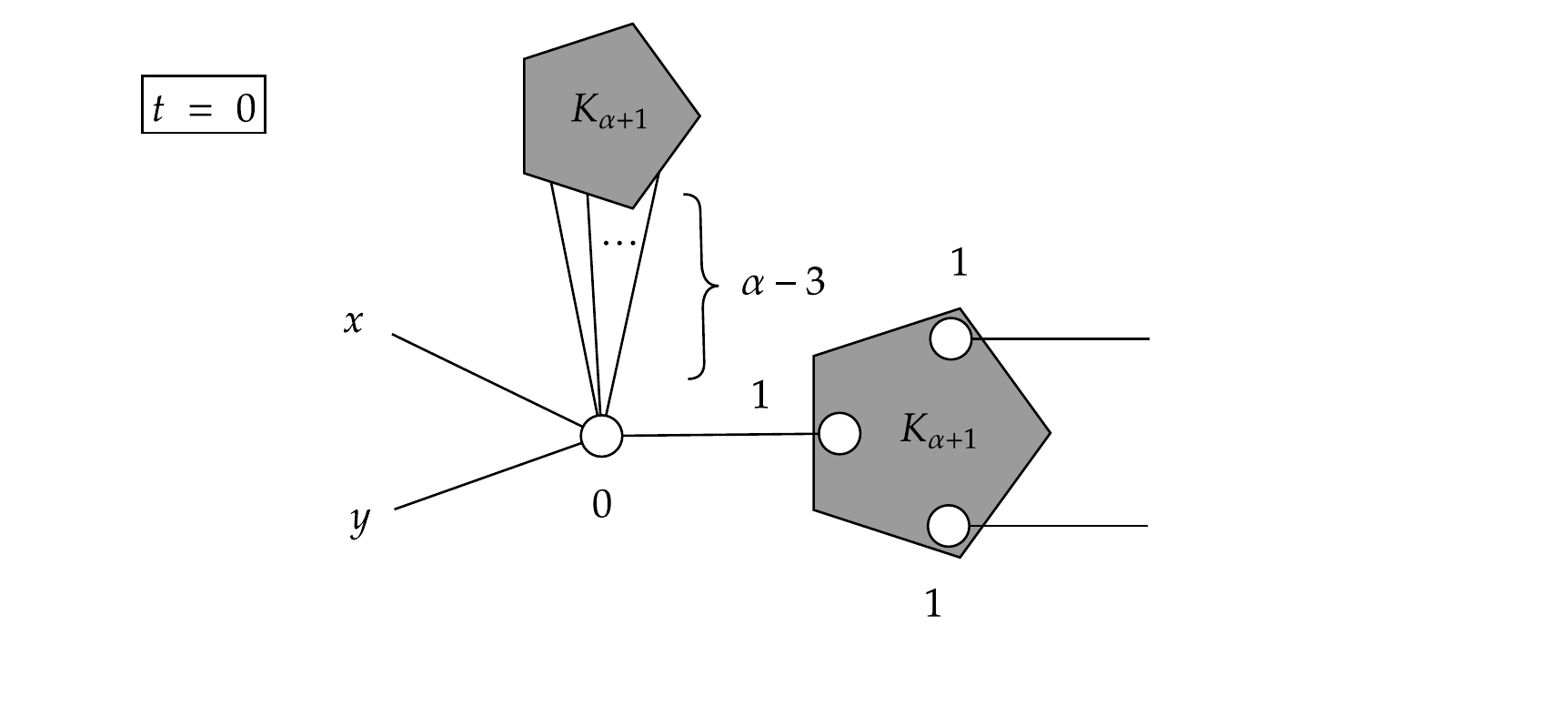}
	\includegraphics[scale=0.45]{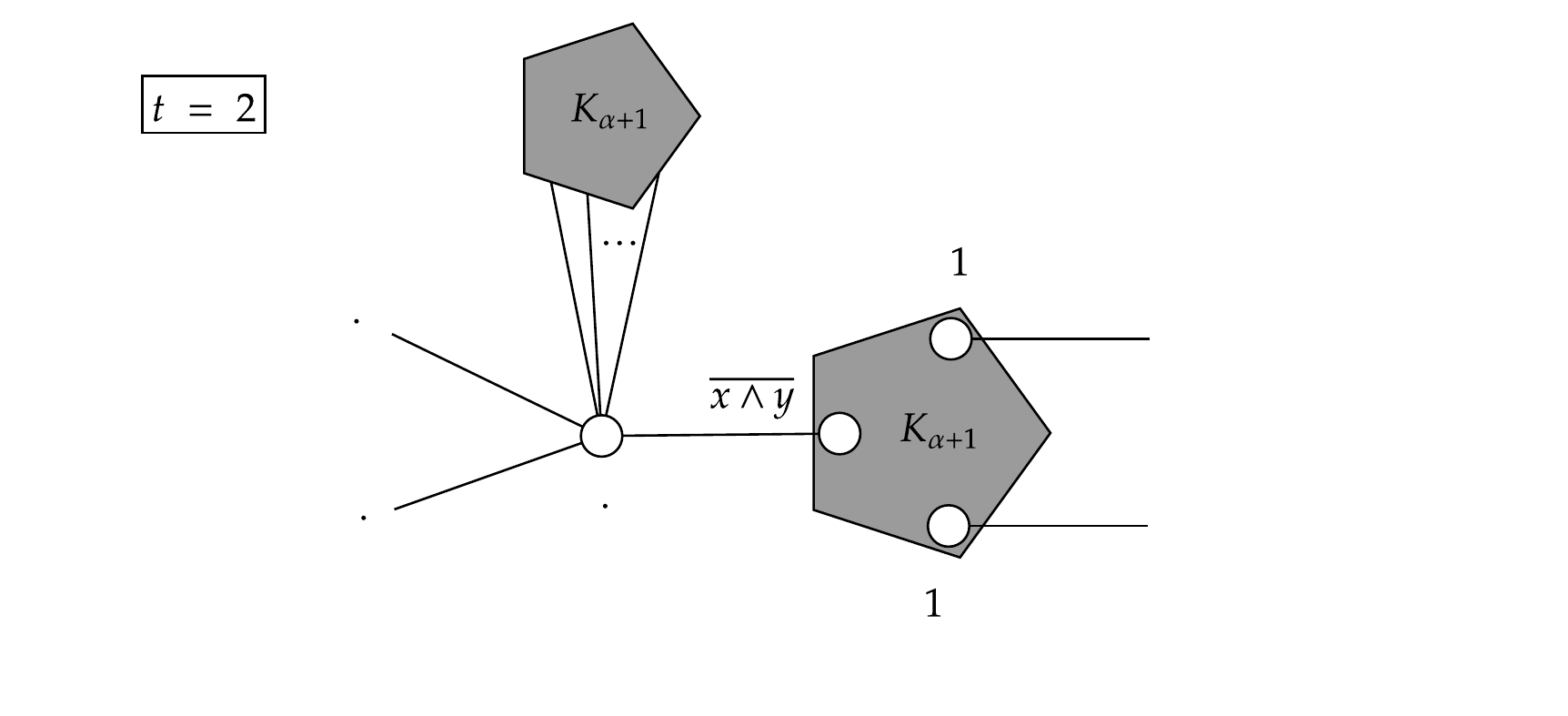}
	\includegraphics[scale=0.45]{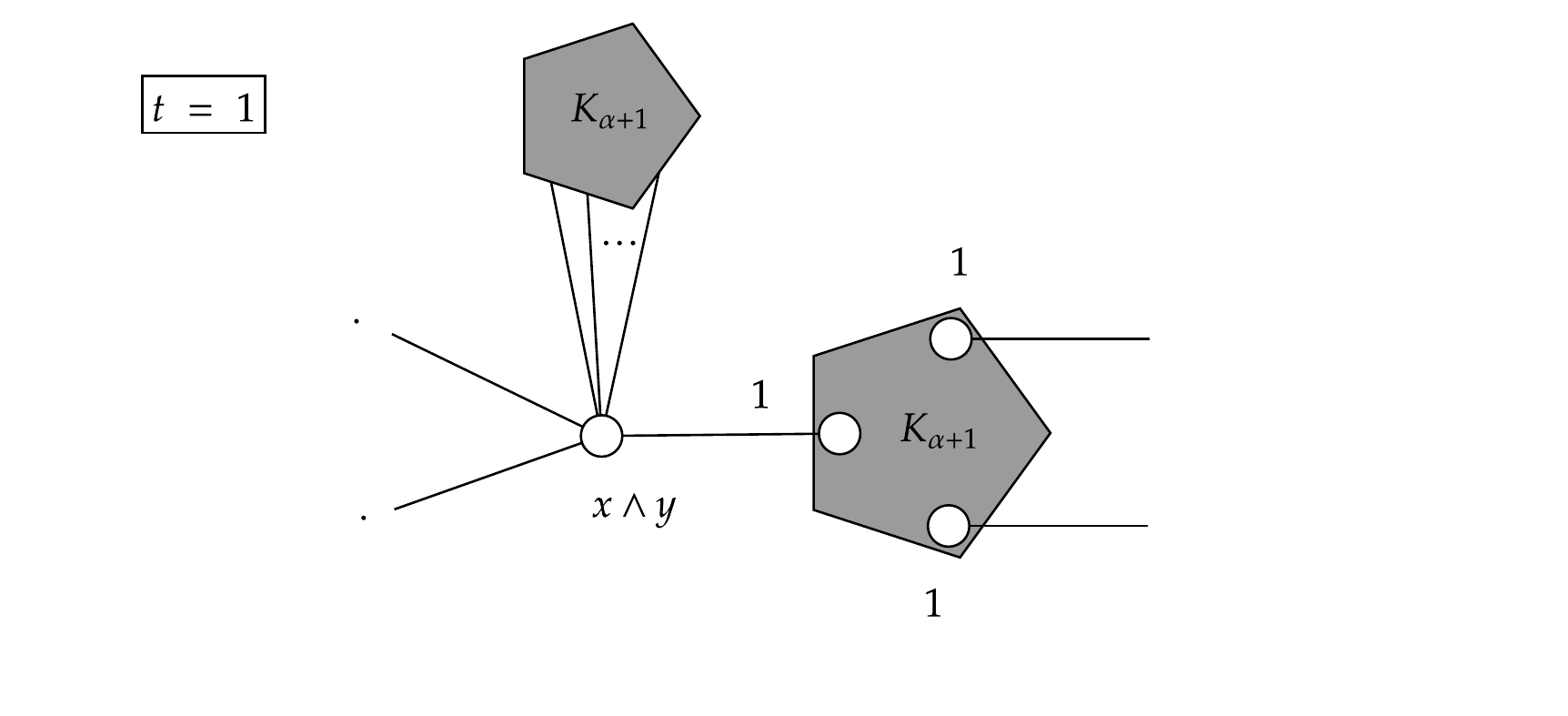}
	\includegraphics[scale=0.45]{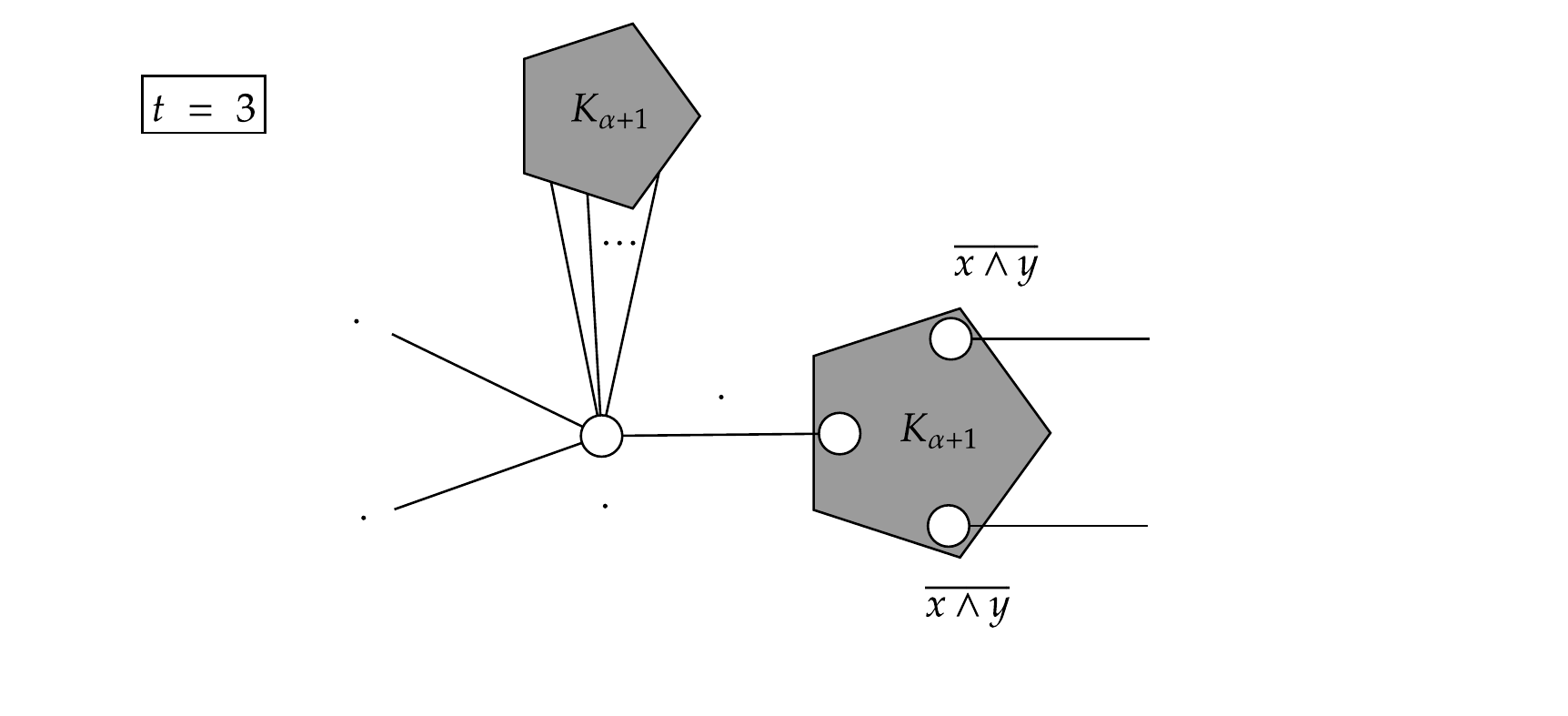}
	\caption{NAND gadget for $\alpha$-uniform isolated totalistic rules with $\alpha \geq 3.$ Grey $K_{\alpha+1}$ components are fixed in state $1$.}
	\label{fig:NANDiso}
\end{figure}
During the rest of this section we will call the automata network in Figure \ref{fig:NANDiso}  a NAND gadget. We will now show that the class of $\alpha$-uniform totalistic isolated rules is complete for some class of graphs $\mathcal{G}$ that we will show. To do that, we need to define two more gadgets: a clock gadget and a clocked NAND gadget.
\begin{figure}[!tbp]
\centering
	\includegraphics[scale=0.5]{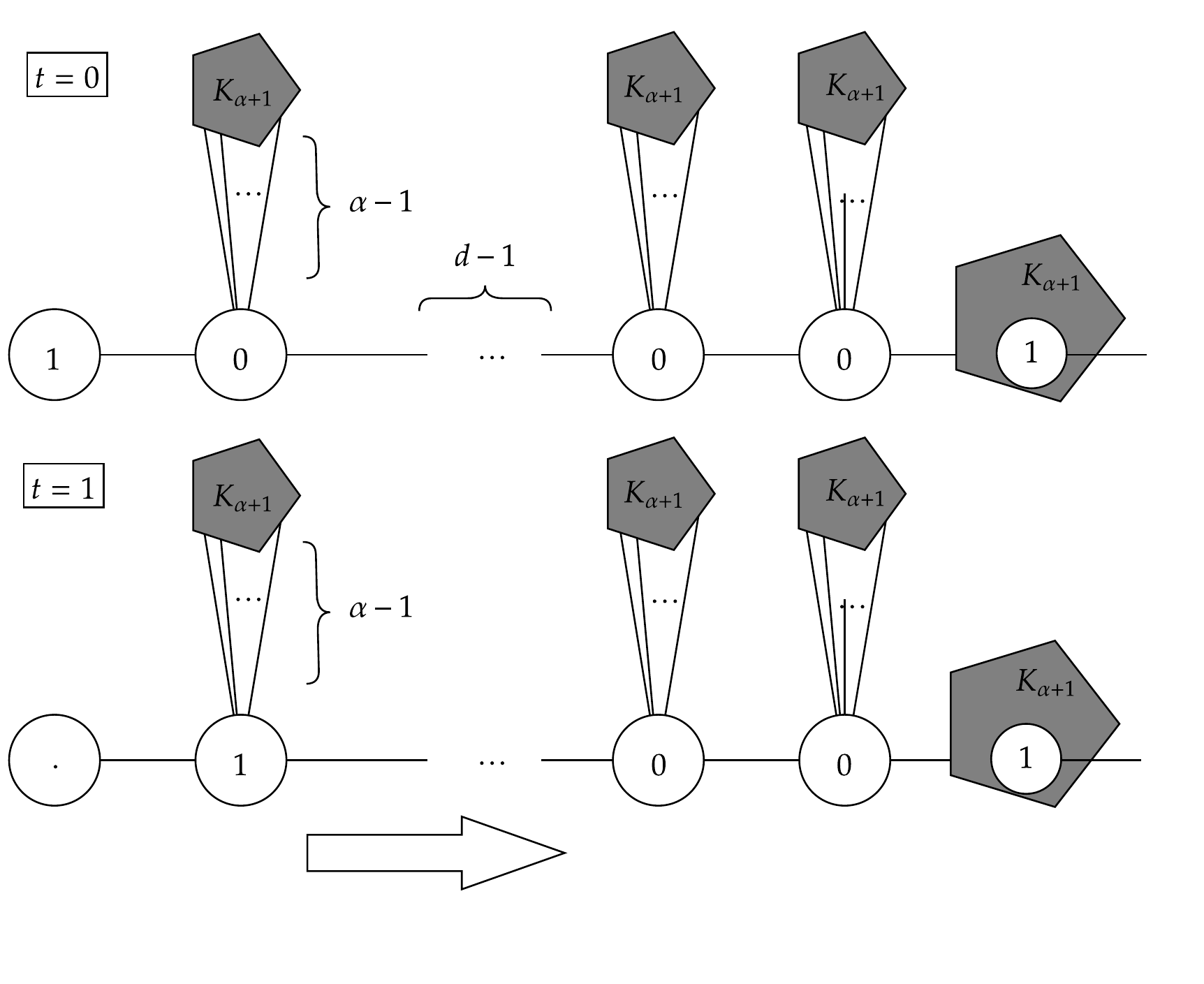}
	\caption{Clock gadget for $\alpha$-uniform isolated totalistic rules with $\alpha \geq 3.$ Grey $K_{\alpha+1}$ components are fixed in state $1$.}
	\label{fig:clockiso}
\end{figure}
\begin{lem}
	  For each $\alpha \geq 3$ and $d \geq 1$ there is an automata network $\mathcal{A}_{\alpha,d}= (G_{\alpha,d},\mathcal{F}_{\alpha,d})$ such that every $f \in \mathcal{F}_{\alpha,d}$ is isolated with isolated value $\alpha$ and such that its global rule $F_C$ satisfies that there exists $o \in V(G)$: $F_C^s(x)_{o}  = 1$ for $0\leq s \leq d-1$ and $F_C^{d}(x)_{o} = 0$ for some  $x \in \{0,1\}^n$.
\end{lem}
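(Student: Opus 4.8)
The plan is to realise this ``timer'' as a delay line of length $d$ assembled from relay cells of the same kind used for the NAND gadget in Lemma~\ref{lemma:NAND}. For the local rules I take throughout the single isolated rule $\mathcal{I}_f=\{\alpha\}$ (which indeed has isolated value $\alpha$), so that a vertex is active at the next step exactly when it currently sees precisely $\alpha$ active neighbours. The elementary brick is a \emph{relay cell}: a \emph{hub} vertex wired to a copy of $K_{\alpha+1}$ from which a matching of internal edges has been deleted, so that --- as in the NAND gadget --- every clique vertex still has degree $\alpha$ and the clique stays all-$1$ as long as the hub does, feeding the hub a fixed block of active ``support'' neighbours. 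I chain $d$ such cells into a path $o=p_1,p_2,\dots,p_d$ in which $p_k$ reads its input from $p_{k+1}$; the hub of $p_1$ gets $\alpha-1$ support neighbours (it has only the path edge to $p_2$), the hubs of $p_2,\dots,p_{d-1}$ get $\alpha-2$ support neighbours (they have two path edges), and the terminal hub $p_d$ is deliberately left one neighbour short --- degree $\alpha-1$ --- so that it can never see the value $\alpha$. The precise graph $G_{\alpha,d}$, the exact edge-deletion pattern (with the harmless parity adjustment needed when the relevant block size is odd), and the step-by-step dynamics are displayed in Figure~\ref{fig:clockiso}.

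With the initial configuration $x$ defined by setting every auxiliary-clique vertex and every hub $p_1,\dots,p_d$ to $1$ and all remaining vertices to $0$, I prove the invariant $F_C^{\,t}(x)_{p_k}=1 \iff t\le d-k$ by induction on $t$, reading the chain from the terminal end back to $o$. The base case is the choice of $x$. For the inductive step, $p_d$ sees at most $\alpha-1<\alpha$ active neighbours and hence $F_C^{\,t}(x)_{p_d}=0$ for every $t\ge 1$; and for $k<d$ the hub $p_k$ is active at time $t+1$ precisely when all of its path-neighbours are active at time $t$ (its support block being in place whenever this matters, by the induction hypothesis applied to $p_k$ itself), so, since among those path-neighbours $p_{k+1}$ has the shortest on-interval by the induction hypothesis, this happens iff $t\le d-k-1$, i.e.\ iff $t+1\le d-k$. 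Specialising to $k=1$ gives $F_C^{\,s}(x)_o=1$ for $0\le s\le d-1$ and $F_C^{\,d}(x)_o=0$ (in fact $F_C^{\,s}(x)_o=0$ for all $s\ge d$), which is the statement.

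The part I expect to be the real obstacle is not the forward dynamics of an undisturbed cell but controlling what happens once a hub drops to $0$: its support clique is then no longer $\alpha$-regular on its active part and begins to decay, and one must check that this decay neither re-triggers the hub nor leaks into the preceding cell before time $d$. This is where both the isolation hypothesis and the choice $\mathcal{I}_f=\{\alpha\}$ pay off: re-activating a hub $p_k$ would require it to see \emph{exactly} $\alpha$ active neighbours, which in particular needs $p_{k+1}$ active again --- impossible, since $p_{k+1}$ lies between $p_k$ and the pinched terminal cell and is therefore $0$ from time $d-k$ onward. Thus the ``off wave'' is monotone and, once past, the crumbling of the auxiliary cliques is irrelevant to the value read at $o$. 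Fixing the edge-deletion pattern so that all degrees stay exactly $\alpha$, and handling the parity case, is then routine bookkeeping best left to Figure~\ref{fig:clockiso}.
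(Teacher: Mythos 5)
Your construction is sound, but it is not the paper's. The paper's clock (Figure~\ref{fig:clockiso}) is a \emph{forward} excitation wire: the path cells start at $0$, each is pre-loaded with $\alpha-1$ permanently active support neighbours, a single $1$ injected at the head advances one cell per step (a path cell seeing $\alpha-1+1=\alpha$ active neighbours switches on), and at time $d$ the arriving signal raises the neighbour count of the output vertex, which sits stably inside a $K_{\alpha+1}$, from $\alpha$ to $\alpha+1$; the output dies because $\alpha+1\notin\mathcal{I}_f$ by isolation. You instead run a \emph{backward} extinction wave: all hubs start at $1$, the terminal hub is crippled to degree $\alpha-1$ so it dies at $t=1$, and the off-front walks back to $o$ in $d$ steps because each hub needs all $\alpha$ of its neighbours active. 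The two mechanisms exploit opposite halves of the isolation hypothesis (over-stimulation at $\alpha+1$ versus under-stimulation at $\alpha-1$); yours has the advantage of needing no injected perturbation and of making the ``once off, forever off'' monotonicity immediate, since reactivating any hub would require its already-dead successor. Your induction does need the invariant strengthened to cover the support cliques (the clique of cell $k$ remains all-$1$ through time $d-k+1$, which covers every step at which $p_k$'s update consults it), and the cliques' subsequent decay is confined to their own cell because cells share only the hub--hub path edges; you gesture at both points and they do go through.

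The one place where ``routine bookkeeping'' undersells the work is the matching deletion. Since $\alpha-1$ and $\alpha-2$ have opposite parities, for every $\alpha$ at least one of your two cell types attaches its hub to an odd number of clique vertices; the unmatched attached vertex then has degree $\alpha+1$, sees $\alpha+1$ active neighbours, and by the very isolation property you rely on it dies at $t=1$, after which its death cascades through the clique toward the hub and can kill the line early. So the parity adjustment is not automatically harmless: you must genuinely rebalance, for instance by drawing support edges in pairs from a $K_{\alpha+1}$ with a single internal edge $uv$ deleted and both $u$ and $v$ wired out to hubs, splitting an odd demand across two such cliques. The paper's own gadgets face the identical issue and likewise bury it in the figures, so this is not a defect specific to your route, but it is the step that must actually be written out rather than deferred.
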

\begin{proof}
	See Figure \ref{fig:clockiso} for the structure of the gadget and the definition of $x$. Note that the gadget works as a wire defined by its path structure that carries the $1$ signal which perturbs one node in one of the complete graphs $K_{\alpha+1}$ at the end of the path after $d$ time steps. This is because each node in the path has exactly $\alpha-1$ neighbours in state $1$ and thus, incoming signal allow them to change its state.
\end{proof}
Finally we introduce the clocked NAND gadget in the following lemma.
\begin{lem}
	For each $\alpha \geq 3$ and $d \geq 1$ there is an automata network $\mathcal{A}_{\alpha,d}= (G_{\alpha,d},\mathcal{F}_{\alpha,d})$ such that $\mathcal{F}_{\alpha,d} \in \mathcal{S}_\alpha$ and such that its global rule $F_{CN}$ satisfies that there exist $i_1,i_2,o_1,o_2 \in V(G)$ such that $F_{CN}^s(w)_{o_j}  = 1, j=1,2$ for $0\leq s \leq d-1$ and $F_{CN}^{d+3}(x)_{o_j} = \textbf{NAND}(F^d(w)|_{i_1},F^d(w)|_{i_2}), j=1,2$ for some $w \in \{0,1\}^n$
\end{lem}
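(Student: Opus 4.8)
The plan is to obtain $\mathcal{A}_{\alpha,d}$ by combining a copy of the NAND gadget of Lemma~\ref{lemma:NAND} with a clock gadget of delay $d$, both for the same isolated value $\alpha$, and interposing delay wires so that the NAND computation only ``happens'' after the clock has fired. Concretely: take the NAND gadget with inputs $i_1,i_2$ and outputs $o_1,o_2$; in front of each of $i_1,i_2$ put a path of length $d$ of the type used inside the clock gadget (interior vertices having $\alpha-1$ neighbours in a frozen $K_{\alpha+1}$, so the path transports a held signal one vertex per step); and attach the output vertex of a delay-$d$ clock gadget to both $o_1$ and $o_2$, having first fixed the frozen support of each $o_j$ so that, together with the clock vertex, its active-neighbour count is exactly $\alpha$ as long as the clock vertex is $1$ and the NAND output line feeding $o_j$ is $0$. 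The distinguished configuration $w$ is the one that is $0$ on every vertex outside the grey cliques, except that $o_1,o_2$ start at $1$ and $i_1,i_2$ and the clock launcher carry the chosen bits.

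The dynamics I would then check, using only the local bookkeeping of Lemma~\ref{lemma:NAND} and the clock lemma (a $K_{\alpha+1}$ is frozen in state $1$; a vertex with $\alpha-1$ frozen neighbours flips to $1$ exactly when the travelling pulse reaches it and back to $0$ one step later), breaks into three points. (i) Each length-$d$ path delivers the corresponding bit of $w$ unchanged to the input vertex of the NAND gadget at time $d$, so by Lemma~\ref{lemma:NAND} the NAND gadget places $\textbf{NAND}(w_{i_1},w_{i_2})$ on both its output lines at time $d+3$, the lines being inert before that. (ii) For $0\le s\le d-1$ the clock vertex is in state $1$ and the NAND output lines at $o_1,o_2$ are still $0$, so the count at each $o_j$ is exactly $\alpha$ and $o_j$ is forced to $1$ regardless of the rest — this is precisely where the clock's held $1$ is needed, since with exact-count rules the output cannot be held monotonically. (iii) At time $d$ the clock exhibits its falling edge; the clock vertex's contribution to the count at $o_j$ drops to $0$, and from then on $o_j$ copies the NAND output line, hence equals $\textbf{NAND}(w_{i_1},w_{i_2})=\textbf{NAND}(F_{CN}^d(w)_{i_1},F_{CN}^d(w)_{i_2})$ from time $d+3$ on. Together (ii) and (iii) give the two asserted properties, and every rule used has isolated value $\alpha$, so $\mathcal{F}_{\alpha,d}\in\mathcal{S}_\alpha$.

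The step I expect to be the real obstacle is making (ii) and the hand-off at time $d$ watertight. Because isolated rules fire on an \emph{exact} count, I must be sure that at every step $s\le d-1$ the active-neighbour count of $o_j$ is $\alpha$ on the nose in all four input cases, which forces the NAND output line to contribute a \emph{fixed} amount (here $0$) to that count throughout the waiting window; this in turn requires that, before time $d$, the NAND gadget really is sitting at a configuration whose output lines are $0$, which one gets from (i) only after checking that the delay wires keep the NAND gadget's input vertices at $0$ until time $d$ and that the NAND gadget's resting state has output $0$ (padding it with a dummy frozen-clique neighbour on the output vertices if needed to get the count exactly right). Since the whole construction — NAND core, two delay wires, clock gadget, output attachments — is of bounded size, once the degrees are fixed the remaining verification is a finite table: for each of the four assignments of $(w_{i_1},w_{i_2})$, track the relevant vertices over the $O(d)$ steps up to $d+3$; the only non-routine entries are the counts at $o_1,o_2$ around step $d$, and those are forced by the degree choices. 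The ``output $=1$ until the clock fires'' property is what will let these gadgets be stacked layer by layer in the completeness theorem, each layer's delay chosen larger than the time at which its predecessors settle, so that every gate reads correct inputs.
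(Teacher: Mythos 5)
Your architecture is genuinely different from the paper's, and the difference is where the problem lies. In Figure \ref{fig:clockedNANDiso} the clock is attached to the \emph{central computing vertex} of the NAND gadget of Lemma \ref{lemma:NAND}: in the resting configuration $w$ that vertex sees $\alpha-3$ active neighbours from the upper clique, one from the right clique, one from the clock and two from the inputs (which rest at $1$), i.e.\ exactly $\alpha+1$ active neighbours, which lies in the forbidden band $[\alpha-2,\alpha+1]\setminus\{\alpha\}$ of an isolated rule; so the gadget is frozen. At the falling edge $t=d$ the clock neighbour drops to $0$, the count becomes $\alpha-2$ plus the number of \emph{currently} active inputs, and the unclocked NAND dynamics take over on the values $F^d(w)|_{i_1},F^d(w)|_{i_2}$. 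Your design instead prepends length-$d$ delay wires to $i_1,i_2$ and clocks the outputs. This ships the \emph{time-$0$} bits $w_{i_1},w_{i_2}$ to the core, so what you compute at time $d+3$ is $\textbf{NAND}(w_{i_1},w_{i_2})$, not $\textbf{NAND}(F^d(w)|_{i_1},F^d(w)|_{i_2})$ as the lemma asserts. That is not a cosmetic discrepancy: in Theorem \ref{teo:isolated} the nodes $i_1,i_2$ are identified with outputs of gadgets in the previous layer, which are held at $1$ until their own clocks fire and only carry the correct bit at time $d$. A layer-$k$ gate's inputs simply do not exist at time $0$, so delaying the time-$0$ values gives the wrong answer (indeed in the intended resting state $w_{i_1}=w_{i_2}=1$ always, so your gadget would always output $0$). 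The clock must gate \emph{when the core reads its inputs}, not transport old inputs.

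The secondary issue is the one you flag yourself but do not resolve: holding $o_1,o_2$ at $1$ by wiring a clock to them feeds back into the NAND core, since $o_1,o_2$ have neighbours inside the gadget whose exact counts are what make the $3$-step computation of Lemma \ref{lemma:NAND} work; with exact-count rules an extra held $1$ on the output vertices can push those interior counts out of (or into) the firing value and destroy the resting configuration. In the paper's construction no such surgery on the outputs is needed: the whole gadget, outputs included, sits at a genuine local fixed point with $o_j=1$ while the clock is on, precisely because the single inhibited vertex is the central one. To repair your proof you would have to (a) drop the input wires and read $i_1,i_2$ directly at time $d$, and (b) move the clock from the outputs to the central vertex (or else verify the output-feedback table in all four input cases), at which point you have reconstructed the paper's gadget.
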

\begin{proof}
	In  Figure \ref{fig:clockedNANDiso} we show the structure of a clocked NAND gadget. We define $i_1 $and $i_2$  as the nodes that are labelled by $x$ and $y$ in Figure \ref{fig:clockedNANDiso}. We define $w' = z \cup r$ the concatenation of the configuration $z$ in Lemma \ref{lemma:NAND} and $r$ in the previous lemma. Finally we define $w_{i_1} = w_{i_2} = 1$ and $w_v = w'_v$ for all $v \not \in \{i_1,i_2\}$. Note that central node (which is connected to nodes labelled as $x$ and $y$) is in state $0$ and has exactly $\alpha + 1$ active networks: $\alpha - 3$ active networks from the clique in the upper part of the gadget, one from  the clique in the right, one from the clock gadget and two from the inputs. As $\alpha$ is an isolated value for the local rules in the gadget then, we have $F_{CN}^s(w)_{o_j}  = 1, j=1,2$ for $0\leq s \leq d-1.$ Now note that in $t=d$ the neighbour of the central node located in the clock gadget will change to $0$ and then we will recover the same scenario shown in Figure \ref{fig:clockiso} with nodes labelled $x$ and $y$ assuming the values $F(x)|_{i_1}$ and $F(x)|_{i_2}$ and thus as a consequence of latter lemma, the result holds.
\end{proof}
\begin{figure}[!tbp]
\centering
	\includegraphics[scale=0.5]{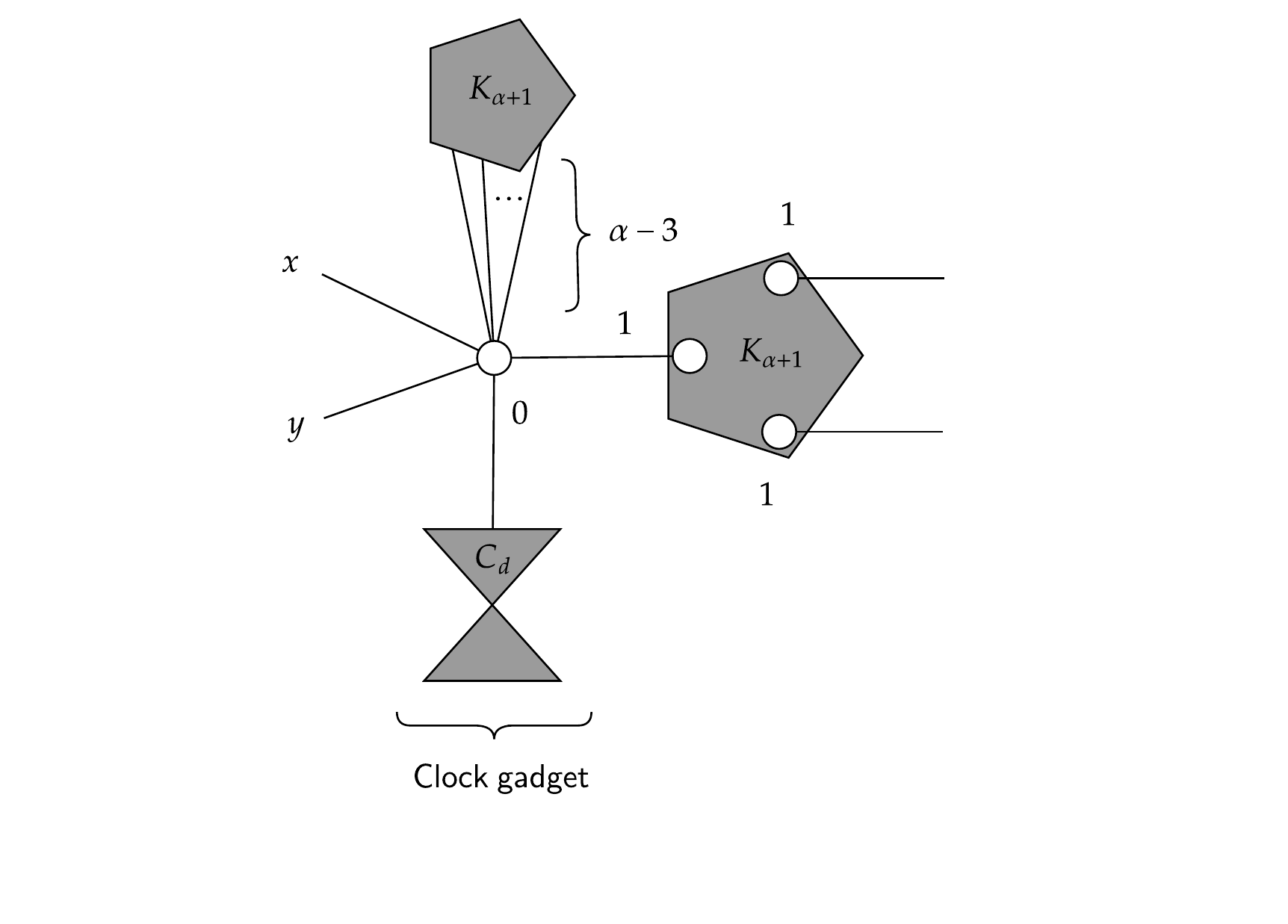}
	\caption{Clocked NAND gadget for $\alpha$-uniform isolated totalistic rules with $\alpha \geq 3$ and delay $d.$ Clock picture represents gadget in Figure \ref{fig:clockiso}.}
	\label{fig:clockedNANDiso}
\end{figure}

Note that as inputs in clocked NAND gadget in Figure \ref{fig:clockedNANDiso} will be fixed in $1$ and thus, the second part of the latter lemma may seem trivial. However, as now we would want to connect different clocked NAND gadgets in order to simulate a circuit, nodes $i_1$ and $i_2$ will be identified with some outputs of some other gadget, allowing us to make simulate the calculations of a NAND gate at the wright time.  We will detail this below. Now we are prepared to show the main result of this section.

\begin{theo}
	Let $r,s \in \mathbb{N}$ and $f: \{0,1\}^r \to \{0,1\}^s$ a Boolean function. For each $\alpha \geq 3$ there is a set of isolated functions $\mathcal{F}$ with isolated value $\alpha$ and a bounded degree class of graphs $\mathcal{G}$ such that $\mathcal{F}$ simulates $f$ in $\mathcal{G}$.
	\label{teo:isolated}
\end{theo}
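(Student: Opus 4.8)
The plan is to lean on the functional completeness of $\textbf{NAND}$ together with the three gadgets already built for isolated rules with isolated value $\alpha$: the NAND gadget of Lemma~\ref{lemma:NAND}, the clock gadget of Figure~\ref{fig:clockiso}, and the clocked NAND gadget of Figure~\ref{fig:clockedNANDiso}. Fix $\alpha\ge 3$ once and for all, so that each gadget has $\mathcal{O}(1)$ vertices and maximum degree $\mathcal{O}(\alpha)$; then the family $\mathcal{G}$ of all graphs we assemble below is automatically a bounded-degree class, and the set $\mathcal{F}$ of rules occurring in the gadgets consists only of isolated functions with isolated value $\alpha$. First I would write $f$ as a Boolean circuit $C_f$ over the single gate $\textbf{NAND}$; such a circuit exists of size $m = r^{\mathcal{O}(1)}$, and, as in the proof of Theorem~\ref{teo:thershold} and following \cite[Section 6.2]{greenlaw1995limits}, I would normalise it so that every gate has fanin $2$ (intrinsic to $\textbf{NAND}$) and bounded fanout, the gates are partitioned into layers $1,\dots,D$ with $D = r^{\mathcal{O}(1)}$, and every wire runs from one layer to the very next. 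Bounded fanout is essentially free, since the clocked NAND gadget already emits two synchronised copies $o_1,o_2$ of its output; larger fanout, and wires that would skip layers, are handled by inserting ``copy'' gadgets, i.e.\ clocked NAND gadgets wired to compute $\textbf{NAND}(\textbf{NAND}(a,a),\textbf{NAND}(a,a)) = a$, at the cost of only polynomial blow-up in $m$ and $D$; a last row of copy gadgets puts all $s$ output gates into layer $D$.

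Next I would assemble $G$. To the gate $g$ lying in layer $k$ I assign one clocked NAND gadget with delay parameter $d = d_k$, choosing the sequence $d_1 < d_2 < \cdots < d_D$ so that consecutive gadgets mesh --- roughly $d_{k+1} = d_k + 3$, which is the latency with which a clocked NAND gadget produces its result after its clock releases at time $d$. The $r$ input nodes $I$ are the far ends of $r$ wire/clock gadgets of length $d_1$ that relay the perturbed input bits into the layer-$1$ gadgets, each such wire carrying a travelling activation wave that reaches its far end at time $d_1$ when the input bit is $1$ and producing no wave at all when it is $0$, i.e.\ delivering a faithfully delayed copy of the input bit. I then identify the output $o_j$ of the gadget of $g$ with the input $i_j$ of the gadget of $g'$ whenever $g$ feeds $g'$ in $C_f$, and I take $O$ to be the $s$ output nodes of the gadgets of the layer-$D$ output gates. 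The base configuration $x$ is taken independently of $y$: every $K_{\alpha+1}$ clique in state $1$ (the stable components on which the whole construction rests), every wire and clock gadget in its initial state, and every remaining vertex in state $0$; one then perturbs $x$ on $I$ by $y\in\{0,1\}^r$. Note $|V(G)| = \mathcal{O}(m) = r^{\mathcal{O}(1)} =: n$ and $\Delta(G) = \mathcal{O}(\alpha)$.

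The heart of the argument is a layer-by-layer induction establishing a timing invariant: for every layer $k$, at time $d_k+3$ each output node of each layer-$k$ gadget carries exactly the value that the corresponding gate of $C_f$ takes on input $y$, while every gadget in a strictly later layer is still in its ``held'' phase, its clock not yet released and its outputs pinned at $1$. The base case is the clocked NAND gadget lemma applied to the inputs delivered by the wires at time $d_1$. For the inductive step, since $d_{k+1} = d_k+3$, the two inputs of a layer-$(k+1)$ gadget are, at the instant $d_{k+1}$ at which its clock releases, precisely the (now correct) outputs of the two layer-$k$ gadgets feeding it; the clocked NAND gadget lemma then delivers the correct $\textbf{NAND}$ of these bits at time $d_{k+1}+3$, and the still-held higher layers are undisturbed because their upstream signals remain $1$. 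Evaluating the invariant at $k=D$ gives $F^{t}(x)|_O = f(y)$ for all $y$, with $t := d_D+3 = r^{\mathcal{O}(1)} = n^{\mathcal{O}(1)}$, which is exactly the statement that $\mathcal{F}$ simulates $f$ in $\mathcal{G}$.

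The step I expect to be the main obstacle is making the timing invariant genuinely airtight. The clocked NAND gadget lemma, as stated, pins down the output only at the single time step $d+3$ and relies on both of its inputs being $1$ throughout its held phase; composing many gadgets therefore forces $C_f$ to be made strictly synchronous (all paths from the inputs to any fixed gate of equal delay-weighted length, which is what the copy gadgets enforce) and, more delicately, seems to require inserting short $1$-holding wire segments between consecutive layers so that during the three steps in which a layer-$k$ output is transitioning from its held value $1$ to the true gate value the layer-$(k+1)$ gadget has not yet released its clock on corrupted inputs --- equivalently, raising the per-layer delay from the naive $3$ to a slightly larger constant and propagating this through the $d_k$. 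One must also verify the mundane but essential points that a wire transmits a $0$ (no wave) as faithfully as a $1$ (a travelling wave that does not need its source held after it departs), and that gluing gadgets along shared input/output nodes introduces no stray adjacency that alters a totalistic count or destabilises a $K_{\alpha+1}$ clique. None of this is conceptually hard, but it is where the genuine care of the proof lies.
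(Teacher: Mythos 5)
Your proposal follows essentially the same route as the paper's proof: write $f$ as a layered NAND circuit, assign a clocked NAND gadget to each gate with per-layer clock delays increasing by $3$ (the paper takes delay $3k$ at layer $k$, with unclocked gadgets at the first layer), identify gadget outputs with downstream inputs, and argue layer by layer that the $k$-th layer holds the correct gate values at time $3k$, giving total time $3\,\mathrm{depth}(C_f)$. The extra care you flag --- copy gadgets for fanout, wire gadgets feeding the first layer, and tightening the timing invariant --- elaborates on details the paper leaves implicit but does not change the approach.
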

\begin{proof}
	Let $C_f$ be a circuit representing $f$. Without loss of generalisation we can assume that any gate is a NAND gate. We will use again the same reasoning we used for Theorem \ref{teo:thershold}. The only difference here is that we have to be very careful in order to initialise clocked NAND gadgets in each layer of the circuit as it is shown in Figure \ref{fig:circuitiso}. In order to do that, we set locally each clocked NAND get to the local configuration $w$ given in the latter lemma with exception of the nodes representing input gates (those nodes will be identify with inputs of gadgets in first layer which do not have any clock ).  Nodes in the second layer will wait time $d=3$ as a consequence of their clocks. When the calculations of the first layer arrives to the second one, their clocks will be off (the neighbour located in the clock will be inactive) and thus they will be able to compute a NAND from the values computed in the first layer. Then, third layer has to wait $6$ in order to coherently simulate the evaluation of the circuit so we choose $d=3$ as it is shown in Figure \ref{fig:circuitiso}. In general the $k$-th layer will have a clock with delay $d = 3k$. Note that, as a consequence of last lemma, any gadget with exception of the first layer will stay fixed in configuration $w$ and will calculate NAND when its clock gets inactive. Then, after $t = 3 \text{depth}(C_f)$ and thus, the result holds.
\end{proof}
\begin{figure}[!tbp]
    \centering
	\includegraphics[scale=0.56]{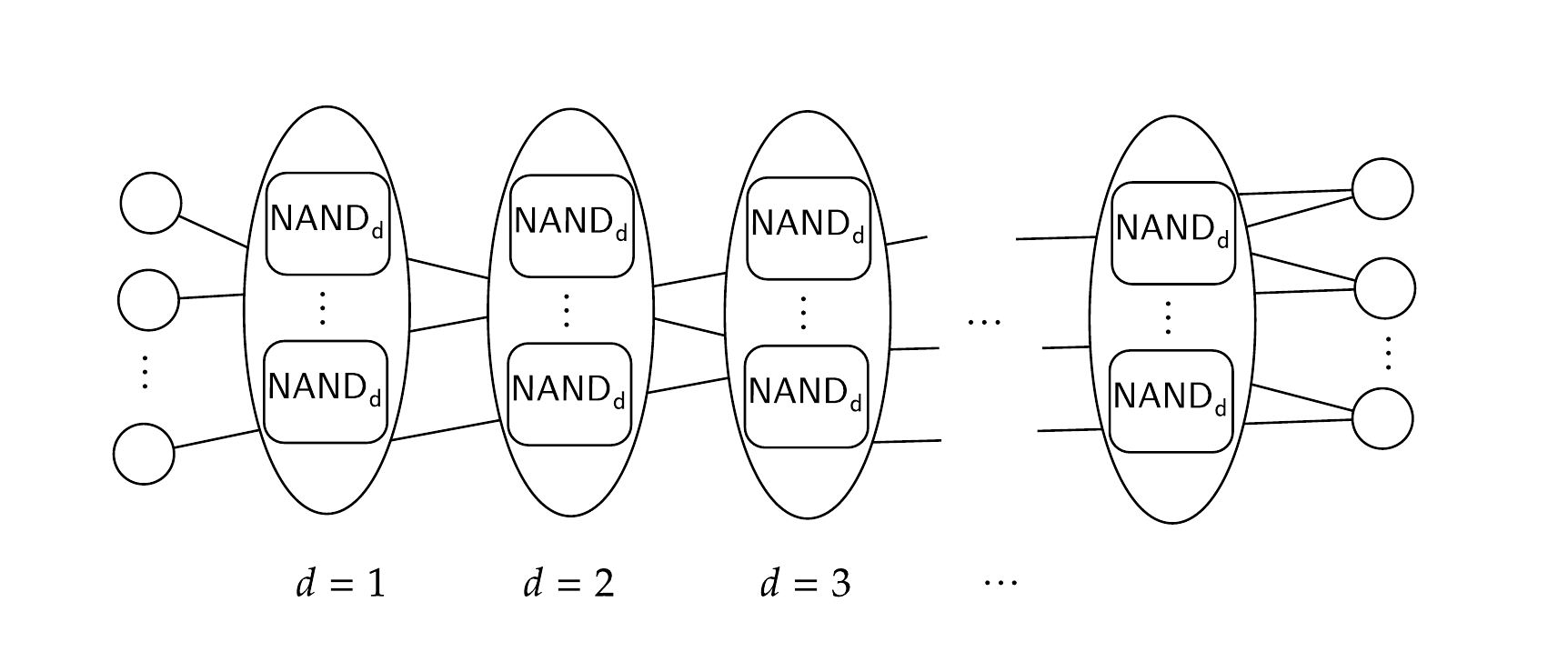}
	\caption{Scheme of the underlying graph of an automata network simulating a circuit defined over a class of $\alpha$-uniform isolated totalistic rules with $\alpha \geq 3.$ $\text{NAND}_d$ represents clocked NAND gadget in Figure \ref{fig:clockedNANDiso} }
	\label{fig:circuitiso}
\end{figure}

\subsubsection{Special cases: Rule $1$ and Rule $2$.}
In this section, we study two special cases of totalistic rules that are not strictly included in the last formalism. These are the rule $1$ and rule $2$ that change its values to active (change to state $1$) if and only if there is $1$ (respectively $2$) active neighbours in some given time step. The case of rule $1$ is particularly interesting considering that the study of the dynamics (particularly related to decision problems such as prediction) is still open in the case in which the network is a two-dimensional cellular automata (the underlying interaction graph of the network is a grid). 
\paragraph{Rule $1$}
We start that showing that the XOR gates is in the spectrum of some automata network in which every rule is given by the rule $1$
\begin{prop}
	There is a Rule $1$ automata network $\mathcal{A} = (G =(V,E),\mathcal{F})$ such that for some $i_1, i_2, o_1, o_2 \in V$  we have that $F^2(z)_{o_j} = \textbf{XOR}(z|_{i_1},z|_{i_2}) =  z|_{i_1} \oplus z|_{i_2} $ for $j=1,2$ and some perturbation $z$ of fixed point $\vec{0}$ in $i_1$ and $i_2$. In particular, XOR gate is in the spectrum of $\mathcal{A}.$
\end{prop}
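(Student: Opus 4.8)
The plan is to exhibit one explicit small graph on which Rule $1$ realises XOR after exactly two steps, and then to verify the dynamics by a direct case check. The one elementary fact driving everything is that Rule $1$ evaluated at a node of degree $2$ computes precisely the XOR of its two neighbours: if the neighbours carry bits $a,b\in\{0,1\}$ their sum lies in $\{0,1,2\}$, and it equals $1$ (the unique activating value of Rule $1$) if and only if $a\oplus b=1$; otherwise the sum is $0$ or $2$, i.e.\ $a=b$. So a single Rule-$1$ node of degree $2$ already outputs $x|_{i_1}\oplus x|_{i_2}$ after one step, and the only remaining work is to delay that value by one more step and make it available at two output nodes (matching the two-output shape of the AND/OR and NAND gadgets, so the gadget can later be chained).

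Concretely I would take $V=\{i_1,i_2,c,o_1,o_2\}$ with edges $i_1c,\ i_2c,\ co_1,\ co_2$ (a double star centred at $c$), with every node running Rule $1$; note $\vec 0$ is a fixed point, and let $z$ be the perturbation of $\vec 0$ that sets $z_{i_1},z_{i_2}$ to the input bits and leaves all other coordinates $0$. The verification is three lines. At time $1$: each leaf $i_1,i_2,o_1,o_2$ sees its single neighbour $c$ in state $0$, so it becomes $0$; node $c$ sees the multiset $\{z_{i_1},z_{i_2},0,0\}$, so $F(z)_c=1\iff z_{i_1}+z_{i_2}=1\iff z_{i_1}\oplus z_{i_2}=1$. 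At time $2$: $c$ sees all four neighbours in state $0$ so $F^2(z)_c=0$, while each $o_j$ sees its unique neighbour $c$ in state $z_{i_1}\oplus z_{i_2}$, hence $F^2(z)_{o_j}=z_{i_1}\oplus z_{i_2}$ for $j=1,2$. This is exactly $F^2(z)_{o_j}=\textbf{XOR}(z|_{i_1},z|_{i_2})$, and the realisation $g^{(\{i_1,i_2\},o_1,2)}_{\vec 0}$ is the XOR function, so XOR lies in the spectrum of $\mathcal A$ for $t=2$, $l=2$.

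I do not expect a real obstacle here: the argument is a direct computation on a five-vertex tree. The only thing to be careful about is that no spurious signal is fed back into $c$ or into the outputs, which is automatic because the construction is acyclic and every non-input node starts at $0$, so the quiescent background is preserved except for the intended moving pulse. The genuinely delicate point — propagating Rule-$1$ pulses along wires without losing or duplicating them, and synchronising them with clocks so that several such gadgets can be composed into a circuit, in the style of Theorem~\ref{teo:thershold} and Theorem~\ref{teo:isolated} — is exactly what is \emph{not} needed for this proposition, which only asserts that the XOR gate itself appears in the spectrum.
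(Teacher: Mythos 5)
Your proposal is correct and follows essentially the same route as the paper: a small gadget with a central node whose activation set $\{1\}$ computes the XOR of the two input bits at time $1$, which is then read off at two degree-one output nodes at time $2$, verified by a direct case check from the quiescent fixed point $\vec{0}$. The explicit double-star on $\{i_1,i_2,c,o_1,o_2\}$ matches the structure of the gadget in Figure~\ref{fig:XOR1}, and your verification that no spurious signal re-enters the central node is the same observation the paper relies on implicitly.
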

\begin{proof}
	We show the gadget that defines  $\mathcal{A} = (G =(V,E),\mathcal{F})$ in Figure \ref{fig:XOR1}.  We identify the nodes at the left labelled by $x$ and $y$ as inputs $i_1$ and $i_2$ and the ones in the wright in state $0$ as $o_1$ and $o_2$. Note that because of the local rule we will read in the output a $1$ if exclusively one of the inputs is in state $1$. Thus, gadget in Figure \ref{fig:XOR1} computes a XOR gate. The result holds.
\end{proof}
\begin{figure}[!tbp]
\centering
	\includegraphics[scale=0.28]{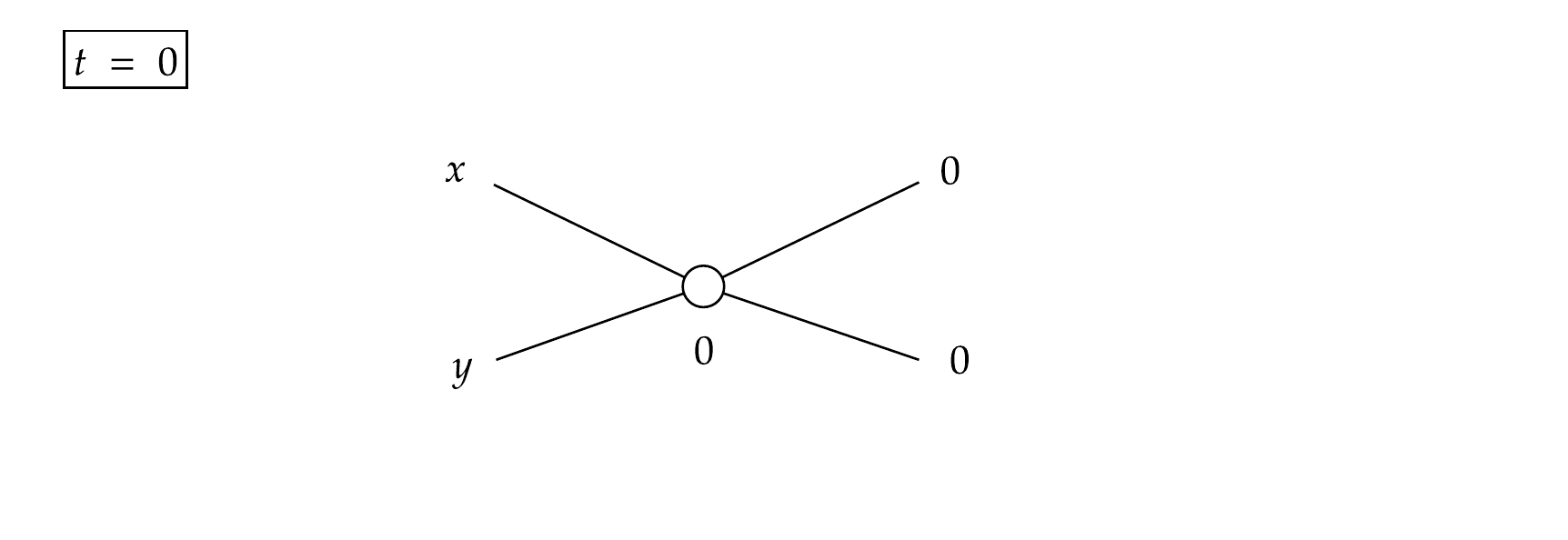}
	\includegraphics[scale=0.28]{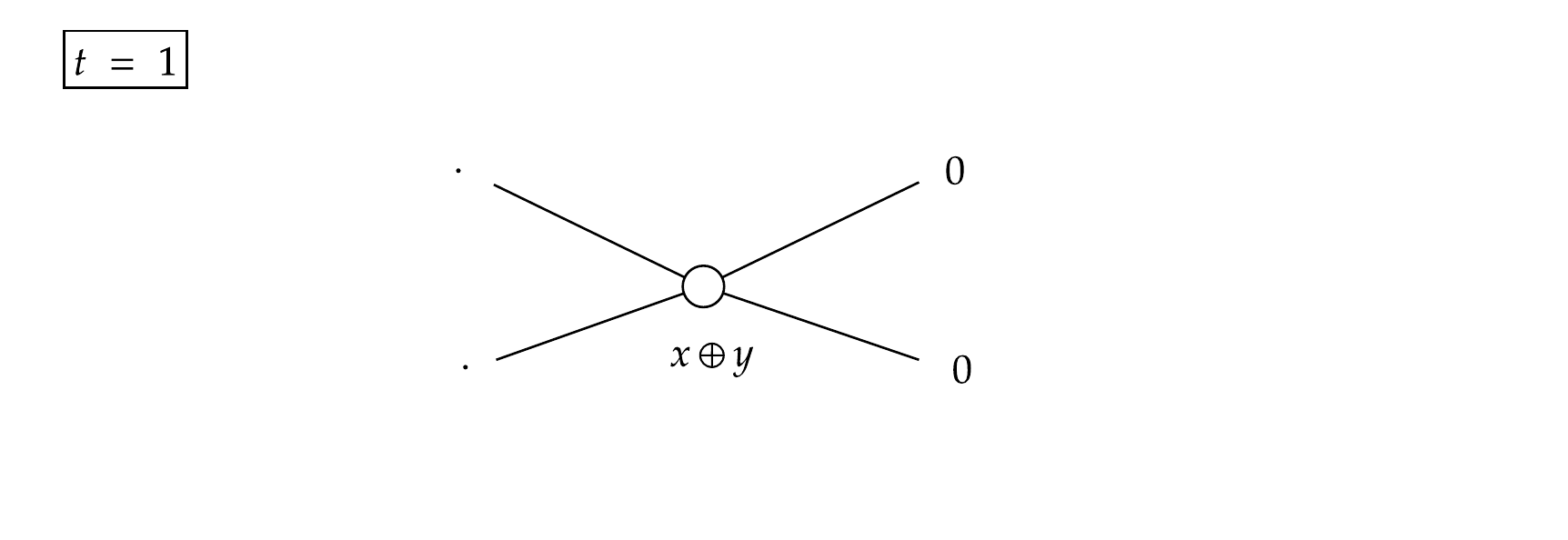}
	\includegraphics[scale=0.28]{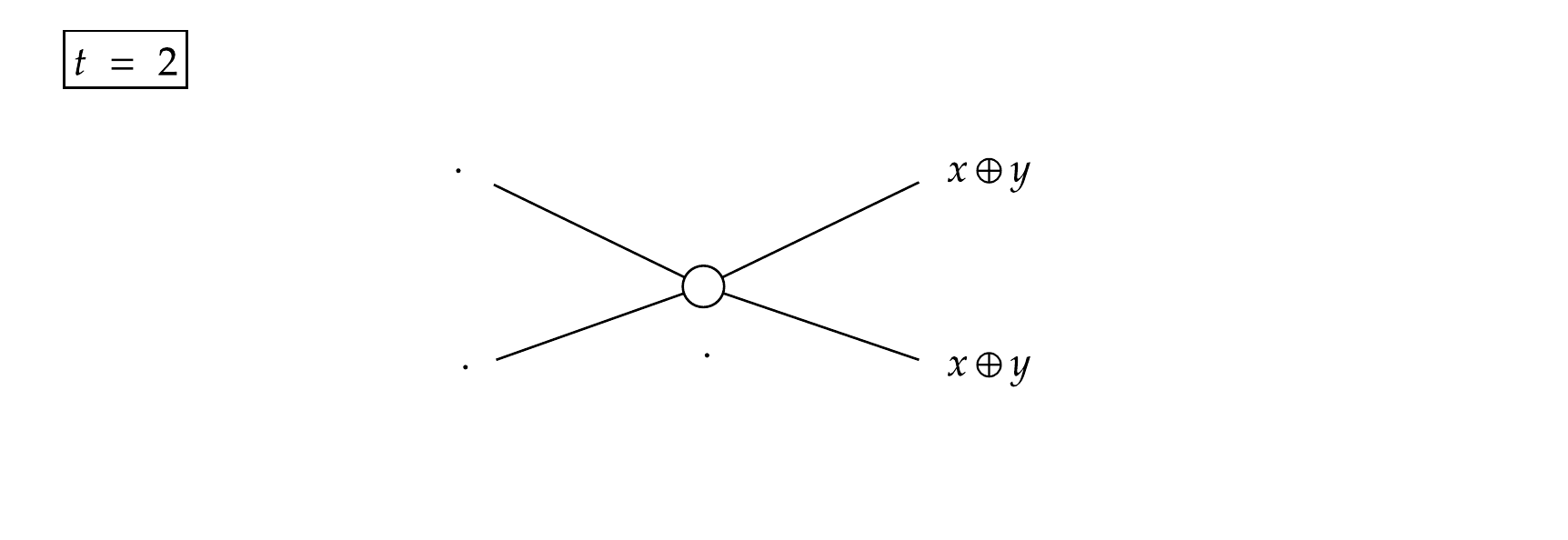}
	\caption{XOR gadget for Rule $1$. $x$ and $y$ nodes represent the inputs and the nodes in the left represent the output. Computation takes place in central node and output is read after $2$ time-steps.}
	\label{fig:XOR1}
\end{figure}
Now, we slightly modify this gadget in order to compute a NOR gate. This will allow us to simulate an arbitrary Boolean circuit.
\begin{prop}
	There is a Rule $1$ automata network $\mathcal{A} = (G =(V,E),\mathcal{F})$ such that for some $i_1, i_2, o_1, o_2 \in V$ and a configuration $y \in \{0,1\}^{|V|}$,  we have that $F^2(z)_{o_j} = \textbf{NOR}(z|_{i_1},z|_{i_2})$ for $j=1,2$ and some perturbation $z$ of $y$ in $i_1$ and $i_2$.
\end{prop}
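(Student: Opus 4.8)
The plan is to reuse the XOR gadget of the previous proposition and break its input symmetry with one extra permanently‑active node. In Figure~\ref{fig:XOR1} the central node $c$ sees the two inputs $x,y$ (and the outputs, which start inactive), and Rule~$1$ makes $c$ fire exactly when the number of active neighbours equals one; one step later $o_1$ and $o_2$ copy $c$. If we adjoin to $c$ a single new neighbour $h$ and force $h$ to be active at the initial time, the active‑neighbour count at $c$ is shifted up by one, so $c$ now fires exactly when \emph{neither} input is active; that is, $c$ computes $\textbf{NOR}(x,y)$, and $o_1,o_2$ again copy it at time $2$. This is also the source of the functional completeness of Rule~$1$, since $\textbf{NOR}$ is a universal gate.

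Concretely I would let $G$ be the graph of Figure~\ref{fig:XOR1} together with one new vertex $h$ joined only to the central node $c$, keep every local rule equal to Rule~$1$, put $i_1,i_2$ at the vertices labelled $x,y$ and $o_1,o_2$ at the two output vertices, and take $y \in \{0,1\}^{|V|}$ to be the all‑$0$ configuration with the single modification $y_h = 1$. For an input pair in $\{0,1\}^2$ let $z$ be the corresponding perturbation of $y$ at $i_1,i_2$; since $h,o_1,o_2 \notin \{i_1,i_2\}$ we have $z_h = 1$ and $z_{o_1}=z_{o_2}=0$ at time $0$, and every other non‑input vertex is $0$.

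Then the verification is a two‑step trace. The neighbourhood $N_c$ consists of $i_1,i_2,h$ and the output vertices, of which only $h$ (and possibly the inputs) are active at time $0$, so $\sum_{u \in N_c} z_u = z_{i_1} + z_{i_2} + 1$; hence $F(z)_c = 1$ iff $z_{i_1} = z_{i_2} = 0$, i.e.\ $F(z)_c = \textbf{NOR}(z|_{i_1}, z|_{i_2})$. Exactly as in the XOR gadget, at time $1$ each $o_j$ has $c$ as its only possibly‑active neighbour, so $F^2(z)_{o_j} = F(z)_c = \textbf{NOR}(z|_{i_1}, z|_{i_2})$ for $j = 1,2$, which is the claim. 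I would include a figure identical to Figure~\ref{fig:XOR1} with the pinned vertex $h$ drawn next to $c$, giving the three snapshots.

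The one thing to be careful about — and the reason the statement speaks of a configuration $y$ rather than a fixed point — is that $h$ is never required to \emph{stay} active: the value read at time $2$ depends only on $c$ at time $1$, hence only on $N_c$ at time $0$, so $y_h = 1$ is all that is needed, even though $h$ itself drops to $0$ at time $1$. (If one later wants to chain such gadgets, $h$ can instead be realised as one endpoint of a pendant edge, which is a genuine stable pair of $1$'s under Rule~$1$; the partner vertex is not adjacent to $c$, so the count above is unaffected.) Thus the main obstacle is not conceptual but bookkeeping: one must check that no vertex other than $h$ contributes an active state to $N_c$ at time $0$ and that the outputs faithfully copy $c$ — both inherited from the already‑verified structure of Figure~\ref{fig:XOR1}.
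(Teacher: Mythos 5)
Your construction is exactly the paper's: the NOR gadget of Figure~\ref{fig:NOR1} is the XOR gadget with one extra node attached to the central vertex and initialised to $1$, which shifts the neighbour count at the centre so that it fires only when both inputs are $0$, and the outputs copy it at time $2$. Your write-up is in fact more explicit than the paper's (which only says the extra node ``blocks'' incoming signals), and your remark about stabilising $h$ via a pendant edge is a sensible aside, but the approach is the same.
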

\begin{proof}
	Figure \ref{fig:NOR1} shows the gadget that defines  $\mathcal{A} = (G =(V,E),\mathcal{F})$.  We identify the nodes on the left labelled by $x$ and $y$ as inputs $i_1$ and $i_2$ and the ones on the right in state $0$ as $o_1$ and $o_2$. Note that the configuration  $y$ sets all the nodes in state $0$ with exception of the node in the upper part, connected to the node in the central part of the graph, which is in state $1$. Note that this node blocks every signal coming from the inputs (every assignation for $x$ and $y$), forcing the outputs to change to $1$ in $2$ times steps only in the eventuality in which $x=y=0$. Thus, gadget in Figure \ref{fig:NOR1} computes a NOR gate. The result holds.
\end{proof}
\begin{figure}[!tbp]
\centering
\includegraphics[scale=0.28]{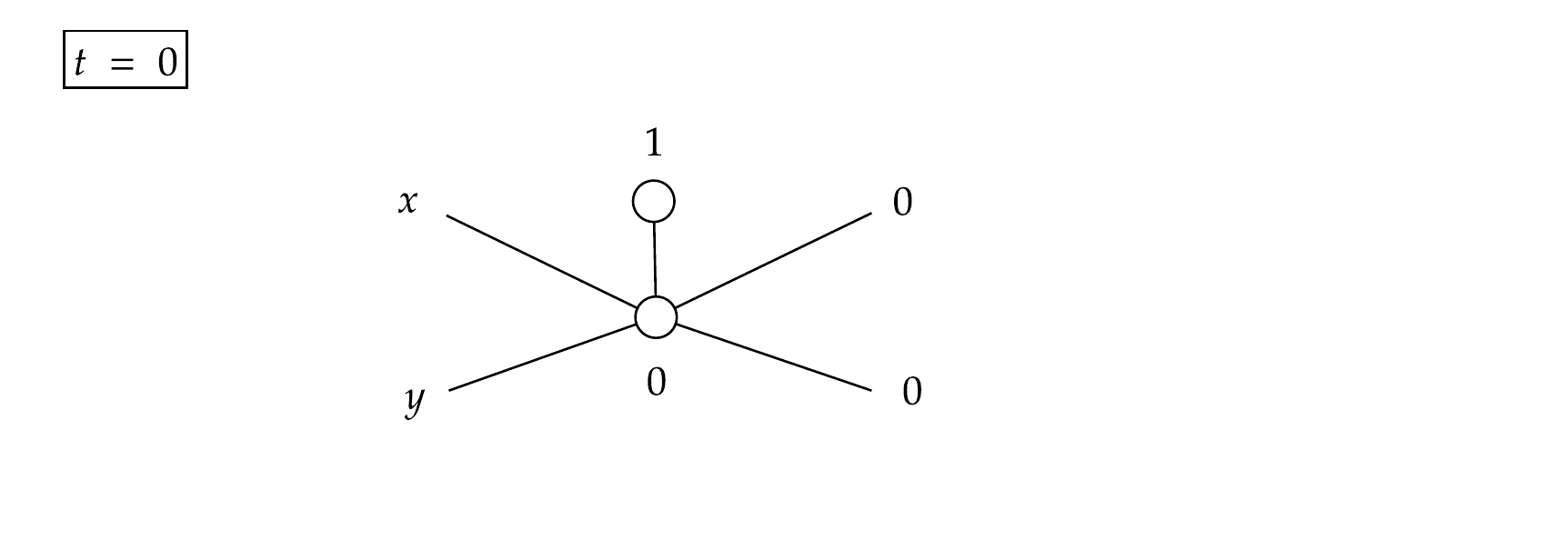}
\includegraphics[scale=0.28]{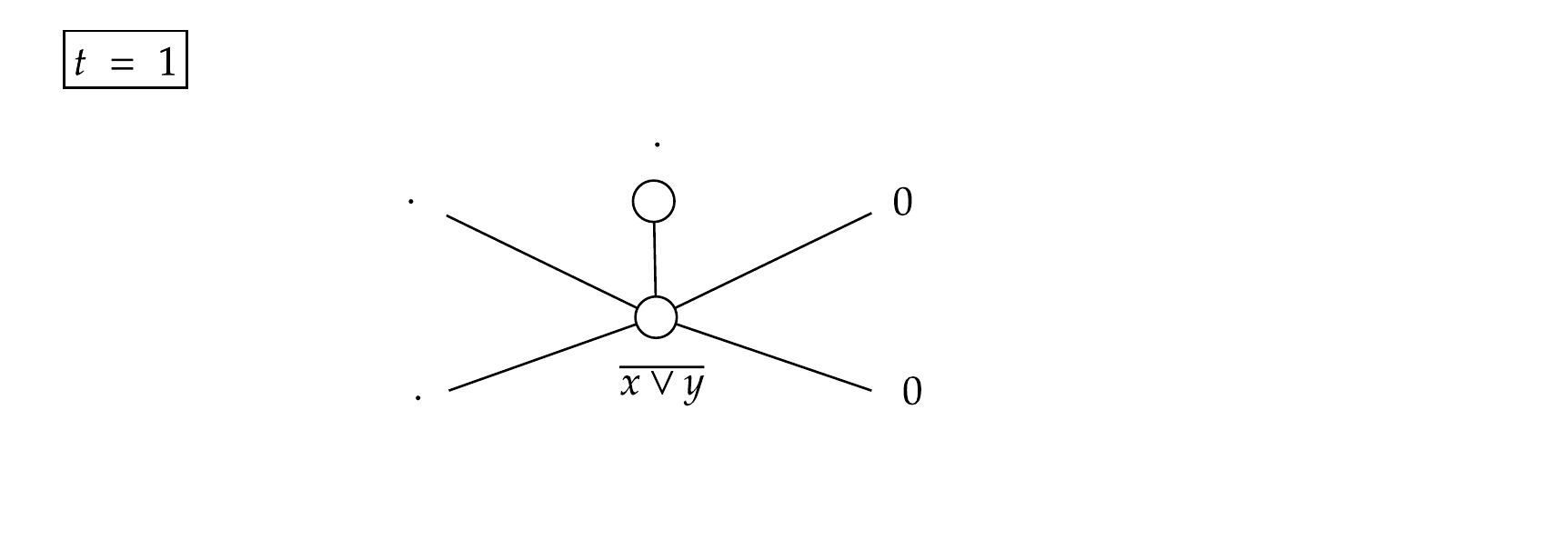}
\includegraphics[scale=0.28]{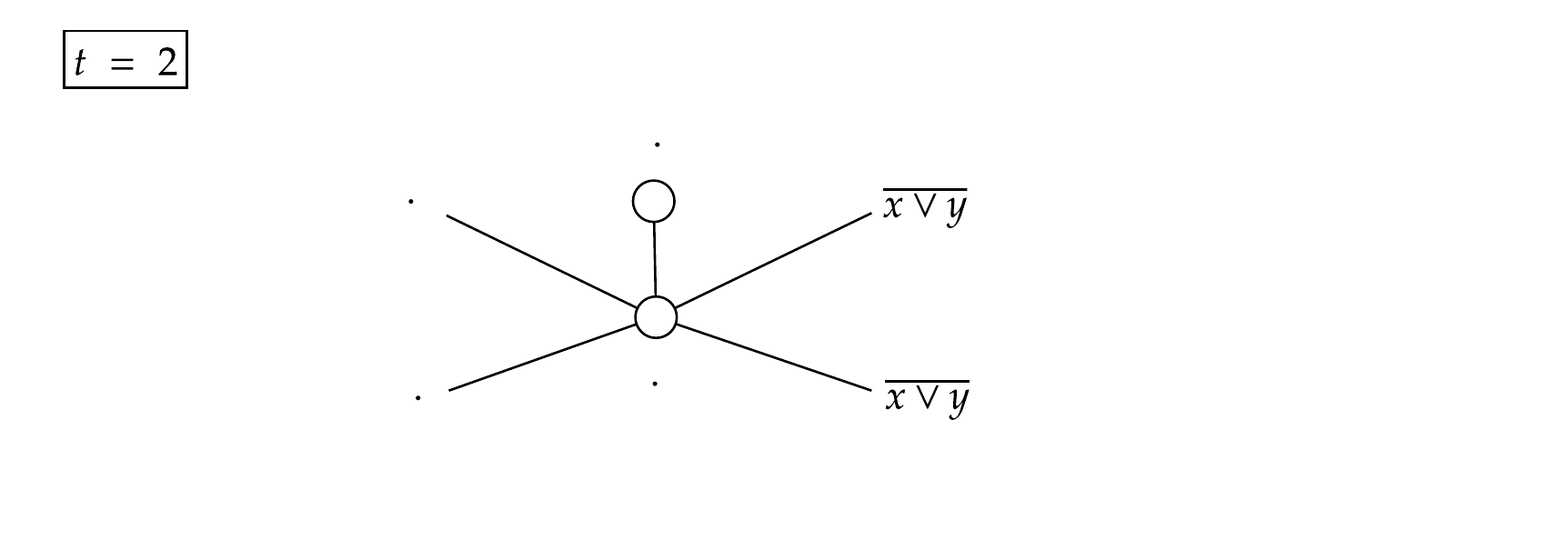}
	\caption{NOR gadget for Rule $1$. $x$ and $y$ nodes represent the inputs and the nodes in the left represent the output. Computation takes place in central node and output is read after $2$ time-steps.}
	\label{fig:NOR1}
\end{figure}
Now, we need to synchronise different NOR gadgets in order to connect them. This will allow us to simulate arbitrary Boolean circuits. In order to fulfil this task, we need first to introduce the $d$-wire gadget. This latter gadget is completely analogous to the clock gadget developed for isolated rules in Figure \ref{fig:clockiso}.
\begin{prop}
	There is a Rule $1$ automata network $\mathcal{A} = (G =(V,E),\mathcal{F})$ such that for some nodes $i$ and $o$ we have that $F^d(y)|_{o} = y|_i$ for any perturbation in $i$ of the steady-state $\vec{0}$.
\end{prop}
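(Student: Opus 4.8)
The plan is to construct an explicit path-like gadget that transmits a signal with delay exactly $d$, in direct analogy with the clock gadget for isolated rules (Figure \ref{fig:clockiso}). First I would take $G$ to be a simple path $i = v_0 - v_1 - v_2 - \cdots - v_d = o$ of length $d$, with every local rule being Rule $1$ (a node becomes active iff exactly one neighbour is active). The natural candidate for the steady state is $\vec 0$, which is trivially a fixed point since no node has an active neighbour. The claim to verify is that starting from the perturbation $y$ that sets $y_i \in \{0,1\}$ and $y_v = 0$ for all $v \neq i$, one has $F^d(y)_o = y_i$.

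The key step is an induction on the time step showing that the perturbation propagates as a single ``token'' moving one edge per step: at time $s$ (for $0 \le s \le d$) the configuration $F^s(y)$ has $v_s$ in state $y_i$ and every other path node in state $0$. The base case $s=0$ is the definition of $y$. For the inductive step, suppose at time $s$ only $v_s$ carries the value $y_i$ (and if $y_i = 0$ the whole configuration is $\vec 0$, which is fixed, so the statement is immediate in that case). If $y_i = 1$: node $v_{s+1}$ has exactly one active neighbour ($v_s$), so it becomes $1$; node $v_{s-1}$ has exactly one active neighbour ($v_s$) as well — this is the one delicate point — so one must check the endpoints and the back-propagation carefully. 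To avoid an unwanted backward-travelling signal one should instead take the graph to be a \emph{directed} transmission structure, or, staying within the undirected totalistic setting, attach to each $v_k$ a small ``absorbing'' appendage so that $v_{k-1}$ does not see exactly one active neighbour once the token has passed; concretely one can thicken the wire exactly as in the clock gadget of Figure \ref{fig:clockiso}, where each path node already has enough fixed neighbours that the arithmetic of Rule $1$ forces the signal forward and kills it behind. With that fix the induction goes through: at time $d$ the token sits at $v_d = o$, giving $F^d(y)_o = y_i$, and all other nodes have returned to $0$ (the fixed-point configuration), which is exactly what is needed for later composition of gadgets.

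The main obstacle I anticipate is precisely the endpoint/back-propagation bookkeeping for Rule $1$ on an undirected graph: a plain path does not work because the ``exactly one active neighbour'' condition is symmetric, so a naive construction would send the signal both forward and backward and also re-excite already-visited nodes. The resolution is to reuse the auxiliary gadget design already employed for the clock gadget — adding stable side-structures so that interior path nodes have the right fixed background count and the forward-moving configuration is the only one consistent with Rule $1$ — and then the delay-$d$ claim reduces to the same one-token-per-step induction. Once this is in place, the statement follows immediately with $t = d$.
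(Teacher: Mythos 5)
Your construction starts in the right place (a path $i = v_0, v_1, \ldots, v_d = o$ with Rule $1$ everywhere, perturbing the fixed point $\vec{0}$ at $i$), and this is essentially the paper's $d$-wire gadget of Figure \ref{fig:wire1}. But the proposal then talks itself out of the correct proof and into a gap. You claim a plain path ``does not work'' because the exactly-one-active-neighbour condition is symmetric, and you propose to repair this either by directing the edges (not available in this undirected totalistic setting) or by thickening the wire ``exactly as in the clock gadget of Figure \ref{fig:clockiso}''. That fix is never carried out, and its natural instantiation does not transfer: the isolated-rule clock gives each wire node $\alpha-1$ permanently active clique neighbours so that one incoming signal reaches the isolated value $\alpha$, and for Rule $1$ this prescribes \emph{zero} auxiliary active neighbours, i.e.\ no thickening at all. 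Moreover, permanently active background structures are delicate for Rule $1$, since a node with two active neighbours switches \emph{off}. So the ``one token per step, everything behind returns to $0$'' invariant you rely on is neither true on the plain path (at $t=2$ the configuration is $1\,0\,1\,0\cdots$, so there is genuine trailing debris) nor established for any concrete modified gadget. As stated, the key step of your induction is missing.

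The gap is easy to close, because the statement only asserts $F^d(y)_o = y_i$, not that the wire is clean. On the plain path one proves by induction on $t$ the weaker light-cone invariant: $F^t(y)_{v_t} = y_i$ and $F^t(y)_{v_s} = 0$ for all $s > t$. Indeed the neighbours of $v_{t+1}$ are $v_t$, which carries $y_i$ at time $t$, and $v_{t+2}$, which is ahead of the light cone and hence in state $0$; so $v_{t+1}$ sees exactly one active neighbour if and only if $y_i = 1$, giving $F^{t+1}(y)_{v_{t+1}} = y_i$, while every $v_s$ with $s > t+1$ still has both neighbours at $0$. Taking $t = d$ yields $F^d(y)_o = y_i$ regardless of the backward-travelling reflections, which stay strictly behind the leading edge. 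This is the ``directly from the definition of Rule $1$'' argument the paper intends; your back-propagation worry is a legitimate one, but it concerns the later composition of wires into the delayed NOR gadget, not this proposition.
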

\begin{proof}
	$d$-wire gadget is shown in Figure \ref{fig:wire1}. The result holds directly from the definition of rule $1$.
\end{proof}
\begin{figure}[!tbp]
\centering
	\includegraphics[scale=0.5]{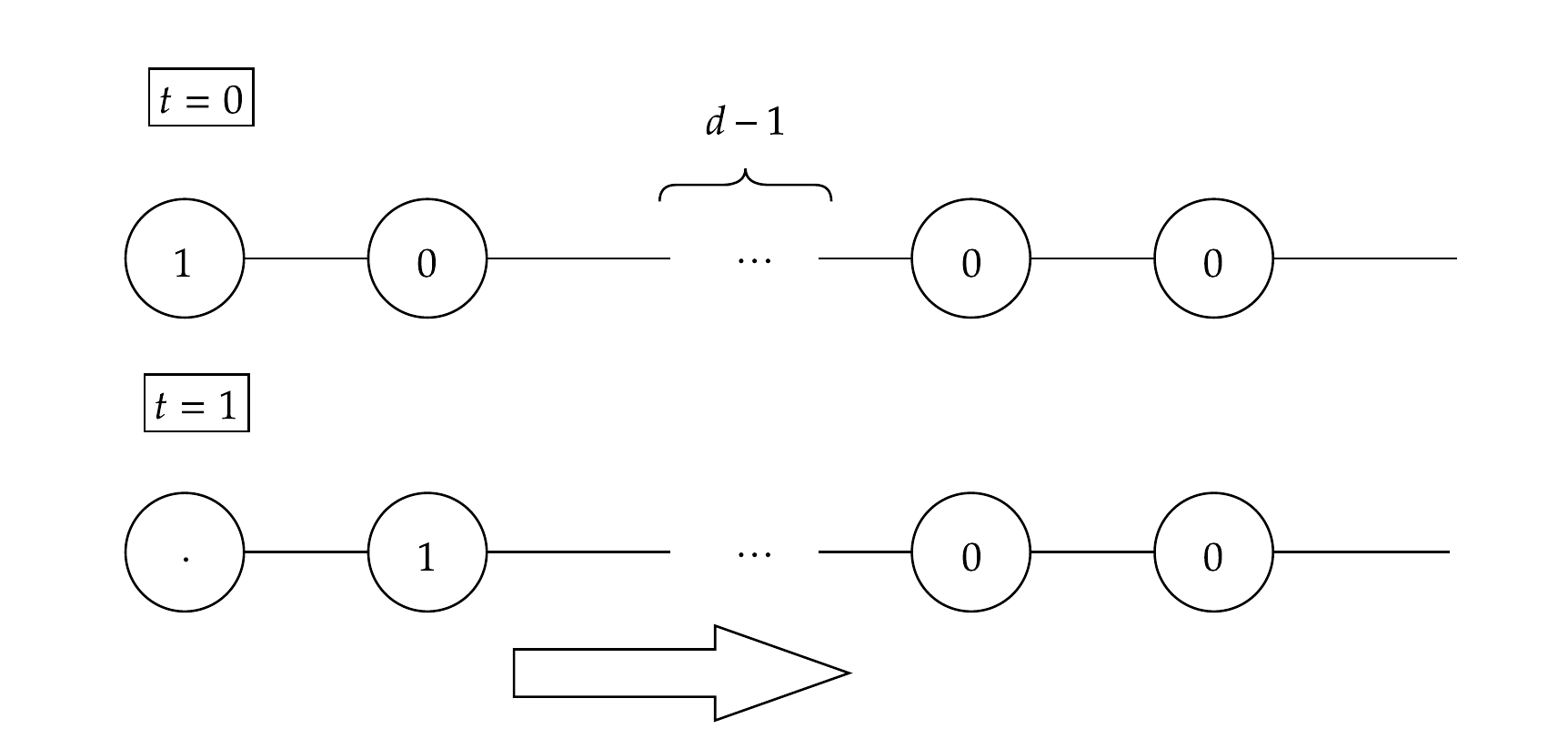}
	\caption{$d$-wire gadget for Rule $1$.}
	\label{fig:wire1}
\end{figure}
Now, we show a general gadget which computes a NOR gate in some given time $t = d$ taking as output an arbitrary state defined for the dynamics of an arbitrary Rule $1$ automata network, provided that,  the dynamics of these automata networks must not locally perturb the computation of the NOR gadget, i.e. nodes connected to this latter gadget must remain in state $0$ until time of computation, more precisely, for $t =0,\hdots, d-1$ time steps.
\begin{prop}
	Let $\mathcal{A} = (G,\mathcal{F})$ and $\mathcal{A'} = (G',\mathcal{F}')$ be two rule $1$ automata networks such that for some $i \in V(G)$ and $i' \in V(G')$ the respective global rules $F$ and $F'$ are such that $F^s(x)_i = F'^s(x')_{i'} = 0$ for $s = 0,\hdots, d-1$. Then, there exist a rule $1$ automata network $\mathcal{A}_d = (G_d,\mathcal{F}_d)$ such that its global rule $F_d$ is satisfies that $(F_d^{d+2}(z))_{o_1} = \textbf{NOR}(F^{d}(z)|_i,F'^{d}(z)|_{i'}).$
\end{prop}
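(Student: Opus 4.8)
The plan is to assemble a ``clocked NOR gadget'' out of three ingredients that are already available: the NOR gadget constructed above (which produces, two steps after its two input terminals are set, the value \textbf{NOR} of those inputs at its output nodes), the $d$-wire gadget (which transports a perturbation unchanged along a path), and a one-line ``clock'' obtained from a $d$-wire whose far end is pinned to state $1$ by a stable pendant edge, so that its near end is $0$ for the first $d-1$ steps and switches on at time $d$. I would form $G_d$ as the disjoint union $G \sqcup G' \sqcup (\text{clock}) \sqcup (\text{NOR gadget})$, add two edges attaching $i \in V(G)$ and $i' \in V(G')$ to the two input terminals of the NOR gadget (through a short wire if a length adjustment is needed to hit the exact time $d+2$), and add an edge attaching the clock to the ``blocking'' node of the NOR gadget so that this node is $0$ until time $d$ and $1$ afterwards. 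The global initial configuration $z$ is taken to equal $x$ on $V(G)$, $x'$ on $V(G')$, and the steady configuration of each gadget elsewhere, together with the extra $1$-states needed to stabilise the pendant edges.

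Next I would run the dynamics in two phases. For $0 \le s \le d-1$ I claim, by induction on $s$, that: (i) the gadget is inert, i.e.\ the clock has not fired, the blocking node is $0$, every input terminal and every internal node of the NOR gadget is $0$, and $o_1$ is $0$; and (ii) the copies of $\mathcal{A}$ and $\mathcal{A}'$ embedded in $\mathcal{A}_d$ evolve exactly as in isolation, so that $F_d^s(z)_i = F^s(x)_i = 0$ and $F_d^s(z)_{i'} = F'^s(x')_{i'} = 0$. Part (ii) uses the hypothesis together with (i): by (i) the gadget node newly adjacent to $i$ stays in state $0$ throughout this window, so under rule $1$ the sum over $N_i \cup\{\text{new neighbour}\}$ equals the sum over $N_i$, and the update at $i$ is unchanged (symmetrically at $i'$); this closes the induction. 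At time $d$ the clock fires, the blocking node turns to $1$, and simultaneously $i$ and $i'$ carry $F^d(x)_i$ and $F'^d(x')_{i'}$; from that moment the gadget behaves precisely as the plain NOR gadget fed with these two values, so by the unclocked result $o_1$ carries their \textbf{NOR} exactly two steps later, i.e.\ $(F_d^{d+2}(z))_{o_1} = \textbf{NOR}(F^d(z)|_i, F'^d(z)|_{i'})$.

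The main obstacle is establishing item (i) for \emph{all} of $s = 0, \dots, d-1$ simultaneously, not merely ``eventually''. Unlike a copy rule, rule $1$ depends on its neighbourhood in an XOR-like fashion, so a single $1$ placed anywhere on a wire propagates in both directions; one must therefore fix the length and phase of the clock line, and the node at which it feeds the blocking node, so that no back-travelling pulse can reach the NOR gadget's input terminals --- and hence $i$ or $i'$ --- before time $d$. Verifying that the pinned-$1$ clock line delivers its first $1$ at exactly time $d$ while keeping its source end stable, and that its attachment to the blocking node creates no spurious early activation, is the delicate bookkeeping; once that is in place, the remainder is the same case analysis over the four input patterns $00,01,10,11$ as in the proof of the unclocked NOR gadget.
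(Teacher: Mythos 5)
Your proposal is correct and follows essentially the same route as the paper: the paper's gadget (Figure \ref{fig:NORgen1}) is exactly your assembly of the plain NOR gadget with a $d$-wire delivering the enabling $1$ to the central node at time $t=d$, with the two subnetworks attached at $i$ and $i'$ and kept non-interfering by the hypothesis that these nodes stay in state $0$ until time $d$. You supply considerably more of the bookkeeping (the mutual induction showing the embedded copies of $\mathcal{A}$, $\mathcal{A}'$ evolve as in isolation while the gadget stays inert) than the paper, which simply points to the figure and invokes the unclocked NOR proposition.
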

\begin{figure}[!tbp]
\centering
	\includegraphics[scale=0.5]{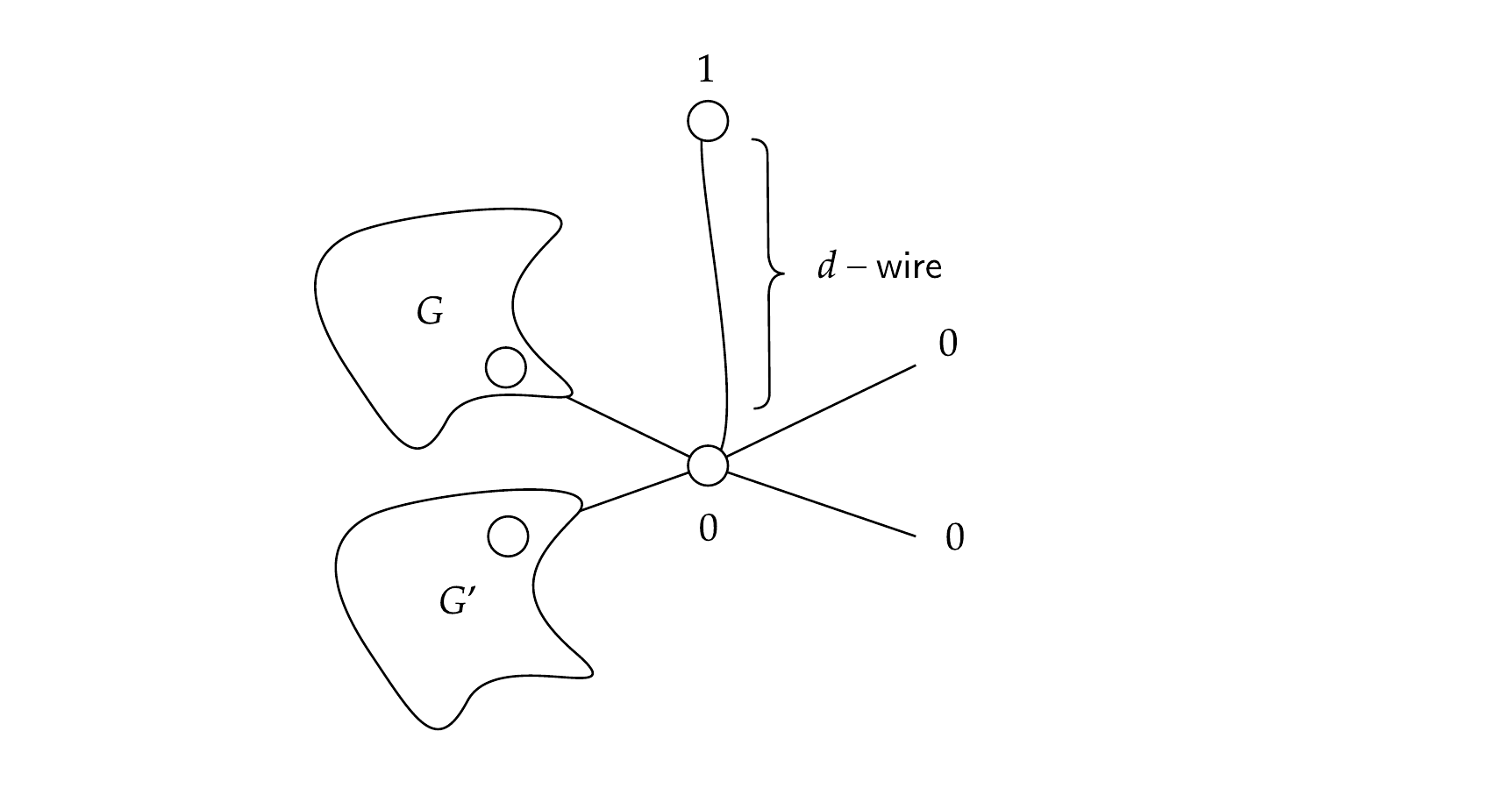}
	\caption{$d$-delayed NOR gadget for rule $1$.}
	\label{fig:NORgen1}
	\end{figure}
\begin{proof}
	We show the gadget $\mathcal{A}_d$ in Figure \ref{fig:NORgen1}. The result is the direct consequence of the last proposition and the fact that dynamics of $\mathcal{A}$ and $\mathcal{A'}$ do not perturb the dynamics of NOR gadget. The system remains in state $0$ with exception of the signal that travels through the $d$-wire gadget and arrives to the terminal node connected to central node of the gadget in time $t=d$. Computation takes place as shown in Figure \ref{fig:NOR1}. The result holds.
\end{proof}
We call the gadget in Figure  \ref{fig:NORgen1} a delayed NOR gadget. Finally, as a direct consequence of an analogous construction of the one we exhibited for isolated rules, we can show that we can simulate an arbitrary Boolean by correctly synchronizing different delayed NOR gates.
\begin{theo}
	Let $r,s \in \mathbb{N}$ and $f: \{0,1\}^r \to \{0,1\}^s$ a Boolean function. There exist a a bounded degree class of graphs $\mathcal{G}$ such that rule $1$ simulates $f$ in $\mathcal{G}$.
\end{theo}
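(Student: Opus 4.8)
The plan is to mimic the structure of the proof of Theorem~\ref{teo:isolated} (the isolated-rule case) essentially verbatim, substituting the Rule~$1$ gadgets for the isolated-rule gadgets. First I would fix a circuit $C_f$ representing $f$ and, by the standard normal-form reduction (as in \cite[Section 6.2]{greenlaw1995limits}), assume without loss of generality that every gate of $C_f$ is a NOR gate with fanin and fanout $2$, and that the gates are partitioned into layers indexed by longest-path distance to an input. Since NOR is functionally complete, this costs only a polynomial blow-up in size and depth, so the final graph will have $n = r^{\mathcal{O}(1)}$ nodes and the observation time will be $t = n^{\mathcal{O}(1)}$, matching the definition of simulation.

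Next I would build the graph $G_n \in \mathcal{G}$ by replacing each non-input gate $v$ in layer $k$ with a copy of the $d$-delayed NOR gadget of Figure~\ref{fig:NORgen1}, choosing the delay parameter $d = 2k$ (or whatever uniform per-layer delay makes the timing line up with the ``compute-in-$2$-steps'' behaviour of the basic NOR gadget of Figure~\ref{fig:NOR1}). The two inputs $i,i'$ of a layer-$k$ gadget are identified with the two output nodes $o_1,o_2$ of the gadgets feeding it from layer $k-1$ (the basic NOR gadget provides two synchronized copies of its output, so fanout $2$ is available), and the input gates of $C_f$ are identified with the free input nodes of the first-layer gadgets. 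The set $I$ is then these $r$ input nodes and $O$ is a choice of $s$ output nodes among the gadgets realizing the output gates of $C_f$. One must check $|V(G_n)| \le \sum_v |\text{gadget}(v)|$, which is linear in $|V(C_f)|$ and hence polynomial in $r$, and that $G_n$ has bounded degree (each gadget is a fixed finite graph and each identification merges at most a constant number of nodes, so degrees stay bounded by an absolute constant).

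The dynamical correctness argument proceeds by induction on the layer index, exactly as in Theorem~\ref{teo:thershold} and Theorem~\ref{teo:isolated}. I would set the initial configuration $x$ by putting every delayed-NOR gadget into the quiescent configuration guaranteed by the previous proposition (all nodes $0$ except the single blocking node in each basic NOR subgadget), and perturb the $r$ input nodes by $y$. The key invariant is: at time $2k$, the output nodes of every layer-$k$ gadget carry the value that gate computes in $C_f$ under input $y$, and moreover the layer-$k$ gadgets have not been perturbed before time $2(k-1)$ — this non-interference hypothesis is precisely what the hypothesis ``$F^s(x)_i = F'^s(x')_{i'} = 0$ for $s = 0,\dots,d-1$'' in the delayed-NOR proposition was designed to supply, with the $d$-wire gadget absorbing the slack so each gadget fires only once its clock pulse arrives. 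The base case is the first layer firing at time $t=2$ by the basic NOR gadget; the inductive step is immediate from the delayed-NOR proposition with $d=2k$. Hence at $t = 2\,\text{depth}(C_f) = r^{\mathcal{O}(1)}$ we read $f(y) = F^t(x)|_O$, and since every local rule used is Rule~$1$, the class $\{\text{Rule }1\}$ simulates $f$ in $\mathcal{G}$, proving the theorem.

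The main obstacle, as in the isolated case, is the bookkeeping of synchronization: one must verify that when a layer-$k$ gadget's clock pulse (from its $d$-wire) arrives, its two input nodes already hold the correct, stable layer-$(k-1)$ outputs, and that no spurious signal from a gadget has leaked into a neighbour before its scheduled computation time. This is not deep but it is the step where an off-by-one in the delay choice would break everything, so the delays $d=2k$ must be reconciled carefully with the constant ``$2$'' of the basic NOR gadget and the ``$d+2$'' of the delayed-NOR gadget; concretely one wants the layer-$k$ clock to go low exactly at the moment the layer-$(k-1)$ outputs become available, i.e. at time $2(k-1)$, which fixes the wire length. Given Theorem~\ref{teo:isolated} this is genuinely routine, so I would state it as ``completely analogous to the construction for isolated rules'' and only highlight the substitution of gadgets and the non-interference hypothesis being discharged by the delayed-NOR proposition.
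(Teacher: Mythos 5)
Your proposal is correct and follows essentially the same route as the paper: the paper's own proof of this theorem is literally the one line ``analogous to the proof of Theorem~\ref{teo:isolated},'' and what you have written is exactly the intended expansion of that analogy (NOR-normal-form circuit, one delayed NOR gadget per gate with a per-layer clock delay, induction on layers, with the non-interference hypothesis of the delayed-NOR proposition discharging the synchronization). Your explicit attention to reconciling the delay $d$ with the $2$-step latency of the basic NOR gadget is more care than the paper itself supplies.
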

\begin{proof}
	Proof is analogous to the proof of Theorem \ref{teo:isolated}.
\end{proof}
\paragraph{Rule 2}
For rule $2$ we show in Figure \ref{fig:NAND2} that it is capable of computing a NAND gate and we note that it is possible to use an analogous reasoning of the one of Theorem \ref{teo:isolated} in order to deduce that this rule can also simulate arbitrary Boolean networks. Particularly, we will need a clocked version of this gate that can be built by attaching three clock gadgets (see Figure \ref{fig:clockiso}) to the node that computes an AND gate from inputs (labelled by $x$ and $y$) in Figure \ref{fig:NAND2}. 
\begin{figure}[!tbp]
	\includegraphics[scale=0.45]{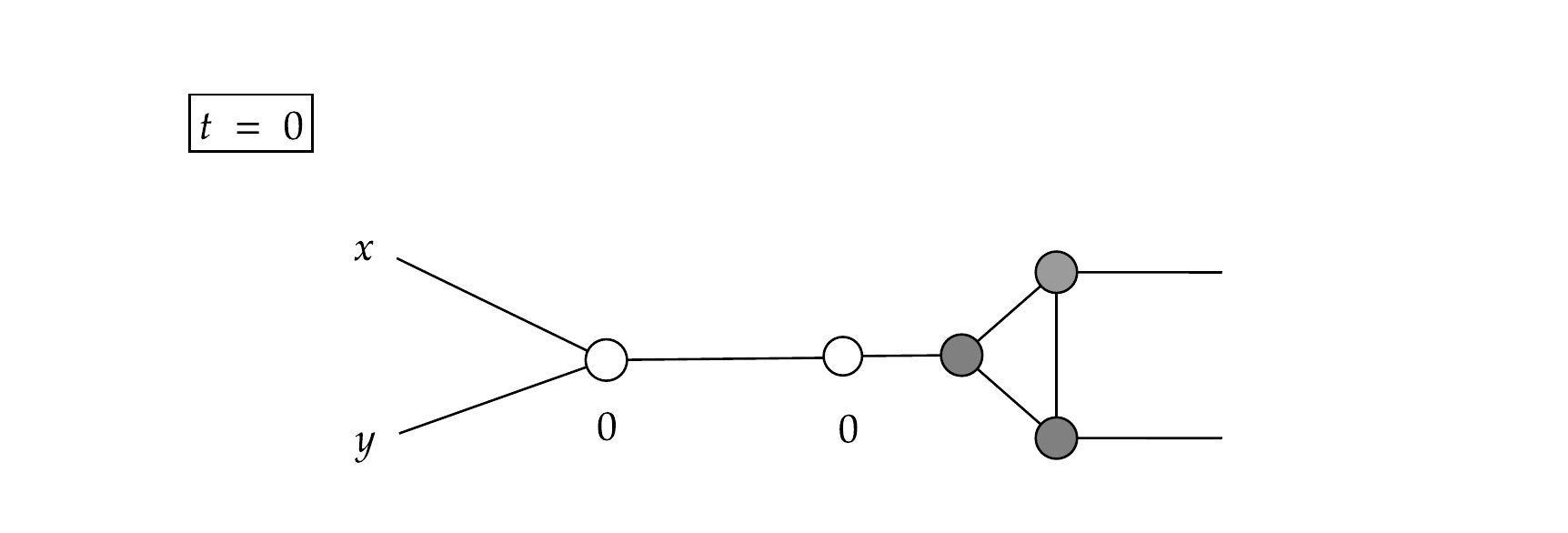}
	\includegraphics[scale=0.45]{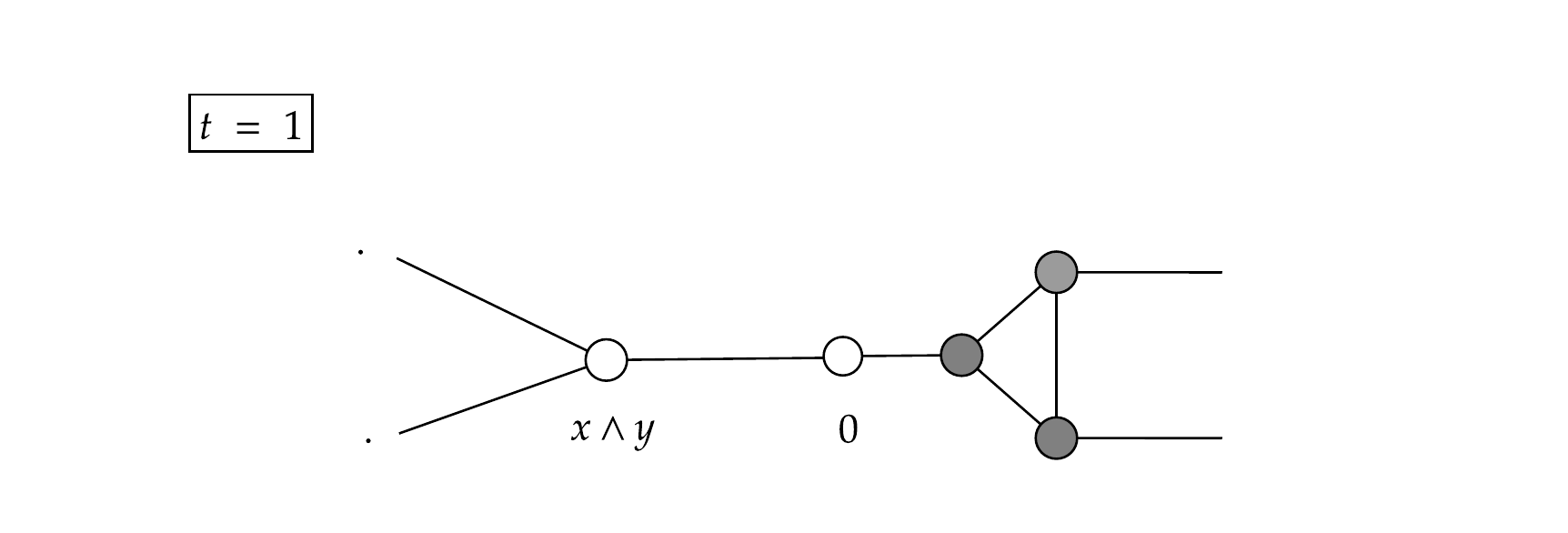}
	\includegraphics[scale=0.45]{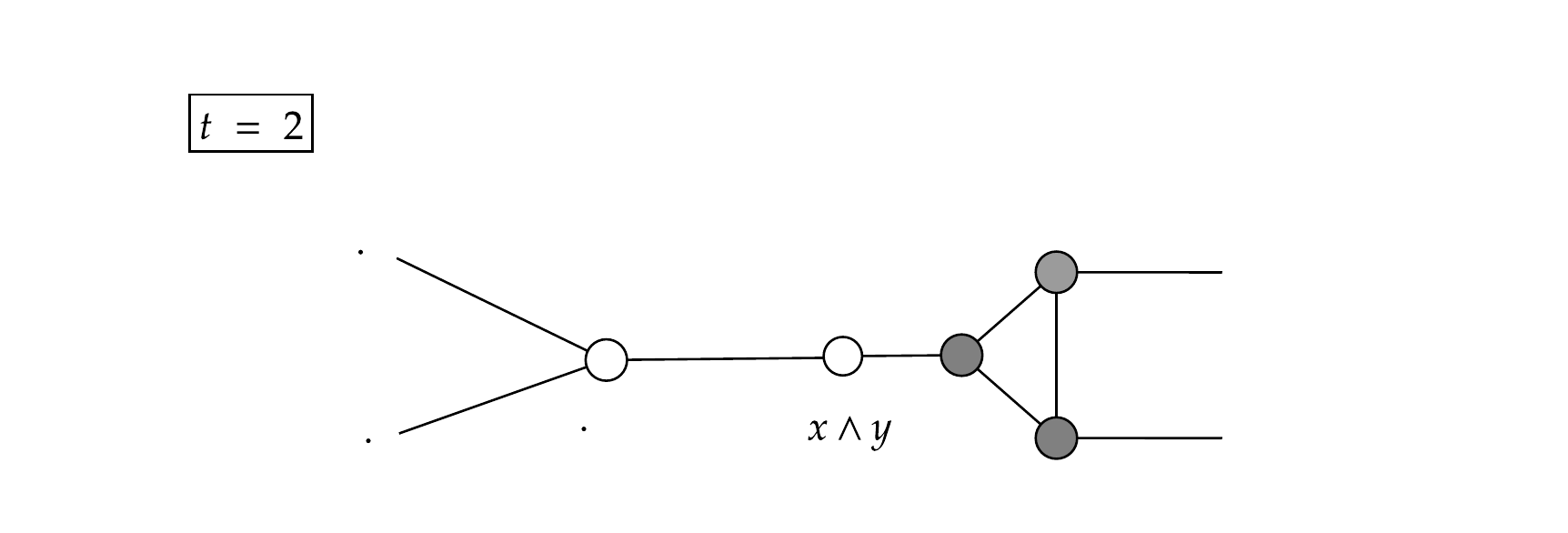}
	\includegraphics[scale=0.45]{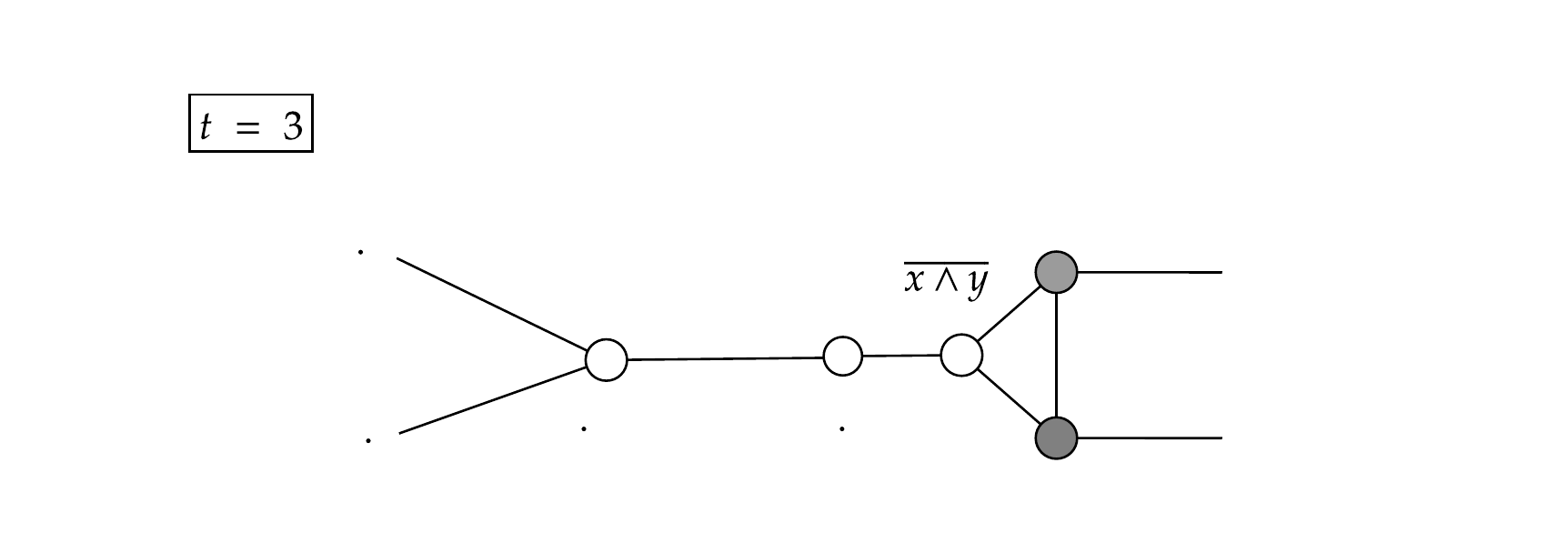}
	\caption{NAND gadget for Rule $2$. $x$ and $y$ nodes represent the inputs and the nodes in the left represent the output. Gray nodes are part of a clique (triangle) fixed in state $1$. Computation takes place in central node and output is read after $4$ time-steps.}
	\label{fig:NAND2}
\end{figure}

\subsection{Interval rules}

In this section we study the case in which the class of totalistic rules is defined by interval rules. In this particular sub-class of automata networks, active values are achieve by reaching an amount of active neighbours in a fixed interval $[\alpha,\beta]$. Of course we are assuming that $\beta$ is strictly less than the degree of each node in the interaction graph. We do this to avoid the case in which all the rules are simply threshold. Results of this section are completely analogous to the results related to isolated rules. In fact, we start by showing a NAND gadget, then we show a clock gadget and finally we show that we can use it to coordinate the evaluation of an arbitrary Boolean circuit.

First, we present a NOT gadget:
\begin{figure}[!tbp]
	\includegraphics[scale=0.35]{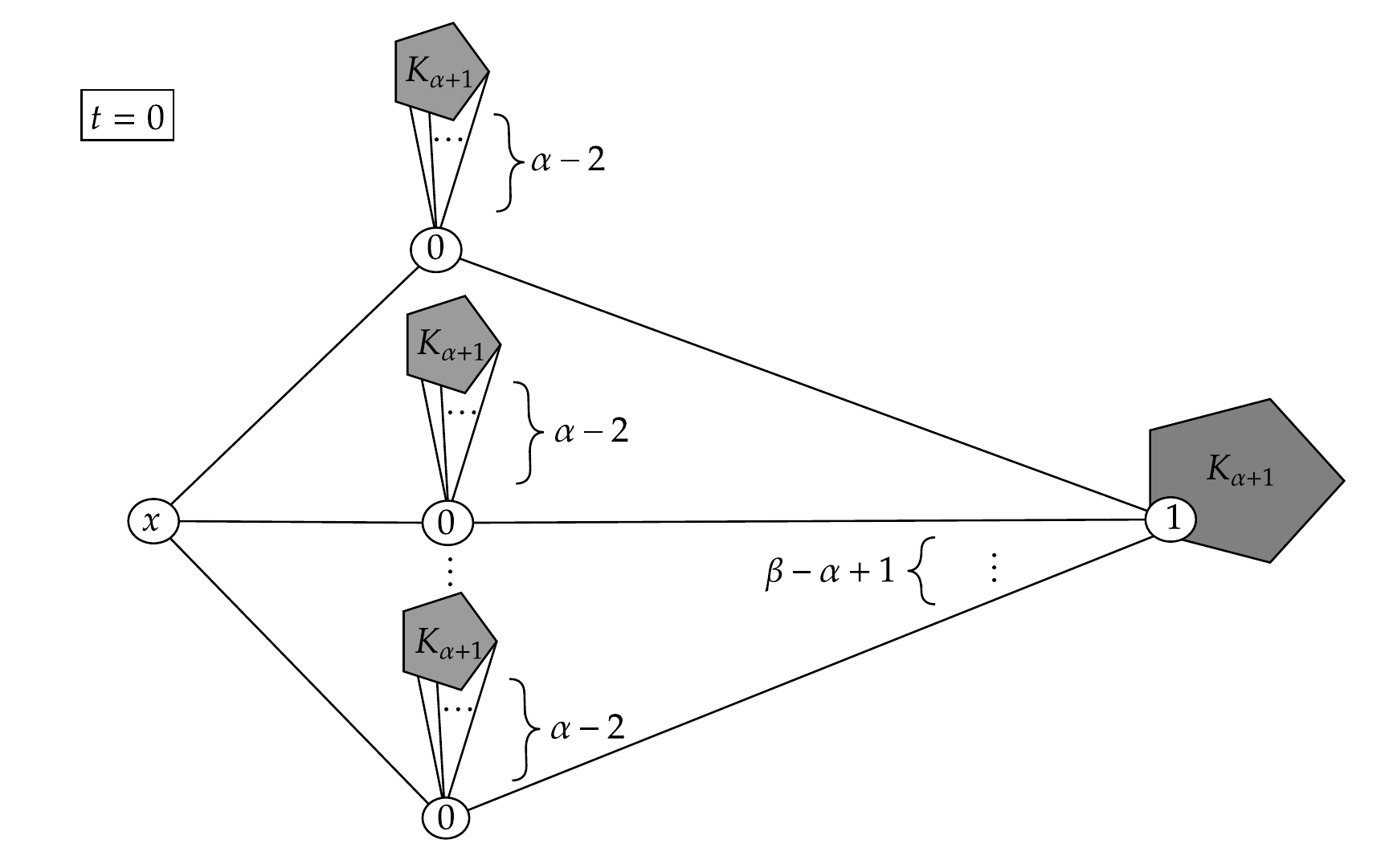}
	\includegraphics[scale=0.35]{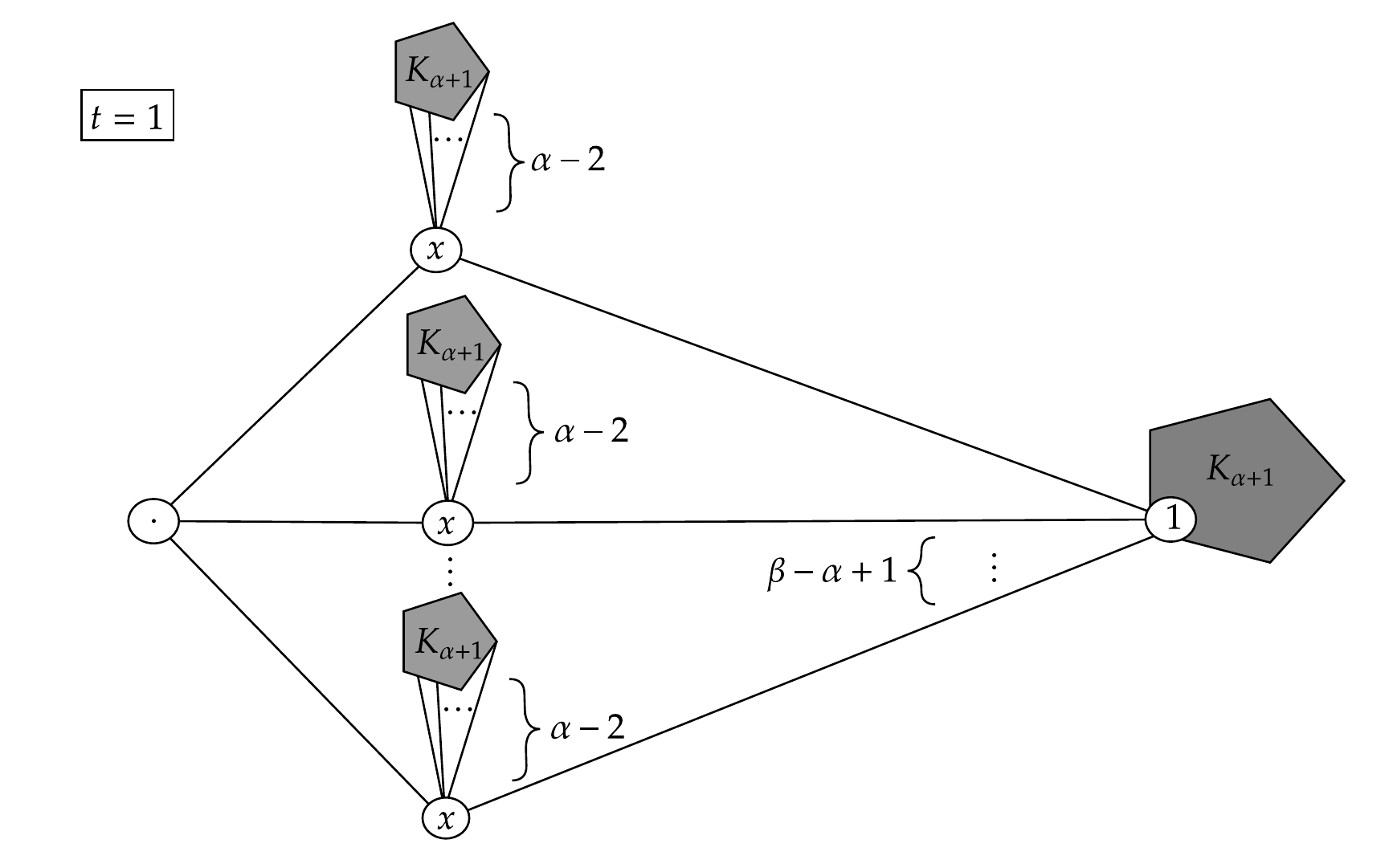}
	\includegraphics[scale=0.35]{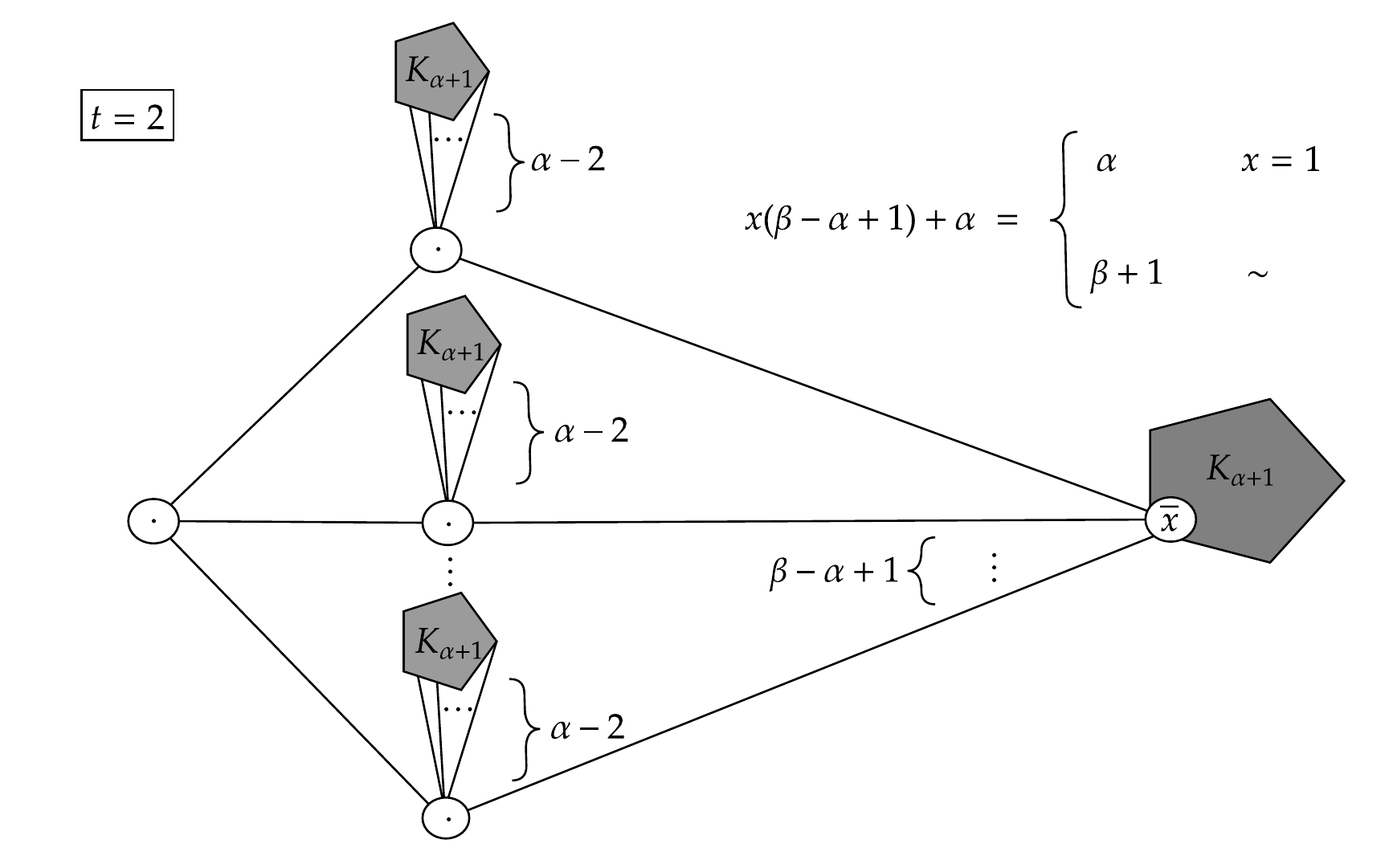}
	\caption{NOT gadget for Interval rules. $x$ and $y$ nodes represent the inputs and the nodes in the left represent the output.}
	\label{fig:NOTinter}
\end{figure}

\begin{lem}
	For each $\alpha, \beta$, $2\leq\alpha\leq \beta$, there is an automata network $\mathcal{A} =(G, \mathcal{F})$ in which every rule is interval with threshold $\alpha$ and $\beta$ and such its global rule $F$ satisfies $F^(x)_o = \overline{x_i}$ for some configuration $x \in Q^{|V(G)|}$.
\end{lem}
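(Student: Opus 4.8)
The plan is to exhibit a small gadget, in the spirit of the NAND gadget for isolated rules (Lemma~\ref{lemma:NAND}) and drawn in Figure~\ref{fig:NOTinter}, in which a designated output node $o$ reads exactly the negation of a designated input node $i$ after a constant number of synchronous updates, every vertex obeying the same interval rule $[\alpha,\beta]$. The key idea is \emph{bias and push}: attach to the computing node a block of vertices permanently in state $1$ so that, before the input is taken into account, the computing node already sees exactly $\beta$ active neighbours — the very top of its activation window $[\alpha,\beta]$ — and then let the single variable input either leave it inside the window (output $1$) or push it one unit above $\beta$, out of the window (output $0$).

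The first ingredient is a stable all-ones block. For the interval rule a vertex held in state $1$ is stable exactly when the number of its active neighbours lies in $[\alpha,\beta]$, so I would take a clique $K_{\beta+2}$, fix one of its vertices at $0$ and the other $\beta+1$ at $1$: each active vertex then sees exactly $\beta$ active neighbours (stable, since $\alpha\le\beta$) and the inactive vertex sees $\beta+1$ (stays $0$). Crucially every vertex of $K_{\beta+2}$ has degree $\beta+1>\beta$, so the standing hypothesis $\beta<\delta_v$ is respected; an all-zeros clique $K_{\beta+2}$ serves as a complementary stable ``zero block'' used solely to pad otherwise low-degree vertices up to degree $>\beta$.

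For the wiring I would join $o$ to the input $i$ and to $\beta$ active vertices of a one-block, and pad $i$ and $o$ with edges to zero-blocks so that all degrees exceed $\beta$. Let $x$ be the configuration that sets the distinguished vertices of the one-block to $1$, all other non-input vertices to $0$, and $i$ to the input value. Then at the first step $o$ sees $\beta+x_i$ active neighbours, so $F(x)_o=1$ iff $\beta+x_i\in[\alpha,\beta]$, i.e. iff $x_i=0$, which gives $F(x)_o=\overline{x_i}$ — the lemma. The two-step variant drawn in Figure~\ref{fig:NOTinter} is obtained by first copying $i$ into an auxiliary node $c$ (join $c$ to $i$ and to $\alpha-1$ active one-block vertices, so $F(x)_c=x_i$) and then letting $o$ negate $c$ as above, giving $F^2(x)_o=\overline{x_i}$; here one checks the one-block attached to $o$ is still intact at step $1$ because $o$ is inactive then.

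The arithmetic above is routine; the one point that needs care — and the reason this gadget is a touch heavier than its isolated analogue — is the blanket requirement $\beta<\delta_v$ for \emph{every} vertex. This forbids the obvious choice $K_{\beta+1}$ for the constant block (its vertices have degree exactly $\beta$) and also forbids leaving the input or an auxiliary node with tiny degree, which is exactly why the zero-block padding is introduced; one then has to verify that these padding edges, all incident to vertices that stay in state $0$ throughout the at most two relevant steps, never alter the active-neighbour counts used in the argument. Once the NOT gadget is available, a NAND gadget and a clock/wire gadget follow by the same bias-and-push construction exactly as in Figures~\ref{fig:NANDiso}--\ref{fig:clockiso}, and the layered synchronisation argument of Theorem~\ref{teo:isolated} then lifts this to the statement that interval rules with a fixed window $[\alpha,\beta]$ simulate every Boolean function over a suitable bounded-degree family of graphs.
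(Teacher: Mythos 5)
Your proposal is correct and follows essentially the same approach as the paper, whose entire proof is a pointer to the gadget in Figure~\ref{fig:NOTinter}: a node $o$ biased by a block of permanently active neighbours so that the single input either leaves its count inside $[\alpha,\beta]$ or pushes it past $\beta$. In fact your write-up supplies the verification the paper omits — stability of the constant blocks, the degree requirement $\delta_v>\beta$, and the timing of the two-step variant — with only the routine check on the zero-block padding (ensuring each padding vertex touches at most one potentially active node, so it never accumulates $\alpha\geq 2$ active neighbours) left implicit.
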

\begin{proof}
	See Figure \ref{fig:NOTinter}.
\end{proof}
Then, we combine the latter gadget with other structures in order to generate a NAND gadget.
\begin{lem}
	For each $\alpha, \beta$, $2\leq\alpha\leq \beta$, there is an automata network $\mathcal{A} =(G, \mathcal{F})$ in which every rule is interval with threshold $\alpha$ and $\beta$ and such its global rule $F$ satisfies $F^3(y)_{o_j} =  \textbf{NAND}(y|_{i_1},y|_{i_2})$
\end{lem}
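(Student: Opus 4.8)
The plan is to realise $\textbf{NAND}(u,v)=\neg(u\wedge v)$ by composing an explicit one‑step ``AND node'' with the NOT gadget furnished by the preceding lemma, inserting at most one relay node so that the overall delay is exactly three (matching the $3$‑step NAND gadget of the isolated case, which is what the later circuit simulation will want). First I would isolate the single‑node building blocks available to an interval rule with activation set $[\alpha,\beta]$. Equip every node of the gadget with enough pendant neighbours permanently in state $0$ — leaves, which are stable at $0$ since a degree‑one node has neighbour‑sum in $\{0,1\}\subseteq\{0,\ldots,\alpha-1\}$ because $\alpha\ge 2$ — so that its degree strictly exceeds $\beta$, as the definition of an interval rule demands. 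Then a node $c$ adjacent to the two input nodes $i_1,i_2$, to exactly $\alpha-2$ auxiliary nodes held in state $1$, and to padding leaves, has neighbour‑sum $(\alpha-2)+y|_{i_1}+y|_{i_2}\in\{\alpha-2,\alpha-1,\alpha\}$, which lies in $[\alpha,\beta]$ precisely when $y|_{i_1}=y|_{i_2}=1$; hence $c$ holds $y|_{i_1}\wedge y|_{i_2}$ after one step (when $\alpha=2$ no auxiliary $1$'s are needed). The same recipe with $\alpha-1$ auxiliary $1$'s gives a relay/wire node (active iff its single signal neighbour is active), and, when $\beta>\alpha$, the recipe with $\beta-1$ auxiliary $1$'s even gives NAND in one step; but to keep the delay uniform I would instead route the signal through AND.

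Second I would supply the auxiliary state‑$1$ nodes using copies of $K_{\alpha+1}$: inside such a clique every node has exactly $\alpha$ neighbours, so the all‑$1$ configuration on the clique is a fixed point of the interval rule, and I would attach each clique to the rest of the gadget so that its ``pump'' nodes are sampled only at the time steps during which they are still guaranteed to be in state $1$ — this is exactly the synchronisation bookkeeping already carried out for the isolated‑rule gadgets and is the one genuinely delicate point, most conspicuous when $\alpha=\beta$, where the interval has width zero and leaves no slack for a $1$‑node adjacent to a node that later flips (in that degenerate case one may alternatively note that the rule is isolated when $\alpha\ge 3$, hence covered by Theorem~\ref{teo:isolated}, and is Rule $2$ when $\alpha=2$, hence covered by the Rule $2$ gadget). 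I would then feed the output of the AND node — through one relay node, or directly, according to whether the NOT gadget of the preceding lemma operates in one or two steps — into the input node of that NOT gadget, so that $\neg(y|_{i_1}\wedge y|_{i_2})=\textbf{NAND}(y|_{i_1},y|_{i_2})$ appears at its output at time $t=3$; finally I would duplicate that output node into two twins $o_1,o_2$ with identical neighbourhoods, exactly as in the AND/OR gadgets of the threshold section, so that $F^3(y)_{o_1}=F^3(y)_{o_2}=\textbf{NAND}(y|_{i_1},y|_{i_2})$.

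To close, I would check that the configuration $x$ putting every clique in its all‑$1$ state and every other node in state $0$ is a fixed point of $\mathcal A$ (cliques stay $1$; leaves stay $0$; the AND node, the relay, and the internal and output nodes of the NOT gadget all sit at their quiescent values when the inputs are $0$), so that any $y$ agreeing with $x$ off $\{i_1,i_2\}$ is an admissible perturbation, and then verify the three input cases $(0,0)$, $(0,1)/(1,0)$, $(1,1)$ against the truth table of $\textbf{NAND}$ by following the signal through the gadget. I expect the main obstacle to be precisely the stability/timing argument of the second step: arranging cliques and wires so that the ``$1$'' supply is present exactly when each computing node samples it, while simultaneously keeping every node's degree above $\beta$ — the tight case $\alpha=\beta$ being the one that forces the most care (or admits the cleanest resolution, via reduction to the already‑settled isolated and Rule $2$ constructions).
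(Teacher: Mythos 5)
Your construction coincides with the paper's, which likewise obtains NAND for interval rules by composing an AND-computing node (supplied with permanent $1$'s by stable $K_{\alpha+1}$ cliques and padded to the required degree) with the NOT gadget of the preceding lemma and reading the result at $t=3$; the paper's own proof consists solely of Figure \ref{fig:NANDinter}, so your writeup is if anything more explicit, and the delicacy you flag at $\alpha=\beta$ (no slack for a clique node adjacent to a node that flips) is a real issue the paper glosses over, correctly dischargeable via the isolated-rule and Rule $2$ results as you propose. One small repair: pendant degree-one leaves contradict the standing assumption that every node's degree strictly exceeds $\beta$, so the padding should instead be done with a cluster of high-degree nodes held stably at $0$ (each with neighbour-sum below $\alpha$ at all times, e.g.\ each touching at most one gadget node), which changes nothing else in the argument.
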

\begin{figure}[!tbp]
	\includegraphics[scale=0.45]{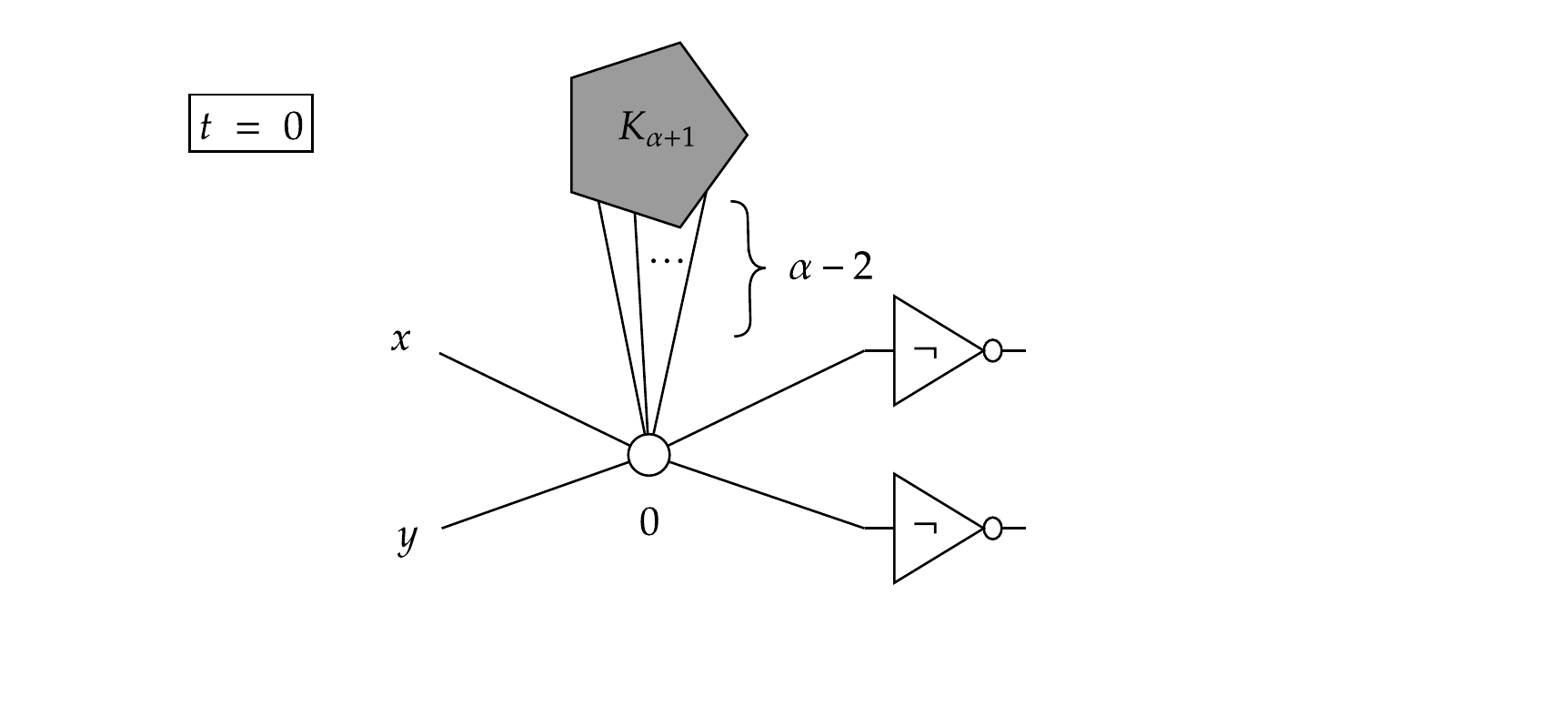}
	\includegraphics[scale=0.45]{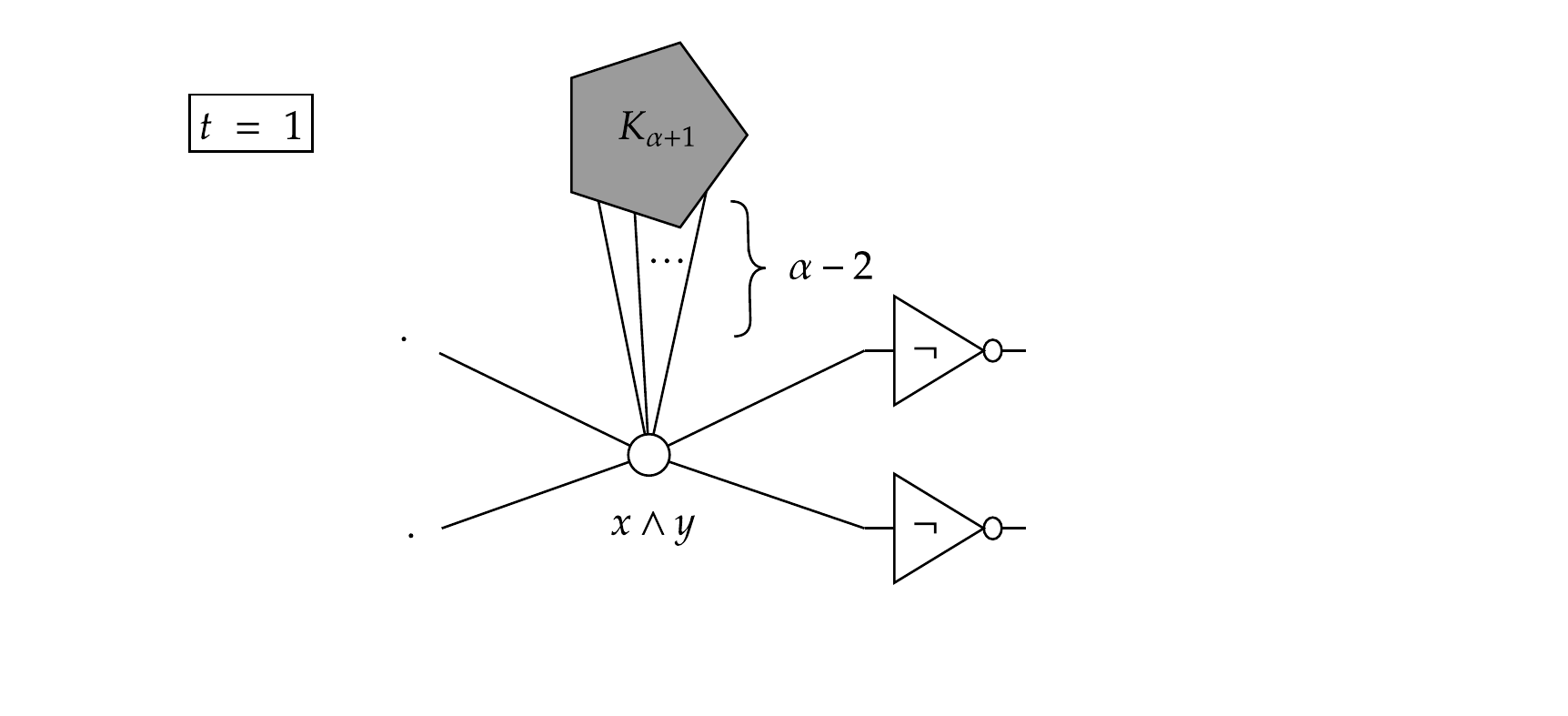}
	\includegraphics[scale=0.45]{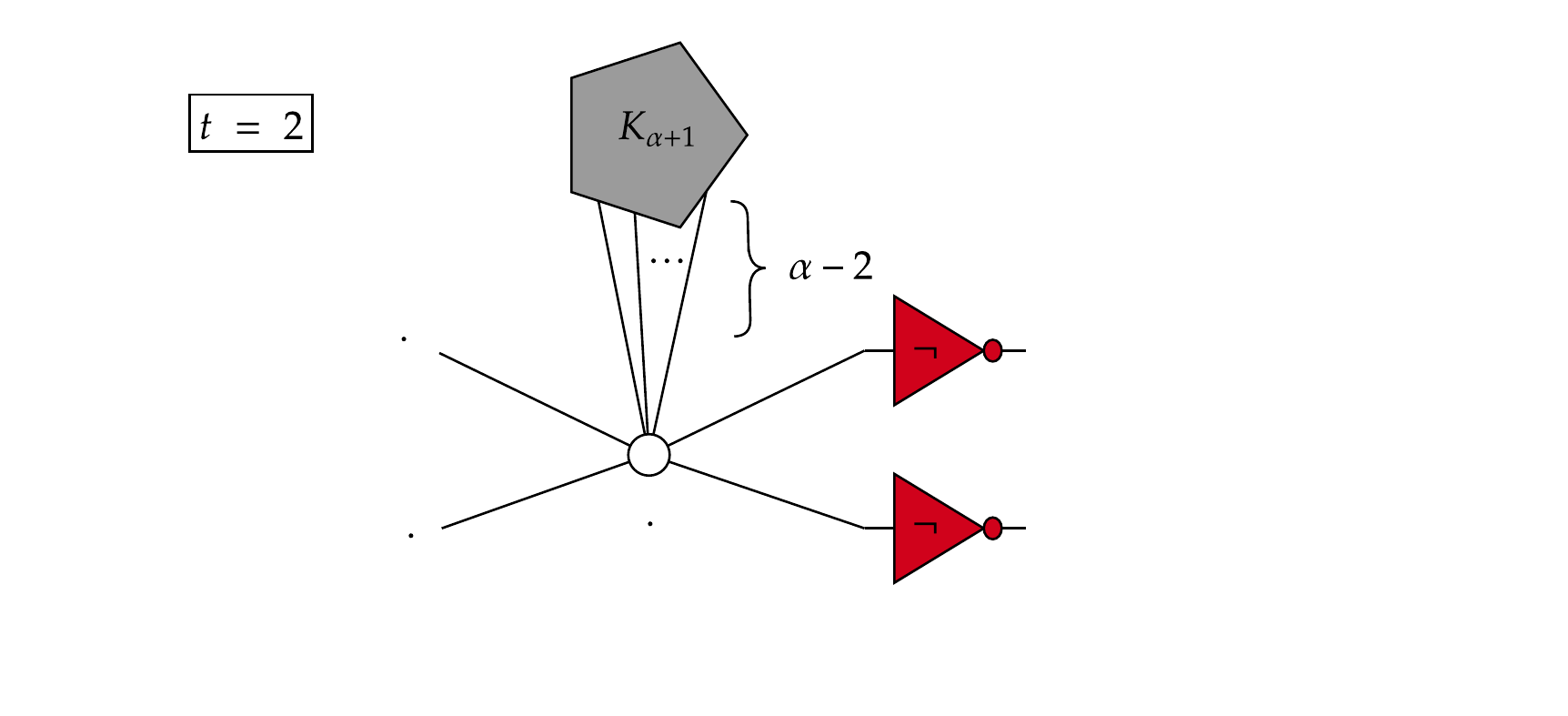}
	\includegraphics[scale=0.45]{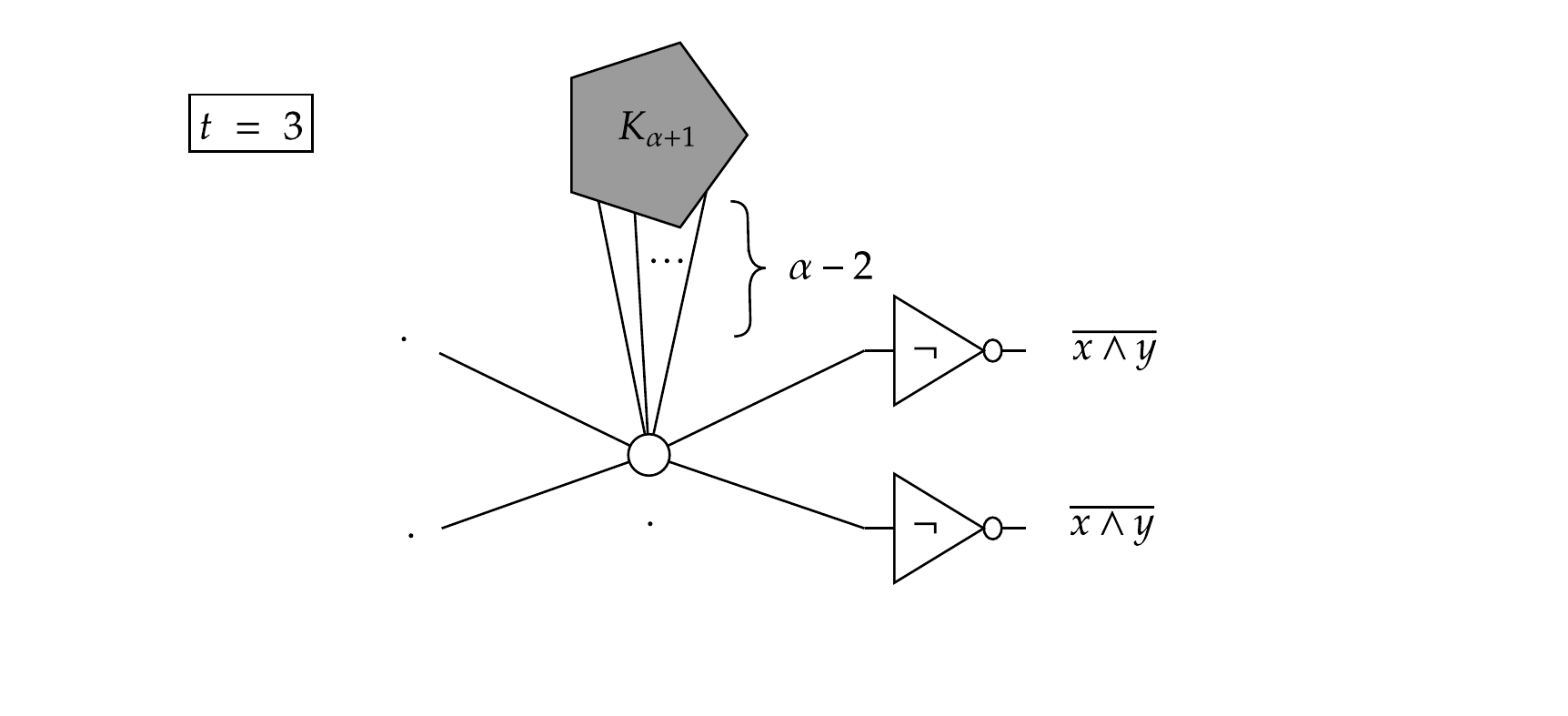}
	\caption{NAND gadget for Interval rules. $x$ and $y$ nodes represent the inputs and the nodes in the left represent the output.}
	\label{fig:NANDinter}
\end{figure}
We show now that we can generate a clock gadget from a wire gadget.
\begin{lem}
	For each $2\geq\alpha\geq\beta$ and $d \geq 1$ there is an automata network $\mathcal{A}_{\alpha,d}= (G_{\alpha,d},\mathcal{F}_{\alpha,d})$ such that every $f \in \mathcal{F}_{\alpha,d}$ is an interval rule with threshold $\alpha$ and $\beta$ and such that its global rule $F_C$ satisfies that there exists $o \in V(G)$: $F_C^s(x)_{o}  = 1$ for $0\leq s \leq d-1$ and $F_C^{d}(x)_{o} = 0$ for some  $x \in \{0,1\}^n$.
\end{lem}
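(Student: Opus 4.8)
The plan is to reproduce, for interval rules, the clock gadget constructed for isolated rules (Figure \ref{fig:clockiso}), adapting the two ingredients on which that construction rests: frozen all-ones subgraphs and a one-way ``wire''. First I would fix the interval analogue of the stabilising cliques. For interval thresholds $\alpha\le\beta$, the complete graph $K_{\beta+1}$ is a frozen all-ones component: every vertex has exactly $\beta$ active neighbours and $\beta\in[\alpha,\beta]$, so once all its vertices are active they remain so forever. A clique vertex that must also be joined to one external ``wire'' node will instead be taken in a clique $K_{\beta}$ (internal degree $\beta-1$), so that its neighbourhood sum is $\beta-1$ or $\beta$ according to the state of the external node, both inside $[\alpha,\beta]$ whenever $\alpha\le\beta-1$. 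These cliques play exactly the role of the $K_{\alpha+1}$'s in the isolated construction, and $o$ itself will be a vertex of such a clique.

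Second, I would build the wire: a path $v_0,v_1,\dots,v_{d-1}$ in which every node $v_i$ is, in addition to its path neighbours, joined to $\alpha-1$ port vertices of stabilising cliques. Starting from the configuration $x$ in which every stabilising clique is all-ones, every path node is $0$, and $v_0$ is perturbed to $1$, a quiet path node has neighbourhood sum $\alpha-1\notin[\alpha,\beta]$ and stays $0$, whereas a node whose predecessor is active has sum $\alpha\in[\alpha,\beta]$ and becomes active one step later; since an active node drops back to $0$ the next step (its sum returns to $\alpha-1$), the perturbation propagates as a single travelling cell at unit speed, reaching $v_{d-1}$ at time $d-1$. I would then join $v_{d-1}$ to the output node $o$, which as a full clique vertex is active at every step while the cell is away; when $v_{d-1}$ switches on, the neighbourhood sum at $o$ rises to $\beta+1$, which leaves $[\alpha,\beta]$ by the standing assumption $\beta<\delta_o$, so $F_C^{d}(x)_o=0$ while $F_C^{s}(x)_o=1$ for $0\le s\le d-1$ — precisely the claimed behaviour. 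Choosing the path length $d$ gives delay exactly $d$.

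The main technical point is the degree bookkeeping that keeps every vertex inside $[\alpha,\beta]$ along this trajectory: I must check that the travelling cell never makes two consecutive path nodes active at once (so the firing node always sees sum exactly $\alpha$ and never overshoots $\beta$), that the momentarily active path nodes do not destabilise their priming cliques (handled by the reduced internal degree of the port vertices above), and that nothing leaks out of a clique after $o$ flips (irrelevant here, since only $F_C^d(x)_o$ is asserted). The genuinely delicate case is $\alpha=\beta$, where $[\alpha,\beta]$ is a singleton and there is no slack at a port vertex; this I would treat separately, inserting one extra buffering layer between the wire and each clique so that no stabilising vertex ever sees a perturbed neighbour. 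Once these local invariants are verified the timing is immediate, and the gadget plugs into the circuit-simulation scheme of Theorem \ref{teo:isolated} exactly as the isolated clock gadget did.
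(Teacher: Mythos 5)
Your construction follows essentially the same route as the paper, which simply builds the clock by attaching the interval wire gadget (Figure \ref{fig:wireinter}) to a frozen clique in direct analogy with the isolated-rule clock of Figure \ref{fig:clockiso}; in fact you supply considerably more detail (the $K_{\beta+1}$/$K_{\beta}$ degree bookkeeping, the $\alpha=\beta$ caveat) than the paper's one-line proof does. The only minor inaccuracy is the claim that the perturbation travels as a single cell --- with interval rules the already-visited node $v_{i-1}$ can relight one step later, producing a backward echo --- but since signals propagate at most one edge per step this cannot affect the state of $o$ at times $0,\dots,d$, so the stated conclusion is unharmed.
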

\begin{proof}
	We use the  gadget from Figure \ref{fig:wireinter} to build a clock gadget. This is  analogous to Figure \ref{fig:clockiso}.
\end{proof}
\begin{figure}[!tbp]
	\centering
	\includegraphics[scale=0.5]{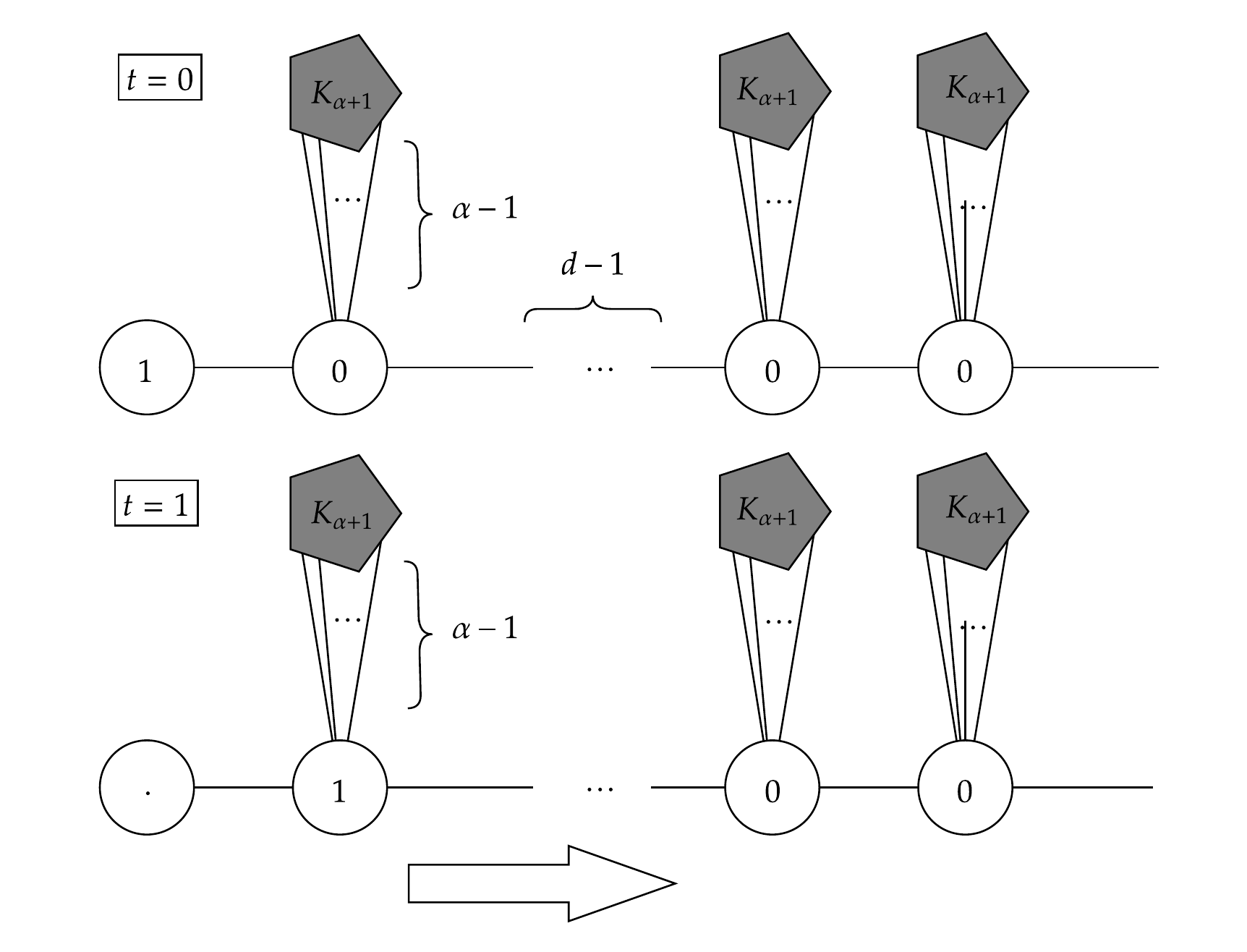}
	\caption{$d$-wire gadget for Interval rules.}
	\label{fig:wireinter}
\end{figure}
\begin{figure}[!tbp]
	\centering
	\includegraphics[scale=0.35]{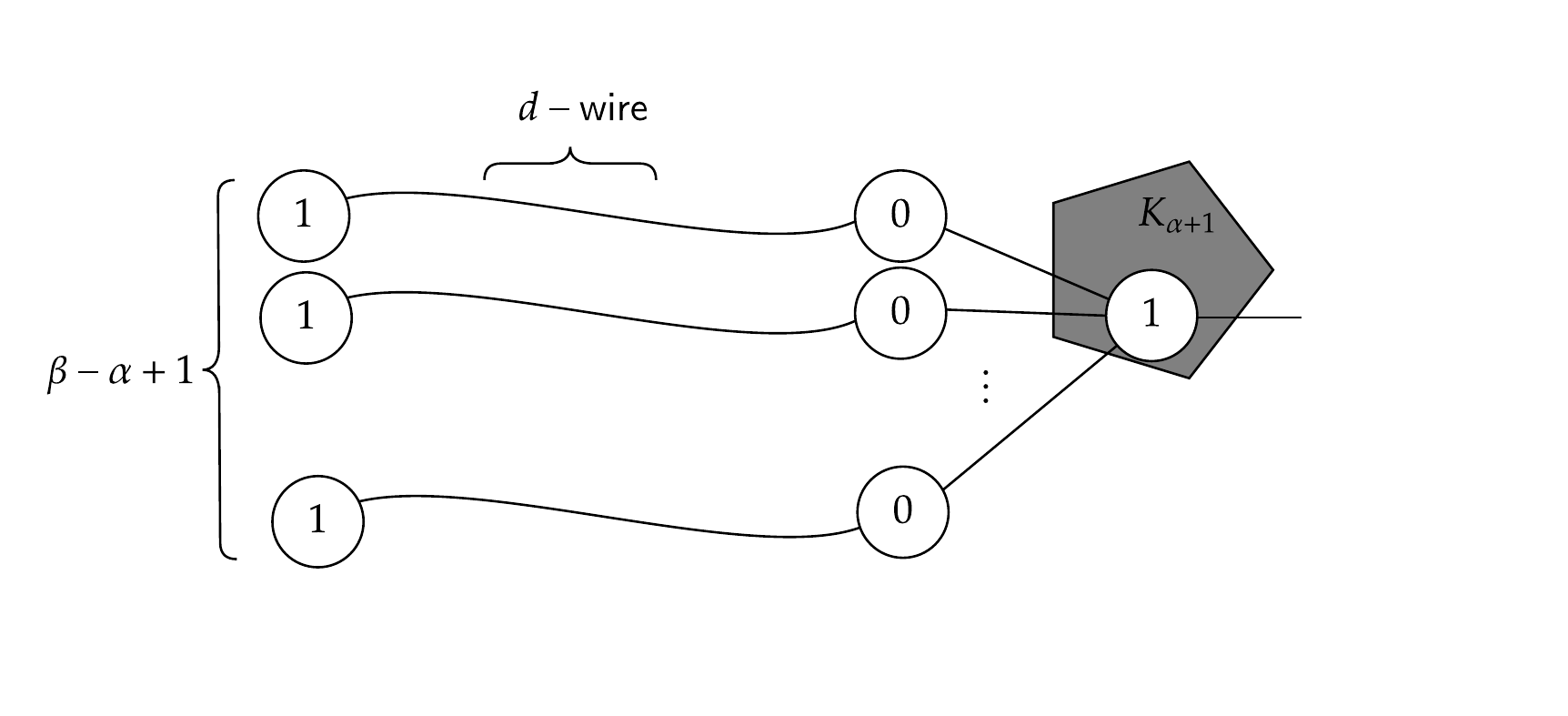}
	\caption{$d$-clock gadget for Interval rules.}
	\label{fig:clock}
\end{figure}
We now introduce a clocked-NAND gadget.
\begin{lem}
For each $2\geq\alpha\geq\beta$ and $d \geq 1$ there exists an automata network $\mathcal{A}_{\alpha,d}= (G_{\alpha,d},\mathcal{F}_{\alpha,d})$ such that every $f \in \mathcal{F}_{\alpha,d}$ is an interval rule with threshold $\alpha$ and $\beta$ and such that its global rule $F_{CN}$ satisfies that there exist $i_1,i_2,o_1,o_2 \in V(G)$ such that $F_{CN}^s(w)_{o_j}  = 1, j=1,2$ for $0\leq s \leq d-1$ and $F_{CN}^{d+3}(x)_{o_j} = \textbf{NAND}(F^d(w)|_{i_1},F^d(w)|_{i_2}), j=1,2$ for some $w \in \{0,1\}^n$
\end{lem}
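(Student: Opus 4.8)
The plan is to mimic the construction carried out for isolated rules, gluing together the two gadgets already at our disposal: the interval-rule NAND gadget of the previous lemma (Figure \ref{fig:NANDinter}) and the $d$-clock gadget (Figure \ref{fig:clock}). First I would take the NAND gadget $\mathcal{A}=(G,\mathcal{F})$ whose global rule satisfies $F^3(y)_{o_j}=\textbf{NAND}(y|_{i_1},y|_{i_2})$, identify $i_1,i_2$ with the input nodes labelled $x,y$ and $o_1,o_2$ with its two output nodes, and then attach the $d$-clock gadget so that its terminal node becomes an extra neighbour of the node at which the NAND computation is triggered. Because interval rules are sensitive on both sides of the window $[\alpha,\beta]$, I would also pad the neighbourhoods of the affected nodes with a controlled number of auxiliary vertices held in a fixed state (inside a clique, exactly as the grey $K$-components are used throughout the isolated-rule section), chosen so that every relevant totalistic sum lands inside or outside $[\alpha,\beta]$ precisely as intended.

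Next I would fix the initial configuration $w$ by concatenating the quiescent configuration of the NAND gadget with the configuration of the clock gadget, and additionally setting $w_{i_1}=w_{i_2}=1$, and then verify the claim in two temporal phases. For $0\le s\le d-1$ the clock's terminal node sits in state $1$; I would check that this single extra active neighbour pushes the trigger node's sum out of $[\alpha,\beta]$, so the NAND subcircuit is frozen and, by the way the gadget is wired, the outputs $o_1,o_2$ stay at $1$, giving $F_{CN}^s(w)_{o_j}=1$ for $j=1,2$. At time $s=d$ the clock's terminal node drops to $0$, after which the NAND gadget sees exactly the configuration it would see if freshly started with inputs $F^d(w)|_{i_1}$ and $F^d(w)|_{i_2}$; applying the earlier NAND lemma with a time shift of $d$ then yields $F_{CN}^{d+3}(x)_{o_j}=\textbf{NAND}(F^d(w)|_{i_1},F^d(w)|_{i_2})$ for $j=1,2$, as required.

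The step I expect to be the main obstacle is the degree/threshold bookkeeping. Unlike the isolated case, where a single clique $K_{\alpha+1}$ cleanly forces a node to be "always $1$", a two-sided activation window $[\alpha,\beta]$ forces us to keep the supporting nodes simultaneously above $\alpha$ and below $\beta+1$, and the clock's extra edge perturbs exactly this balance. I would resolve it by inserting just enough fixed-state auxiliary neighbours at the trigger node so that the sum equals $\beta+1$ while the clock is active and falls back into $[\alpha,\beta]$ once it is inactive; the hypothesis $\alpha\ge 2$ (resp.\ $\alpha\le\beta$) gives enough slack for this to always be arrangeable, so the remaining work is the routine, figure-level check of the exact vertex counts, analogous to the verification behind Figure \ref{fig:clockedNANDiso}.
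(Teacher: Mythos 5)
Your construction matches the paper's: the paper likewise builds the clocked NAND gadget by attaching the $d$-clock gadget to the trigger node of the interval NAND gadget (with fixed cliques padding the totalistic sums), freezing the computation while the clock node is active and releasing it at time $d$, exactly as in the isolated-rule case. Your write-up is in fact more explicit about the two-sided window bookkeeping than the paper, which simply points to the figure and declares it analogous to the isolated construction.
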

\begin{figure}[!tbp]
	\centering
	\includegraphics[scale=0.35]{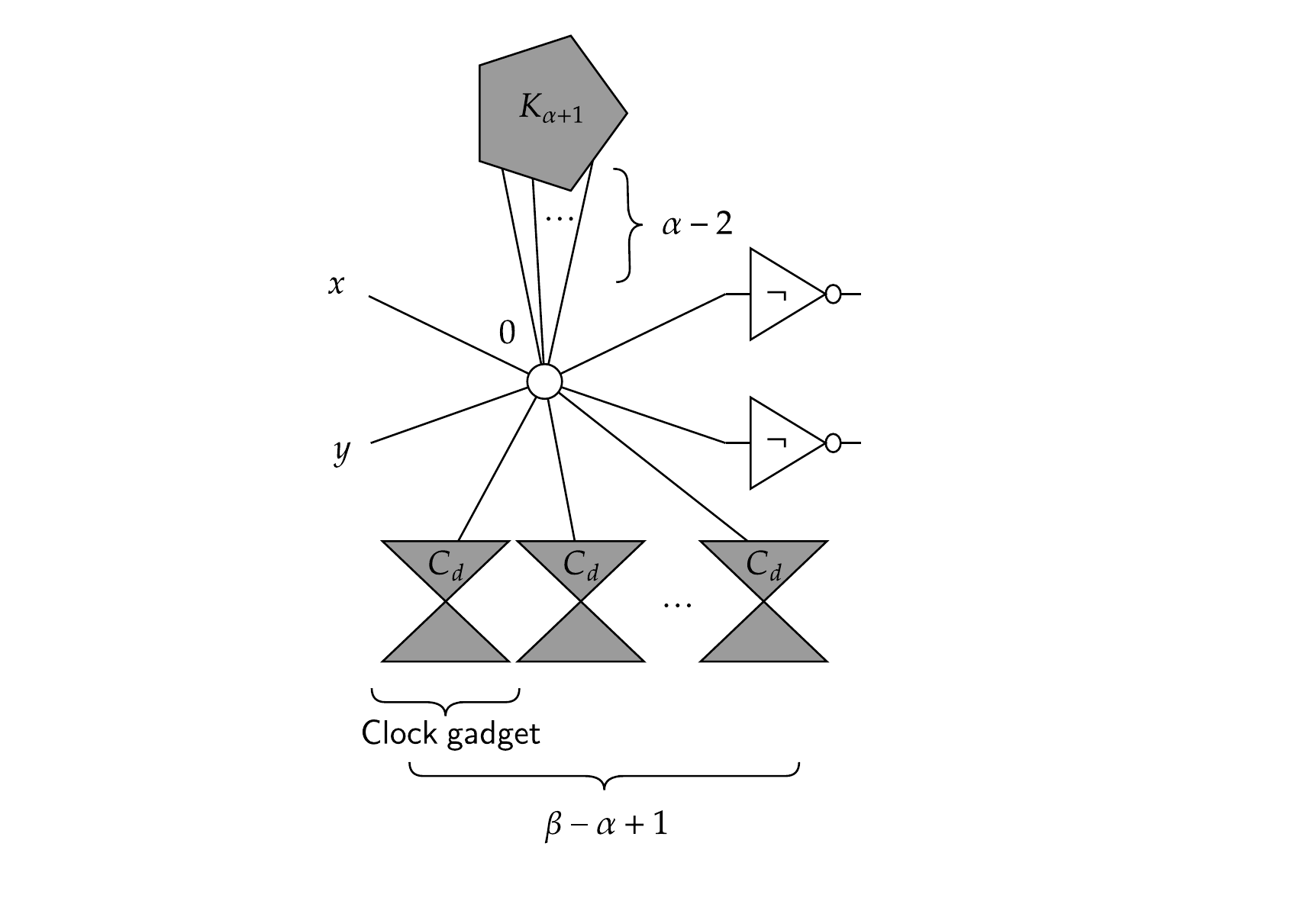}
	\caption{A clocked NAND gadget with delay $d$ for interval rules.}
	\label{fig:clockedNANDinter}
\end{figure}
\begin{proof}
	Gadget is shown in Figure \ref{fig:clockedNANDinter}. This latter gadget is  analogous to clocked NAND gadget for isolated rules (see Figure \ref{fig:clockedNANDiso}.)
\end{proof}
Now we can introduce the main result of the section:
\begin{theo}
	Let $r,s \in \mathbb{N}$ and $f: \{0,1\}^r \to \{0,1\}^s$ a Boolean function. For each $2\leq \alpha \leq \beta$ there exist a set of interval functions $\mathcal{F}$ with interval values [$\alpha$,$\beta$] and a bounded degree class of graphs $\mathcal{G}$ such that $\mathcal{F}$ simulates $f$ in $\mathcal{G}$.
	\label{teo:interval}
\end{theo}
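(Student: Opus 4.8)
The plan is to mirror, almost verbatim, the construction used for isolated rules in the proof of Theorem \ref{teo:isolated}, replacing the isolated-rule gadgets by their interval-rule counterparts built above. First I would fix a circuit $C_f$ computing $f$ and put it in normal form via the standard reductions (see \cite[Section 6.2]{greenlaw1995limits}): every gate is a NAND gate (NAND is a complete basis), fanin and fanout are bounded by $2$, and the gate set is partitioned into layers $1,\dots,\mathrm{depth}(C_f)$ according to the length of the longest path to an input gate. The two-output feature of the clocked NAND gadget (the nodes $o_1,o_2$ in Figure \ref{fig:clockedNANDinter}) is exactly what lets us realise fanout $2$ with no extra machinery.

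Next, for each non-input gate $v$ lying in layer $k$ I would instantiate a copy of the clocked NAND gadget of the previous lemma with delay $d=3k$; input gates of $C_f$ are represented directly by the (clock-free) input nodes of the first-layer gadgets. The graph $G$ underlying the automata network $\mathcal{A}_f$ is obtained by gluing these gadgets: the two output nodes of the gadget for $v$ are identified with input nodes of the gadgets of the (at most two) gates fed by $v$, and the output gates of $C_f$ are read off the output nodes of the corresponding gadgets. Since each gadget has a constant number of nodes and edges and $|V(C_f)| = r^{\mathcal{O}(1)}$, we get $|V(G)| = r^{\mathcal{O}(1)}$ and the maximum degree of $G$ is bounded by a constant depending only on $\alpha,\beta$; hence $G$ belongs to a fixed bounded-degree class $\mathcal{G}$, and every local rule is an interval rule with activation interval $[\alpha,\beta]$.

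The dynamical correctness is an induction on the layer index. I would let $x$ be the configuration placing every clocked NAND gadget in its ``waiting'' configuration $w$ (from the clocked NAND lemma), except that the input nodes in $I$ receive the assignment $z\in\{0,1\}^r$, and claim that at time $3k$ the output nodes of every layer-$k$ gadget carry the value $C_f$ assigns to the corresponding gate on input $z$. The base case is the behaviour of the clock-free first-layer gadgets, which compute NAND of their inputs in $3$ steps (the NAND lemma, Figure \ref{fig:NANDinter}). For the inductive step, the key point is that a layer-$k$ gadget remains frozen in $w$ until its clock fires at time $3k$ — this is the content of the clocked NAND lemma, and it rests on the central node seeing a total of active neighbours that stays inside $[\alpha,\beta]$ exactly while the clock node is active — and that once the clock fires, the gadget's input nodes already hold the layer-$(k-1)$ outputs, so it produces the correct NAND by time $3k+3 = 3(k+1)$.

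The main obstacle, exactly as in the isolated case, is verifying the non-interference condition: the signals travelling through wires, through clock gadgets, and through neighbouring gadgets must never prematurely push a waiting gadget out of its fixed configuration $w$. This is what forces the layer-dependent delays $d = 3k$, and checking it amounts to confirming that at every interface node the neighbourhood sum stays outside $[\alpha,\beta]$ throughout the waiting phase and lands inside $[\alpha,\beta]$ precisely at the intended step; this is routine but must be done interface by interface. Once the claim is established, taking $t = 3\,\mathrm{depth}(C_f) = r^{\mathcal{O}(1)}$ yields $f(z) = (F^t(x|_I))|_O$ for the input set $I$ and output set $O$ defined above, so by the definition of simulation $\mathcal{F}$ simulates $f$ in $\mathcal{G}$, which proves the theorem.
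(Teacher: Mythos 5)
Your proposal is correct and follows exactly the route the paper takes: the paper's own proof is a one-line remark that the argument is analogous to the isolated-rule case (Theorem \ref{teo:isolated}), using the interval-rule NAND, clock, and clocked-NAND gadgets with layer-dependent delays as depicted in Figure \ref{fig:circuitiso}. Your writeup simply makes explicit the layering, the $d=3k$ delays, the induction, and the non-interference check that the paper leaves implicit, so there is nothing to flag beyond noting that you have supplied more detail than the published version.
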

\begin{proof}
	Proof is  analogous to the proof o Theorem \ref{teo:interval}. See Figure \ref{fig:circuitiso} for a scheme of the network simulating an arbitrary Boolean circuit. 
\end{proof}
\begin{remark}
The case in which $\alpha = 1$ is  analogous to Rule $1$.  In fact, the same gadgets can be use to simulate arbitrary Boolean functions with the exception of delayed NOR gadget which is given in Figure \ref{fig:1intervalNOR}
\begin{figure}[!tbp]
\centering
	\includegraphics[scale=0.5]{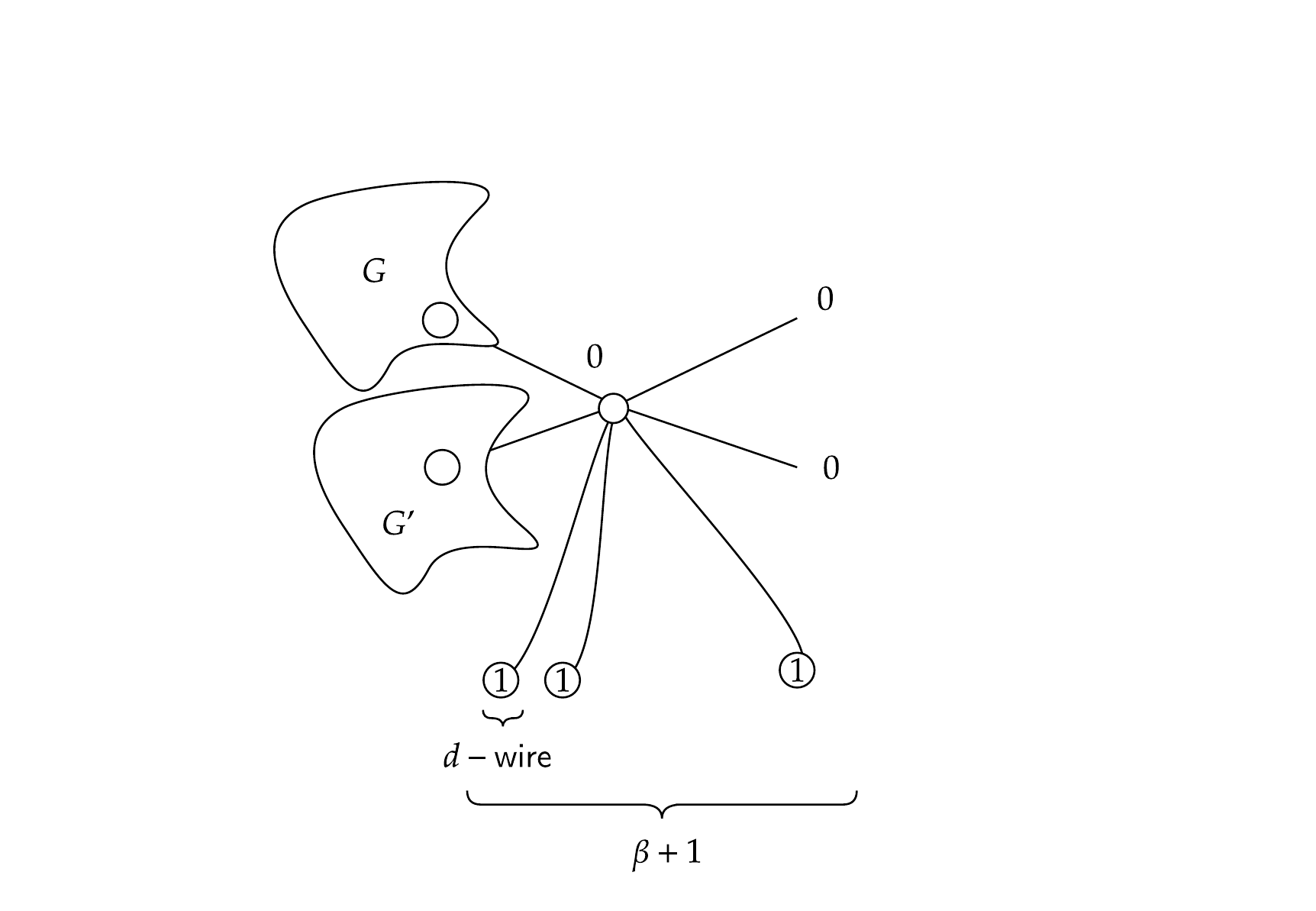}
	\caption{Delayed NOR gadget for $\alpha = 1$.}
	\label{fig:1intervalNOR}
\end{figure}
\end{remark}

\begin{remark}[Planarity of the gadgets.]
 It is important to point out that in most of our gadgets we have to consider a huge amount of fixed states (1), which implies that those subset are in general complete graphs (i.e., the vertex are fully connected) which are for most than 4 vertex non-planar. Actually, by considering the Kuratowski characterisation \cite[Theorem 4.4.6]{diestelgraph} in our constructions we have that for any $\alpha \geq 4$, the NAND gadgets and clocked NAND gadgets are not planar, because from Figure \ref{fig:NANDiso} and Figure \ref{fig:clockiso} we realise that each gadget contains at least $K_{\alpha+1}$ as a subgraph, which is a forbidden subgraph in the Kuratowski characterisation.  
 
\end{remark}
%
%
%

\section{Numerical Experiments}
In this section, first we show some results on simulations of totalistic automata networks over random graphs. More precisely, we generate a collection of $1000$ random graphs using the well known Erdös-Renyi model and we study all the totalistic rules with a maximum of $4$ active neighbours (i.e. $\mathcal{I}_v \subseteq
\{0,\hdots,4\}$ for each node $v$ in the network). 

In general, in this section we study all the possible Boolean gates that each rule in the latter class can calculate in some randomly generated graph, by trying any possible combination of three nodes as a set of two inputs and one output. We are also interested in identify a set of small graphs, that we call, gadgets, which exhibit a richer spectrum of Boolean gates in the latter simulation. We remark that in this section, in order to simplify the simulations, we observe the system capabilities to simulate boolean gates starting from perturbations of the quiescent state $\vec{0}.$ This latter choice could introduce some bias in some of our results because rules which need more than $2$ active neighbours in order to change to state $1$ (e.g. rules $3$, $34$, etc) will probably not show any interesting behaviour (we are introducing changes in only two nodes and the rest of the system stays in state $0$). In order to avoid this misleading effect, we consider, in the second part of this section, a sample of different fixed points and we repeat the previous simulations. However, as we consider that changing the interaction graph and the fixed point might add some extra complexity to the analysis of our results, instead we fix the underlying interaction graph as a small two dimensional grid and we study the simulation capabilities of the system starting from different fixed points. This is also interesting considering that for some of the totalistic rules that we are considering here, as rule $1$, the study of the capabilities of the system in order to simulate boolean circuits is still open in the case in which the interaction graph is the two dimensional grid.

Specifically, this section is organised as follows: 
\begin{enumerate}
	\item  First we show a general collection of results to describe the landscape of simulation capabilities of each totalistic automata network with at most $4$ active neighbours: we exhibit the relative frequency in which each graph can simulat all the possible $2^{2^2} = 16$ Boolean gates and we choose the graph that has the greater spectrum from the collection of randomly generated graph (that is to say that exhibits more logic gates from the set of $16$ possible Boolean gates with $2$ inputs and $1$ output). In this regard, we study the impact of the connectivity of each random generated graph by changing the probability of two given nodes to be connected. We call this probability $p \in [0,1]$.
	\item From the latter simulations, we choose one rule from the set of totalistic rules (rule $1$) that we have considered in the latter subsection and we exhibit the most representative gadget (i.e. the smallest graph with greatest spectrum)). Then, we study its dynamics, with emphasis in understanding how it simulates a given a Boolean gate.
	\item We extend the study of this specific gadget into the study of a particular one dimensional celullar automaton inspired in its topology.
	\item We fix a small two dimensional grid (4x4) and we generated a sample of fixed points for each rule. Then, we repeat previous simulations but now different fixed points play the role of random graphs of the latter sections.
\end{enumerate}

\subsection{General landscape in random graphs}
We start by remarking that as we use the Erdös-Renyi model, two parameters must be chosen in order to define a random graph: the number of nodes $n$ and the probability $p$ of two arbitrary nodes to be connected. We also recall that we are considering only logic gates with two inputs and one output. As we can encode each of these Boolean functions as a sequence of $4$ bits, we can represent each function by a number $i \in \{0,\hdots,15\}.$  For example, if $i=8$ then we have the AND rule since $8 = 0001$ and we are considering the following coding: $00 \to 0$, $01 \to 0$ $10 \to 0$ and $11 \to 1$. Analogously, OR gate is $i=14$,  XOR gate is $i=6$, NAND gate is $i=7$ and NOR gate is $i=1.$ Finally, we start all the simulations from the fixed point $\overline{x} = \vec{0}$ and we choose perturbations in all the possible assignations for two inputs and one output over a randomly generated graph. 

We are now in condition of summarize the set-up of parameters that we used for the simulations.

\subsection{Simulations set-up}
We start by describing the parameters of the following simulations:

\begin{enumerate}
	\item Probability of adjacency of two given nodes ($p$): 0.1, 0.5, 0.8
	\item Number of generated graphs ($N$): 100
	\item Number of nodes per graph ($n$): 10
	\item Simulation time $t$: 100	
\end{enumerate}
\subsubsection{Results}
\paragraph{The impact of connectivity.}

One of the most straightforwards observations from the results we show in Figures \ref{plot:p01}, \ref{plot:p05} and \ref{plot:p08} is that there is an effect of the connectivity of the different random graphs in the diversity of Boolean gates that certain rules can simulate. More precisely, if we see Table \ref{tab:p01} we observe that rules that need at least $2$ neighbours in order to activate one node (i.e. $2,23,24,3,34,$ etc ) can only simulate the trivial function (i.e. all inputs goes to $0$). Of course, this is intrinsically related to the definition of the dynamics which depends on the number of active neighbours to produce any dynamic behavior different from $\vec{0}$. So, in those cases, in order to study the spectrum we have, if possible, to consider other fixed points. In this context, in Table \ref{tab:p01} and \ref{tab:p05} we can observe that most of the gates are simulated by rules that have the possibility to change to state $1$ when they have at least one active neighbour. \\

\begin{table}[H]
	\resizebox{\textwidth}{!}{%
		\begin{tabular}{|l|l|l|l|l|l|l|l|l|l|l|l|l|l|l|l|l|}
			\hline
			& 0   & 1 & 2 & 3 & 4  & 5 & 6  & 7 & 8 & 9 & 10 & 11 & 12 & 13 & 14 & 15 \\ \hline
			1    & 100 & 0 & 6 & 0 & 10 & 0 & 18 & 0 & 7 & 0 & 42 & 0  & 44 & 0  & 12 & 0  \\ \hline
			2    & 100 & 0 & 0 & 0 & 0  & 0 & 0  & 0 & 0 & 0 & 0  & 0  & 0  & 0  & 0  & 0  \\ \hline
			3    & 100 & 0 & 0 & 0 & 0  & 0 & 0  & 0 & 0 & 0 & 0  & 0  & 0  & 0  & 0  & 0  \\ \hline
			4    & 100 & 0 & 0 & 0 & 0  & 0 & 0  & 0 & 0 & 0 & 0  & 0  & 0  & 0  & 0  & 0  \\ \hline
			12   & 100 & 0 & 6 & 0 & 8  & 0 & 1  & 0 & 0 & 0 & 79 & 0  & 75 & 0  & 38 & 0  \\ \hline
			13   & 100 & 0 & 1 & 0 & 1  & 0 & 33 & 0 & 1 & 0 & 51 & 0  & 53 & 0  & 1  & 0  \\ \hline
			14   & 100 & 0 & 6 & 0 & 9  & 0 & 20 & 0 & 7 & 0 & 43 & 0  & 45 & 0  & 12 & 0  \\ \hline
			23   & 100 & 0 & 0 & 0 & 0  & 0 & 0  & 0 & 0 & 0 & 0  & 0  & 0  & 0  & 0  & 0  \\ \hline
			24   & 100 & 0 & 0 & 0 & 0  & 0 & 0  & 0 & 0 & 0 & 0  & 0  & 0  & 0  & 0  & 0  \\ \hline
			34   & 100 & 0 & 0 & 0 & 0  & 0 & 0  & 0 & 0 & 0 & 0  & 0  & 0  & 0  & 0  & 0  \\ \hline
			123  & 100 & 0 & 1 & 0 & 2  & 0 & 0  & 0 & 0 & 0 & 83 & 0  & 82 & 0  & 50 & 0  \\ \hline
			124  & 100 & 0 & 8 & 0 & 8  & 0 & 2  & 0 & 1 & 0 & 80 & 0  & 76 & 0  & 39 & 0  \\ \hline
			134  & 100 & 0 & 3 & 0 & 2  & 0 & 34 & 0 & 3 & 0 & 52 & 0  & 54 & 0  & 4  & 0  \\ \hline
			234  & 100 & 0 & 0 & 0 & 0  & 0 & 0  & 0 & 0 & 0 & 0  & 0  & 0  & 0  & 0  & 0  \\ \hline
			1234 & 100 & 0 & 1 & 0 & 1  & 0 & 0  & 0 & 0 & 0 & 84 & 0  & 83 & 0  & 51 & 0  \\ \hline
		\end{tabular}%
	}
	\caption{Spectrum of Boolean gates by totalistic rule for $p=0.1$. Each column contains the number of graphs that are capable of simulate the corresponding Boolean gate.}
	\label{tab:p01}
\end{table}
\begin{figure}[!tbp]
	\centering
	\includegraphics[scale=0.20]{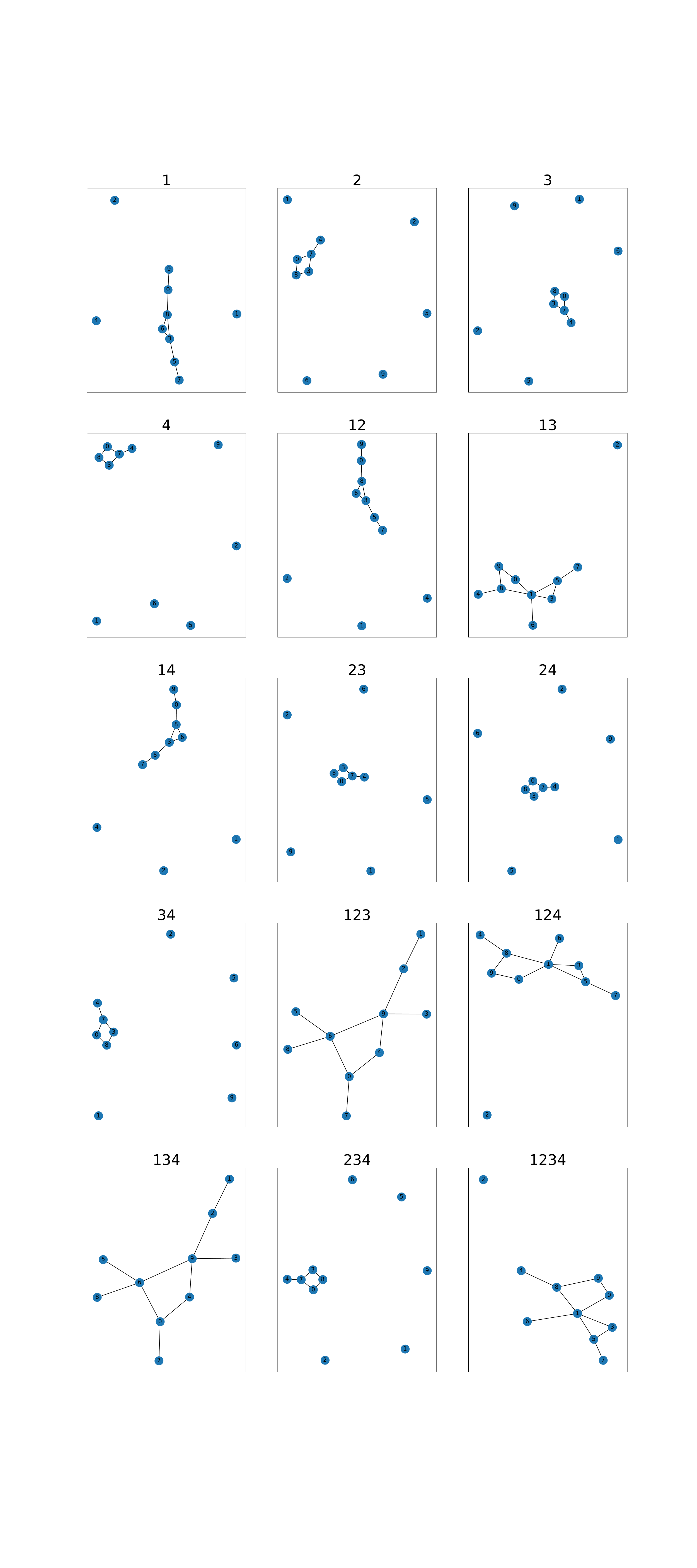}
	\caption{Graphs with the greatest spectrum by rule for $p=0.1.$}
	\label{plot:p01}
\end{figure}
On the other hand, if we observe Table \ref{tab:p05} we can see that rules with higher threshold (that is to say that needs at least 2 active neighbours in order to change to state $1$) start to show some simulation capabilities. In particular, rules that change with $2$ active neighbours exhibit the possibility of calculate AND gates supporting the remark that higher connectivity implies higher simulation capabilities for some rule. \\
Finally, if we observe Figures \ref{plot:p01}, \ref{plot:p05} and \ref{plot:p08} we see that if $p \geq 0.5$ we have that every gadget is connected. Contrarily, in the case $p=0.1$ we observe that rules with higher threshold (those who need more active neighbours to active its nodes) tend to exhibit gadgets with bigger connected components. We remark the case of rule $1$ which seems to exhibit a minimal modular structure which is able to calculate several different logic gates. 
\begin{table}[H]
	\resizebox{\textwidth}{!}{%
		\begin{tabular}{|l|l|l|l|l|l|l|l|l|l|l|l|l|l|l|l|l|}
			\hline
			& 0   & 1 & 2  & 3 & 4  & 5 & 6  & 7 & 8   & 9 & 10 & 11 & 12 & 13 & 14  & 15 \\ \hline
			1    & 100 & 0 & 74 & 0 & 74 & 0 & 63 & 0 & 82  & 0 & 60 & 0  & 56 & 0  & 44  & 0  \\ \hline
			2    & 100 & 0 & 0  & 0 & 0  & 0 & 0  & 0 & 58  & 0 & 0  & 0  & 0  & 0  & 0   & 0  \\ \hline
			3    & 100 & 0 & 0  & 0 & 0  & 0 & 0  & 0 & 0   & 0 & 0  & 0  & 0  & 0  & 0   & 0  \\ \hline
			4    & 100 & 0 & 0  & 0 & 0  & 0 & 0  & 0 & 0   & 0 & 0  & 0  & 0  & 0  & 0   & 0  \\ \hline
			12   & 100 & 0 & 86 & 0 & 88 & 0 & 86 & 0 & 79  & 0 & 83 & 0  & 80 & 0  & 91  & 0  \\ \hline
			13   & 100 & 0 & 94 & 0 & 94 & 0 & 91 & 0 & 96  & 0 & 89 & 0  & 90 & 0  & 83  & 0  \\ \hline
			14   & 100 & 0 & 83 & 0 & 85 & 0 & 78 & 0 & 90  & 0 & 81 & 0  & 81 & 0  & 71  & 0  \\ \hline
			23   & 100 & 0 & 0  & 0 & 0  & 0 & 0  & 0 & 83  & 0 & 0  & 0  & 0  & 0  & 0   & 0  \\ \hline
			24   & 100 & 0 & 0  & 0 & 0  & 0 & 0  & 0 & 70  & 0 & 0  & 0  & 0  & 0  & 0   & 0  \\ \hline
			34   & 100 & 0 & 0  & 0 & 0  & 0 & 0  & 0 & 0   & 0 & 0  & 0  & 0  & 0  & 0   & 0  \\ \hline
			123  & 99  & 0 & 78 & 0 & 77 & 0 & 82 & 0 & 71  & 0 & 75 & 0  & 73 & 0  & 100 & 0  \\ \hline
			124  & 100 & 0 & 93 & 0 & 96 & 0 & 94 & 0 & 95  & 0 & 93 & 0  & 95 & 0  & 97  & 0  \\ \hline
			134  & 100 & 0 & 92 & 0 & 92 & 0 & 95 & 0 & 97  & 0 & 92 & 0  & 94 & 0  & 97  & 0  \\ \hline
			234  & 100 & 0 & 0  & 0 & 0  & 0 & 0  & 0 & 100 & 0 & 0  & 0  & 0  & 0  & 0   & 0  \\ \hline
			1234 & 94  & 0 & 53 & 0 & 50 & 0 & 56 & 0 & 40  & 0 & 46 & 0  & 46 & 0  & 100 & 0  \\ \hline
		\end{tabular}%
	}
	\caption{Spectrum of Boolean gates by totalistic rule for $p=0.5$. Each column contains the number of graphs that can simulate the corresponding Boolean gate.\}}
	\label{tab:p05}
\end{table}
\begin{figure}[!tbp]
	\centering
	\includegraphics[scale=0.20]{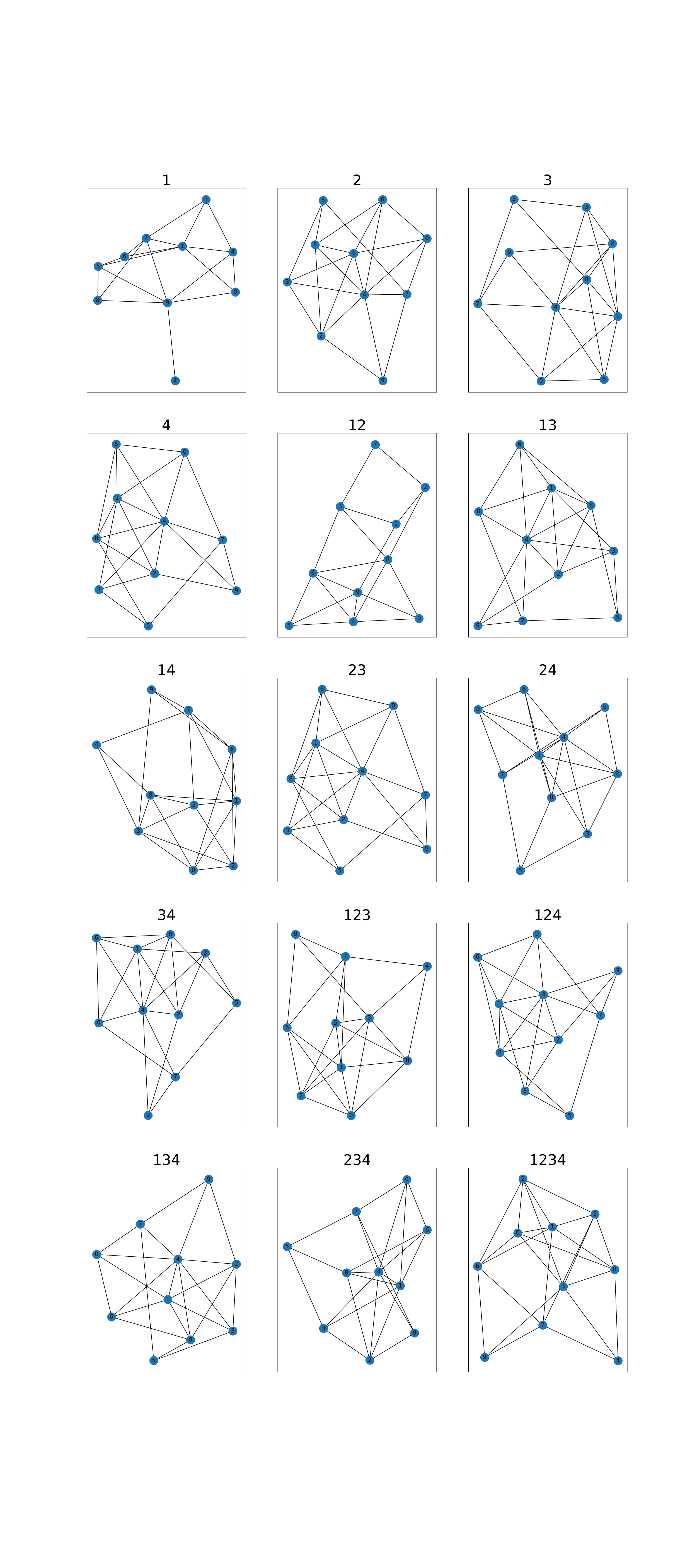}
	\caption{Graphs with the greatest spectrum by rule for $p=0.5$}
	\label{plot:p05}
\end{figure}

\paragraph{More connectivity does not imply more diversity.}
If we study now rules that exhibit strong simulation capabilities even for low connectivity such as rule $1$, we observe that they do not show a significant change in their spectrum for higher values of $p$ (even if we can observe an increase in the frequencies of each Boolean gate in their spectrum, which implies that more graphs of the random sample are being capable of simulating one specific gate). Roughly, this observation may suggest that probably we need to study graphs with more nodes  to observe a significant impact of connectivity in the spectrum of different totalistic rules. This is also coherent with the theoretical results regarding isolated totalistic networks in which we have theoretically constructed gadgets requiring  big cliques subgraphs fixed in the state $1$ in order to perform calculations. \\ 

\paragraph{Complete Boolean gates set are not modular.}
Finally, we observe that gates NOR and NAND ($i=7$ and $i=1$ respectively in our notation) do not seem to be calculable by totalistic rules with at most $4$ active neighbours (see Table \ref{tab:p01}, Table \ref{tab:p05} and Table \ref{tab:p08}). It might suggest that we need a more complex structure in order to simulate complete sets of Boolean gates. We conjecture that maybe we can simulate them by ``glueing'' different gadgets in a coherent way. Nevertheless, in order to achieve this task, we need a better understanding of the particular dynamics of certain gadgets (see next subsection). 

\begin{table}[H]
	\resizebox{\textwidth}{!}{%
		\begin{tabular}{|l|l|l|l|l|l|l|l|l|l|l|l|l|l|l|l|l|}
			\hline
			& 0   & 1 & 2  & 3 & 4  & 5 & 6  & 7 & 8  & 9 & 10 & 11 & 12 & 13 & 14 & 15 \\ \hline
			1    & 100 & 0 & 1  & 0 & 1  & 0 & 0  & 0 & 39 & 0 & 1  & 0  & 1  & 0  & 0  & 0  \\ \hline
			2    & 100 & 0 & 0  & 0 & 0  & 0 & 0  & 0 & 27 & 0 & 0  & 0  & 0  & 0  & 0  & 0  \\ \hline
			3    & 100 & 0 & 0  & 0 & 0  & 0 & 0  & 0 & 0  & 0 & 0  & 0  & 0  & 0  & 0  & 0  \\ \hline
			4    & 100 & 0 & 0  & 0 & 0  & 0 & 0  & 0 & 0  & 0 & 0  & 0  & 0  & 0  & 0  & 0  \\ \hline
			12   & 100 & 0 & 11 & 0 & 10 & 0 & 8  & 0 & 0  & 0 & 0  & 0  & 0  & 0  & 2  & 0  \\ \hline
			13   & 100 & 0 & 40 & 0 & 41 & 0 & 28 & 0 & 81 & 0 & 34 & 0  & 32 & 0  & 18 & 0  \\ \hline
			14   & 100 & 0 & 61 & 0 & 61 & 0 & 48 & 0 & 91 & 0 & 54 & 0  & 53 & 0  & 38 & 0  \\ \hline
			23   & 100 & 0 & 0  & 0 & 0  & 0 & 0  & 0 & 66 & 0 & 0  & 0  & 0  & 0  & 0  & 0  \\ \hline
			24   & 100 & 0 & 0  & 0 & 0  & 0 & 0  & 0 & 83 & 0 & 0  & 0  & 0  & 0  & 0  & 0  \\ \hline
			34   & 100 & 0 & 0  & 0 & 0  & 0 & 0  & 0 & 0  & 0 & 0  & 0  & 0  & 0  & 0  & 0  \\ \hline
			123  & 99  & 0 & 29 & 0 & 27 & 0 & 18 & 0 & 11 & 0 & 14 & 0  & 13 & 0  & 14 & 0  \\ \hline
			124  & 100 & 0 & 54 & 0 & 54 & 0 & 39 & 0 & 26 & 0 & 29 & 0  & 24 & 0  & 22 & 0  \\ \hline
			134  & 99  & 0 & 66 & 0 & 63 & 0 & 52 & 0 & 90 & 0 & 58 & 0  & 53 & 0  & 41 & 0  \\ \hline
			234  & 100 & 0 & 0  & 0 & 0  & 0 & 0  & 0 & 91 & 0 & 0  & 0  & 0  & 0  & 0  & 0  \\ \hline
			1234 & 96  & 0 & 63 & 0 & 65 & 0 & 48 & 0 & 38 & 0 & 33 & 0  & 32 & 0  & 37 & 0  \\ \hline
		\end{tabular}%
	}
	\caption{Spectrum of Boolean gates by totalistic rule for $p=0.8$. Each column contains the number of graphs that can simulate the corresponding Boolean gate.\}}
	\label{tab:p08}
\end{table}
\begin{figure}[!tbp]
	\centering
	\includegraphics[scale=0.20]{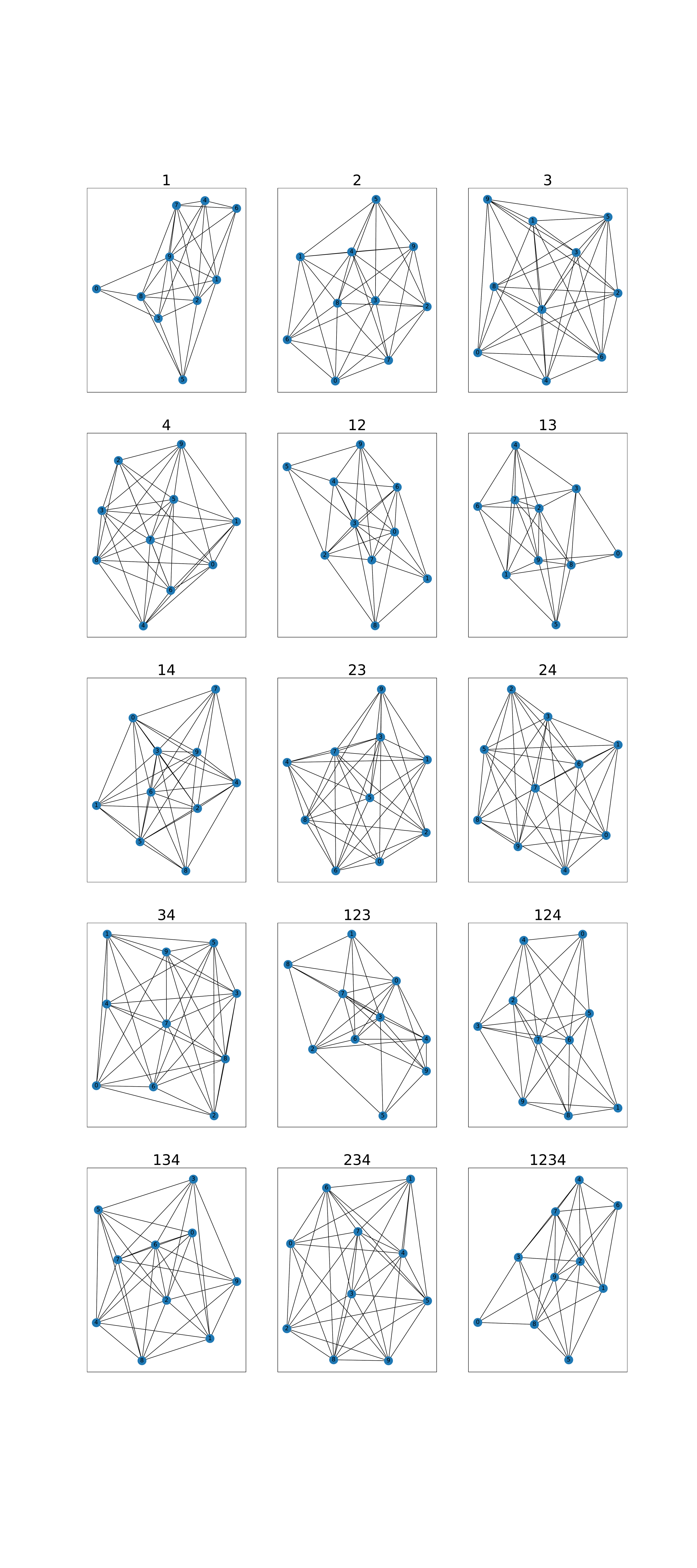}
	\caption{Graphs with the greatest spectrum by rule for $p=0.8$}
	\label{plot:p08}
\end{figure}
\subsection{Generating boolean gates in the two dimensional grid.}
In this subsection, in order to consider fixed points different from $\vec{0}$ (if there exists)we repeat the same computational experiments that we did before with different random graphs but this time, we fix the graph and we change the fixed point we are perturbing in order to generate boolean gates. More precisely we consider a two dimensional grid and for each rule we consider a sample of fixed points that we previously generated by simply simulating the system and waiting for it to attain a fixed point. This simulation time was previously established as $t=100$. Once the fixed point sample is obtained for each rule, we try any combination of input and output in order to observe the boolean gates that system is capable of generating as a result of perturbing given inputs. Then, as same as we did in the latter section, we show the fixed points that have shown the greater spectrum (that have exhibit the greater amount of boolean gates) and we discuss a possible link between its structure and its simulation capabilities.
\subsection{Simulations set-up}
\begin{enumerate}
    \item Grid size: $4 \times 4$
    \item Simulation time: $100$
    \item Rule set: any totalistic rule up to a maximum activation value (number of neighbours in state $1$) of $\Delta = 4.$
\end{enumerate}
\subsection{Results}
\begin{table}[H]
\centering
\begin{tabular}{|l|l|}
\hline
Rules & \#Fixed Points \\ \hline
1     & 41             \\ \hline
2     & 57             \\ \hline
3     & 9              \\ \hline
4     & 2              \\ \hline
12    & 9              \\ \hline
13    & 1              \\ \hline
14    & 58             \\ \hline
23    & 57             \\ \hline
24    & 74             \\ \hline
34    & 34             \\ \hline
123   & 25             \\ \hline
124   & 58             \\ \hline
134   & 74             \\ \hline
234   & 34             \\ \hline
1234  & 2              \\ \hline
\end{tabular}
\caption{Number of fixed points found for each totalistic rule after $t=100$ time steps of simulation.}
\label{tab:nfixedpoints}
\end{table}
\begin{table}[]
\resizebox{\textwidth}{!}{%
\begin{tabular}{|l|l|l|l|l|l|l|l|l|l|l|l|l|l|l|l|l|}
\hline
     & 0   & 1  & 2  & 3  & 4  & 5  & 6  & 7  & 8   & 9  & 10 & 11 & 12 & 13 & 14  & 15 \\ \hline
1    & 100 & 98 & 76 & 98 & 59 & 98 & 78 & 0  & 78  & 0  & 78 & 0  & 78 & 0  & 83  & 98 \\ \hline
2    & 100 & 70 & 21 & 88 & 11 & 86 & 21 & 95 & 56  & 0  & 42 & 0  & 42 & 0  & 96  & 98 \\ \hline
3    & 100 & 78 & 78 & 78 & 56 & 78 & 0  & 0  & 89  & 0  & 89 & 0  & 89 & 0  & 0   & 89 \\ \hline
4    & 50  & 0  & 0  & 0  & 0  & 0  & 0  & 0  & 50  & 0  & 50 & 0  & 50 & 0  & 0   & 50 \\ \hline
12   & 100 & 0  & 89 & 0  & 67 & 0  & 0  & 0  & 0   & 89 & 89 & 0  & 89 & 0  & 0   & 89 \\ \hline
13   & 100 & 0  & 0  & 0  & 0  & 0  & 0  & 0  & 0   & 0  & 0  & 0  & 0  & 0  & 0   & 0  \\ \hline
14   & 98  & 64 & 76 & 79 & 62 & 78 & 95 & 38 & 100 & 76 & 98 & 90 & 98 & 88 & 97  & 98 \\ \hline
23   & 100 & 63 & 96 & 86 & 68 & 86 & 63 & 42 & 84  & 54 & 84 & 32 & 84 & 28 & 40  & 98 \\ \hline
24   & 99  & 0  & 0  & 62 & 0  & 58 & 0  & 88 & 95  & 0  & 97 & 46 & 97 & 57 & 99  & 99 \\ \hline
34   & 97  & 0  & 0  & 0  & 0  & 0  & 0  & 0  & 94  & 0  & 94 & 0  & 94 & 0  & 68  & 97 \\ \hline
123  & 100 & 96 & 96 & 96 & 88 & 96 & 96 & 0  & 100 & 0  & 96 & 0  & 96 & 0  & 0   & 96 \\ \hline
124  & 98  & 38 & 90 & 79 & 88 & 78 & 76 & 64 & 97  & 95 & 98 & 76 & 98 & 62 & 100 & 98 \\ \hline
134  & 99  & 88 & 46 & 62 & 57 & 58 & 0  & 0  & 99  & 0  & 97 & 0  & 97 & 0  & 95  & 99 \\ \hline
234  & 97  & 0  & 0  & 0  & 0  & 0  & 0  & 0  & 68  & 0  & 94 & 0  & 94 & 0  & 94  & 97 \\ \hline
1234 & 50  & 0  & 0  & 0  & 0  & 0  & 0  & 0  & 0   & 0  & 50 & 0  & 50 & 0  & 50  & 50 \\ \hline
\end{tabular}
}
\caption{Spectrum of different totalistic rules considering a sample of fixed points in a $4x4$ grid. Each entry of the matrix represent the percent frequency of each boolean gate (enumerated from $0$ to $15$) related to total amount of fixed points}
\label{tab:spectrumfixedpoint}
\end{table}

\begin{figure}[!tbp]
	\centering
	\includegraphics[scale=0.5]{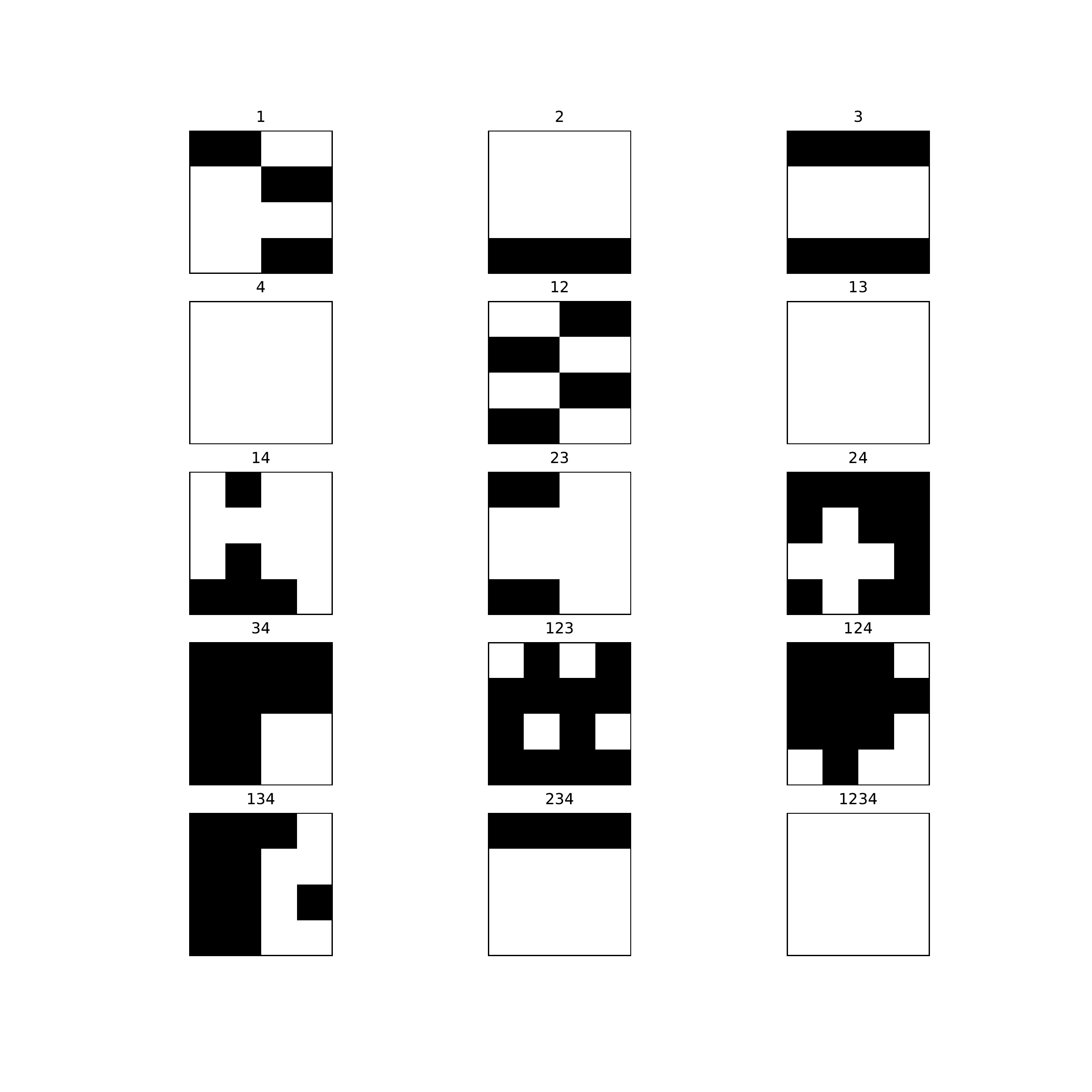}
	\caption{Fixed points exhibiting the greatest spectrum of boolean gates for each totalistic rule.}
	\label{plot:fixedpoint2D}
\end{figure}

\paragraph{Uniformity and stability.}
Roughly, as it is shown in  Figure \ref{plot:fixedpoint2D} we can identify two types of fixed points according to two criteria: uniformity and stability. In the first group we observe those rules which have fixed point $\vec{0}$ as the one who has greater spectrum of simulated Boolean gates. As it is shown in Figure  \ref{plot:fixedpoint2D} this is the case of rule $4,1234$ and $13$. Note that one can easily deduce that rules $4$ and $1234$ only have $\vec{1}$ and $\vec{0}$ as fixed points. In addition rule $13$ is the XOR rule. We have already shown in Theorem \ref{teo:linear} that matrix-defined rules can only generate other matrix-defined rules on every possible subset of inputs defined by fixing some of the input variables. As a consequence of the latter result, in this case,  we can only expect the XOR function (rule $13$) to generate the XOR gate (gate $6$), its complementary gate by conjugation (gate $9$), constant gates ($0$ and $15$), projection gates ($10$ and $12$) and their conjugated gates ($3$ and $5$ respectively). Nevertheless, we observe in Table \ref{tab:nfixedpoints} a that $\vec{0}$ is the only fixed point of $13$ and one can easy show this in general for a $4 \times 4$ grid. In order to eliminate any misleading effect of the small dimension chosen for the experiment, we exhibit a fixed point for this rule in a $6\times 6$ grid in Figure \ref{plot:13fix}. For this particular fixed point, we have repeated latter experiment and, and we have obtain as result that it is capable to implement any of the previously described gates that XOR function can simulate.
\begin{figure}[!tbp]
	\centering
	\includegraphics[scale=0.25]{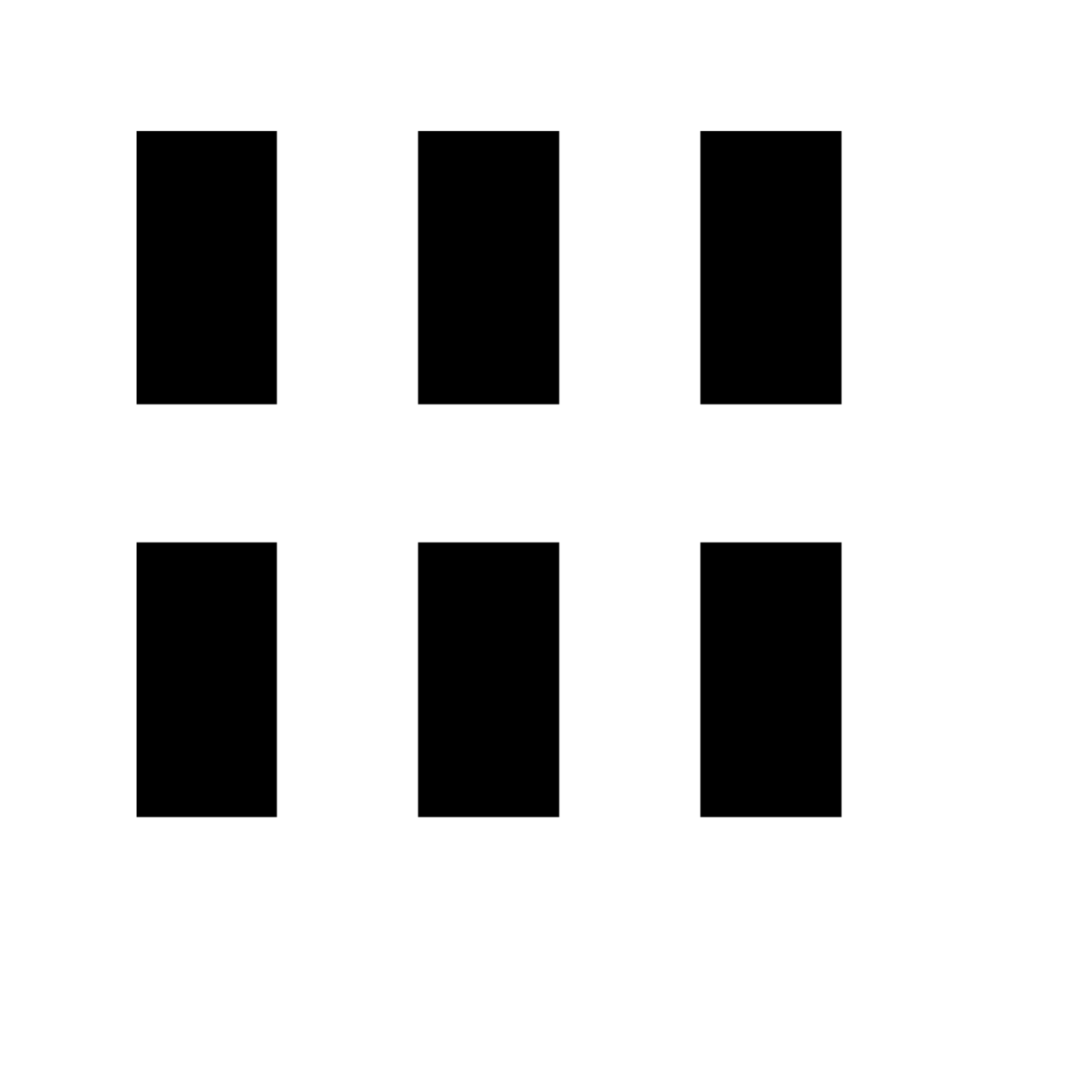}
	\caption{Fixed point for Rule $13$ in a $6 \times 6$ grid.}
	\label{plot:13fix}
\end{figure}
On the other hand, some of the non-uniform rules, i.e. those who have a fixed point different from $\vec{1}$ or $\vec{0}$ contributing the most to its spectrum, have actually complete spectrum (they can implement any possible Boolean gate). Then, ignoring the obvious misleading effects produced by taking a small grid as a interaction graph, we can roughly deduce that uniformity by itself plays a role on how complex is the spectrum of some given rule. Nevertheless, as the results on spectrum of non-uniform rules suggest, it is not the only element that seems to impact in the richness of the spectrum of some rules. In fact, taking the example of rule $14$ and $4$ (the first produces a complete spectrum while the second one only produce a reduced amount of Boolean gates), we conjecture that diffusion properties induced by having $1$ in the active set might play an interesting role explaining the difference between the spectrum of both rules. We observe this effect also in the other way: rule $23$ has complete spectrum but rule $123$ doesn't.

In addition, according to Table \ref{tab:spectrumfixedpoint} we can see that there are only three rules exhibiting only $2$ or $1$ fixed point.

On the other hand, we observe that there are fixed points in which, if we change only two cells, roughly, we do not produce any different dynamical behaviour compared to the initial condition. More precisely, dynamics tend to the original fixed point.  It is interesting to observe that most of the rules having greatest spectrum exhibit a non-uniform unstable fixed point such us rules $1,23,123,2$. However, here we can also roughly identify two types of fixed points: those who are uniformly unstable, that is to say, they produce different dynamical behaviour independently from the position of he cell we perturb. This is the case of rule $1$ and $12$ according to igure  \ref{plot:fixedpoint2D}. Contrarily, some rules (such as rule) $34$, $24$ and $124$ have a fixed point in which there are some stable areas and unstable areas. Again, the effect of having this type of distribution of $1$s and $0$s in the fixed points of some rules might play some role in the richness of their spectrum. Nevertheless it is not possible to directly deduce that from the results as there some rules such as rule $1$ having an uniformly unstable fixed point and having at the same time a considerable amount of richness in its spectrum while $124$ have some stable zones and produces a complete spectrum.

\subsection{Gadget dynamics: a rule $1$ automata network example.}
In this section we take a deeper look in the structures obtained for rule $1$ in latter simulations. We study the dynamics of a gadget obtained in the latter simulation which is capable of calculating different logic gates by changing the assignation of inputs and output. This gadget exhibits the maximum value of the spectrum for the rule $1$ in the simulations for $p=0.1$ (i.e. it is able to simulate the maximum amount of different logic gates observed in the simulations). In addition, this gadget has only $6$ nodes as it is shown in Figure \ref{plot:gadget} \\
\begin{figure}[!tbp]
	\centering
	\includegraphics[scale=0.5]{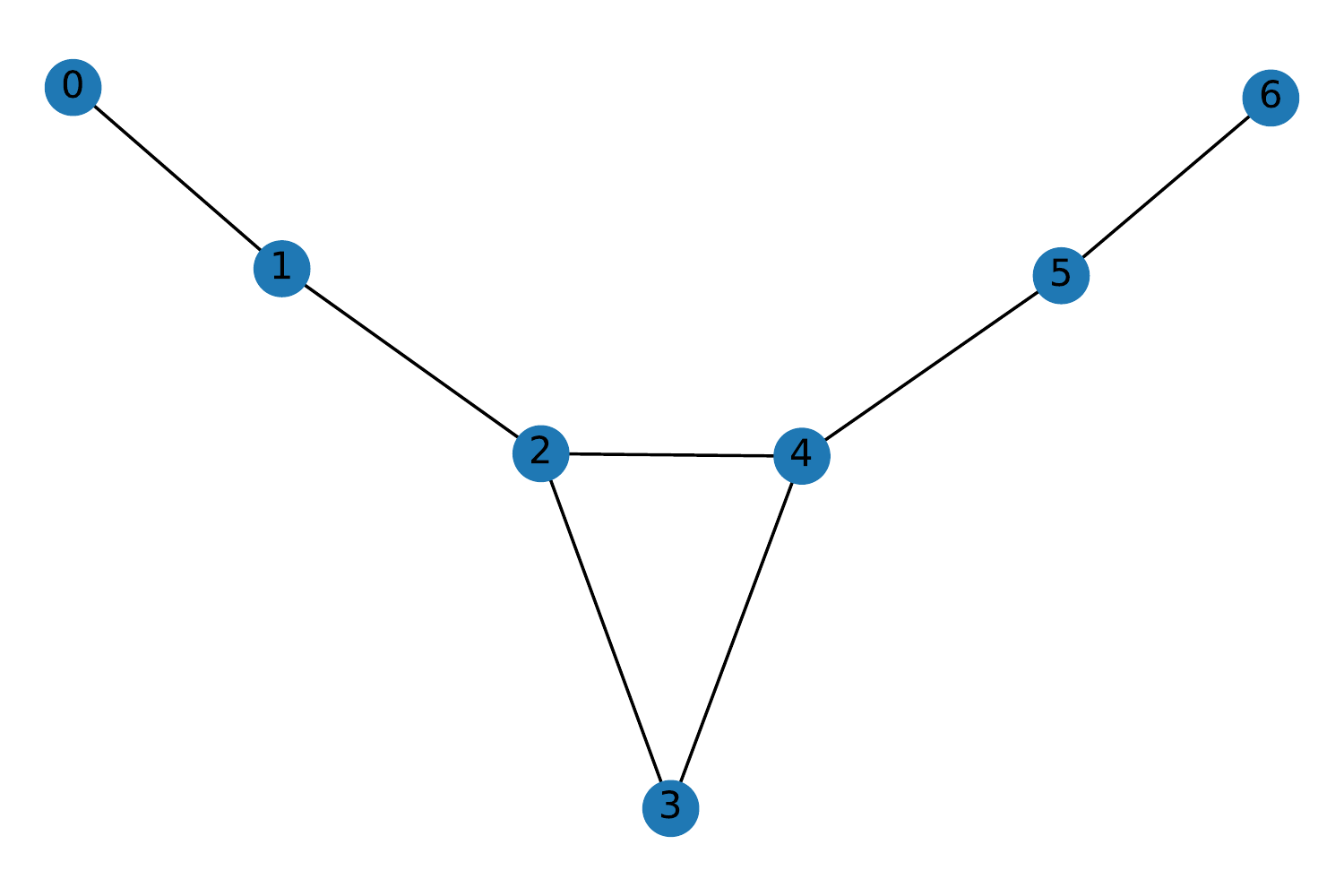}
	\caption{Gadget obtained for rule $1$ in the previous simulations with $p=0.1$ (see Figure \ref{plot:p01})}
	\label{plot:gadget}
\end{figure}

This section is organised in the following way: in the first subsection we study the dependency of the spectrum in the simulation time, showing that there is a critical time in which the gadget starts to simulate some gates such as AND and OR gates. In the second subsection, we explore the dynamics behind the simulation of the AND gate.

\subsubsection{Frequency of simulated gates v/s simulation time}

We now study the dependency of the spectrum of the latter gadget automata network that we have found for rule $1$ on the simulation time. Remember that the frequency here is a measure of how many different assignation of two input and one output produces the same logic gate. In this case, we focus in the studying the AND and OR gates. In Figure \ref{plot:gadget} we can see that the frequency for these two logic gates v/s simulation time. As we can observe in Figure \ref{plot:gadget}, there is no monotonic behaviour on the frequency. More over it appears to have a periodic behaviour. We conjecture that it is related to the maximal period of some attractor of the gadget automata network. 
\begin{figure}[!tbp]
	\centering
	\includegraphics[scale=0.25]{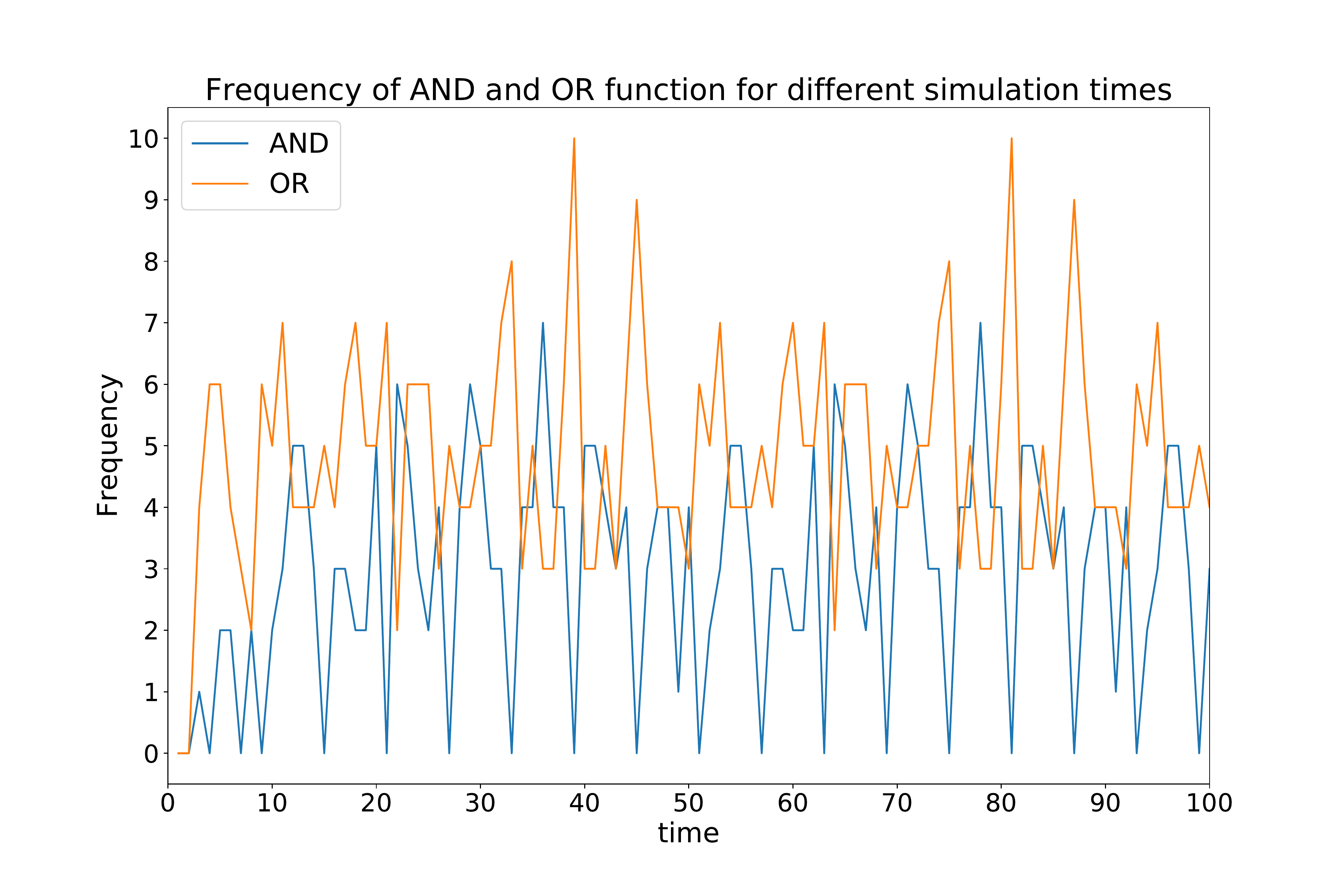}
	\caption{Frequency of AND and OR gates v/s simulation time for the gadget in Figure \ref{plot:gadget}}
	\label{plot:exampleANDt1}
\end{figure}

On the other hand, we can observe that a critical time is necessary for the gadget in order to simulate AND and OR gates. More precisely, the system is capable of simulating both logic gates starting from $t = 3$ as it is shown in Figure \ref{plot:exampleANDt2}. Finally, note that the time steps in which the system is capable of simulating AND gates are not necessarily the same for simulating OR gates. For example, as it is shown in Figure \ref{plot:exampleANDt2}, the gadget is not capable of simulating OR gates in time $t=4$ but it can simulate AND gates with $6$ different assignations of input and output at this same time step.
\begin{figure}[!tbp]
	\centering
	\includegraphics[scale=0.25]{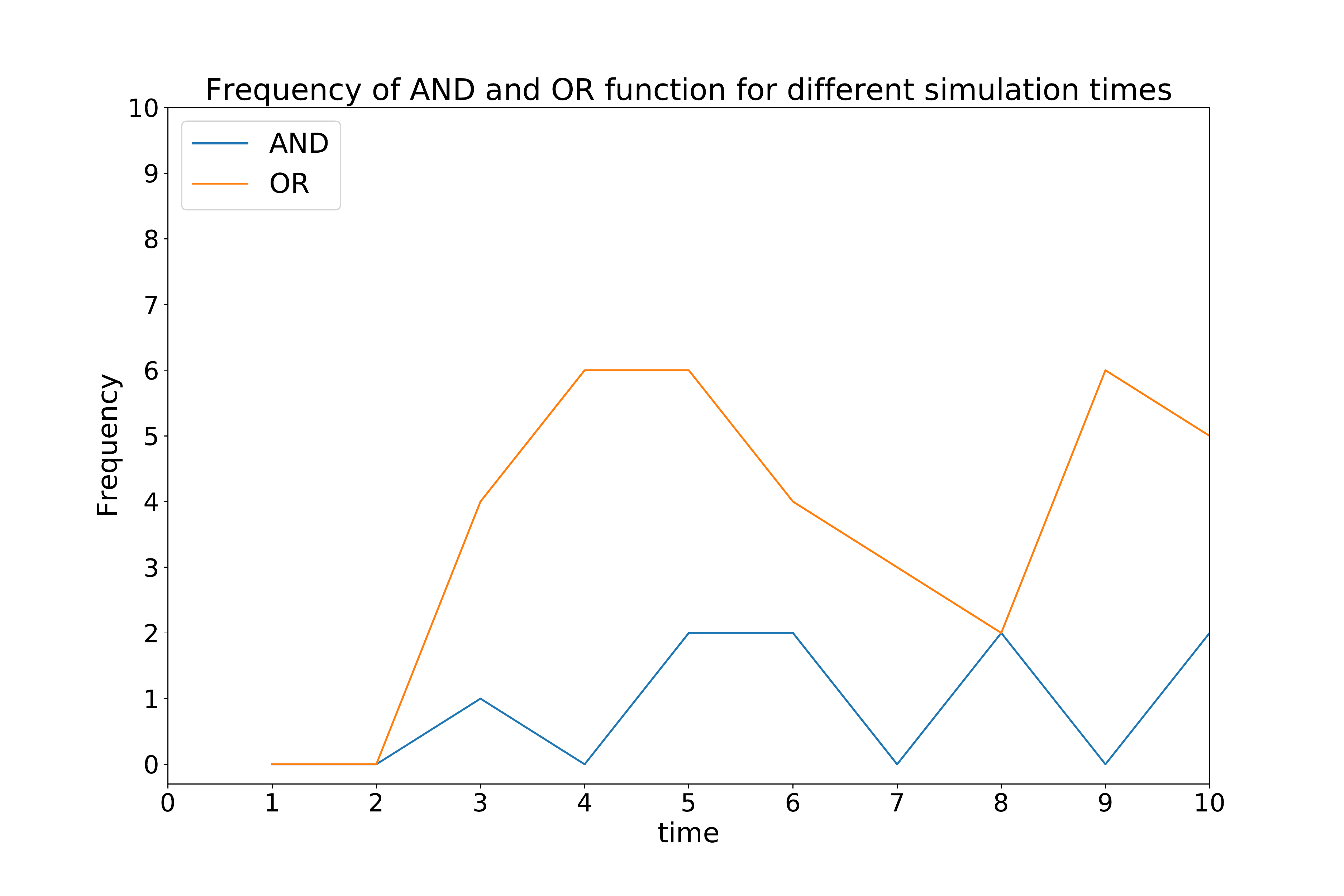}
	\caption{Frequency of AND and OR gates v/s simulation time for the gadget in Figure \ref{plot:gadget}). The system is capable of simulating both gates starting from $t=3$.}
	\label{plot:exampleANDt2}
\end{figure}
\subsubsection{AND gate dynamics}
In the section we study in the detail the dynamics behind the simulation of an AND gate by the gadget automata network which graph is shown in Figure \ref{plot:gadget}. We choose the minimum simulation time required for the system to compute an AND gate. As it is shown in Figure \ref{plot:exampleANDt2} this time step is $t=3$. In Figure \ref{plot:exampleAND1} we show the dynamics of the network starting from the input $11$. In this case, nodes marked as $2$ and $3$ are the inputs and $4$ is the output. Active nodes are coloured in yellow and inactive nodes are coloured blue. Note that AND gate is computed after $t=3$ time steps. \\
\begin{figure}[!tbp]
	\centering
	\includegraphics[scale=0.25]{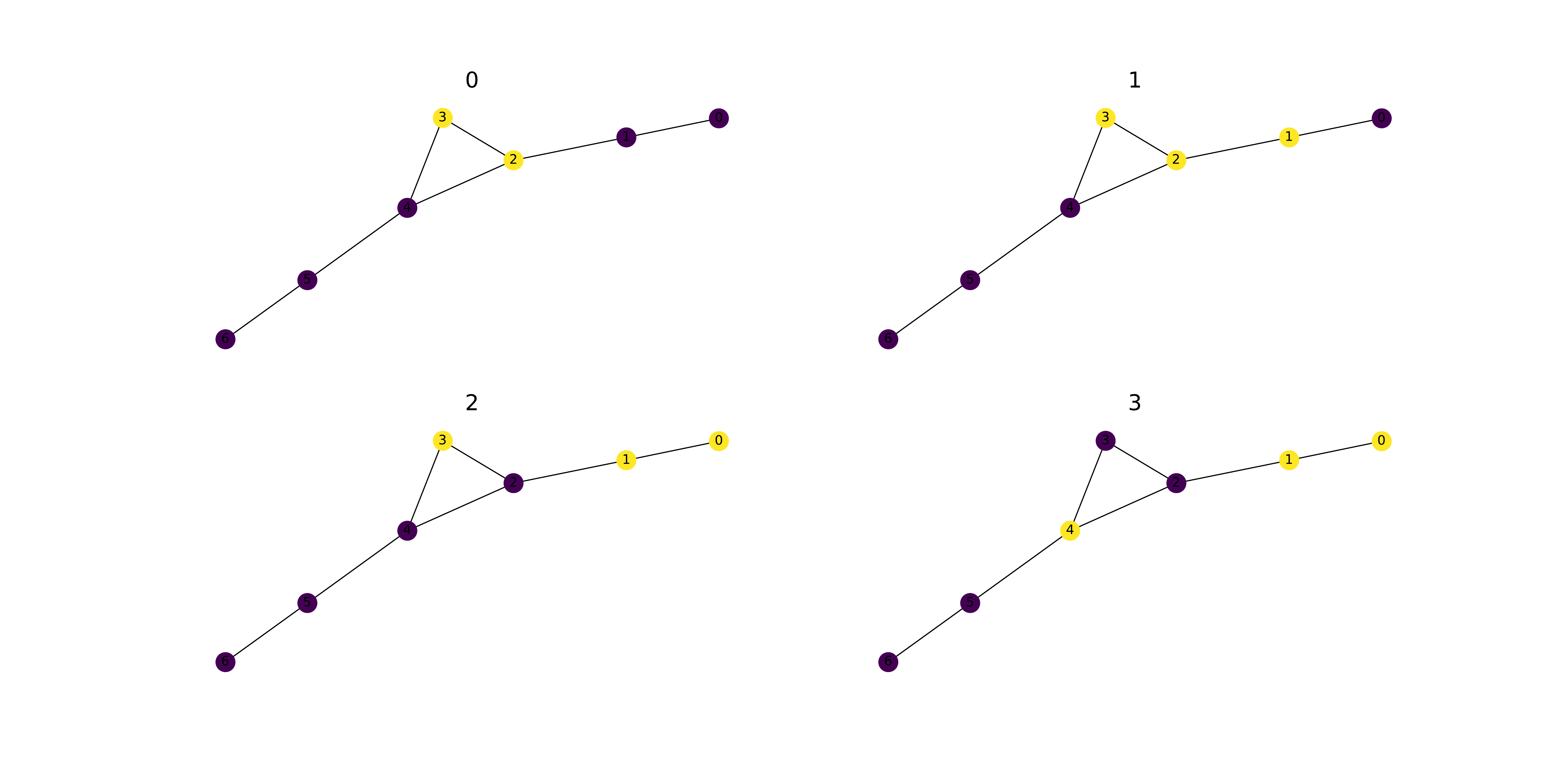}
	\caption{Dynamics behind simulation of an AND gate for an input $11$. Nodes $2$ and $3$ act as input nodes and node $4$ operates as the output. }
	\label{plot:exampleAND1}
\end{figure}
On the other hand, Figure \ref{plot:exampleAND2} shows the evolution of the input $00$ using the same input-out assignment. We see here how the paths that are connected to the central triangle of the graph play an essential role in blocking a single $1$ signal. \\
\begin{figure}[!tbp]
	\centering
	\includegraphics[scale=0.25]{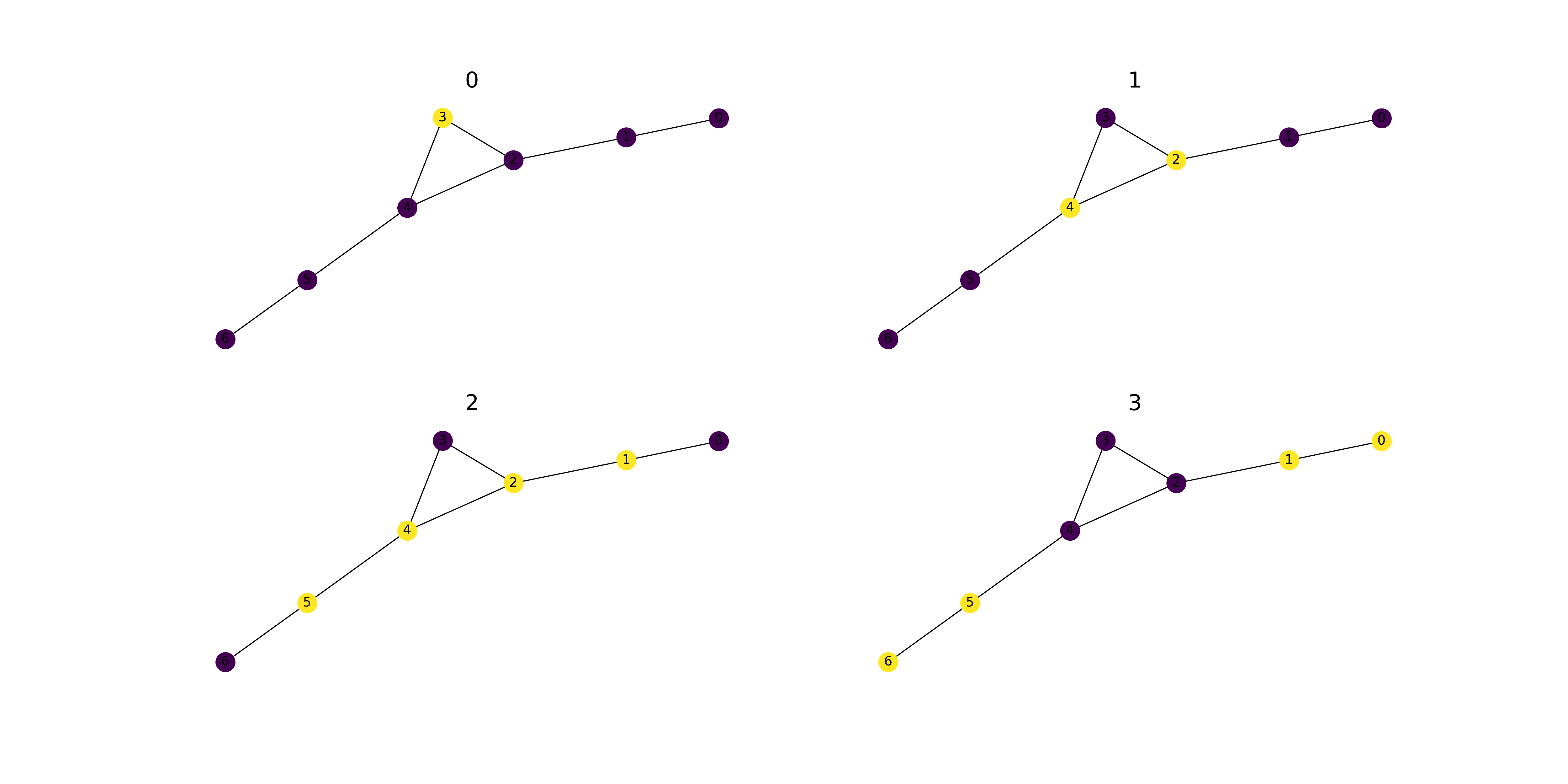}
	\caption{Dynamics behind simulation of  AND gate for an input $01$. Nodes $2$ and $3$ act as input nodes and node $4$ operates as the output. }
	\label{plot:exampleAND2}
\end{figure}
Contrarily to the case of the gadgets that we have theoretically proposed in the latter sections, we observe that, in both Figure \ref{plot:exampleAND1} and \ref{plot:exampleAND2}, the simulation of the AND gate takes more than a straightforward sequence of calculations performed in some linear way. We conjecture that the system uses different attractors in order to simulate the output of one logic gate for different input assignments.
\subsection{Rule $1$ Bull Cellular Automata}
In this section, we introduce a non-uniform one dimensional cellular automata based in the gadget shown in Figure \ref{plot:gadget}. Considering the topology of latter gadget we introduce a cellular automaton with uniform radius $1$ (as in the definition of elementary cellular automata) but we modify the neighbourhood of certain cells in order to allow them to consider not only the value of their adjacent neighbours but to consider the values of the cells at distance $2$ to the right (right radius equal $2$) and also, to mantain symmetry, to the left. More precisely, we define this cellular automaton as a function $F:\{0,1,2,3,4,5\}^n: \to \{0,1,2,3,4,5\}^n.$ defined by:
$$F(x)_i\begin{cases}
1 & \text{ if } \overline{x}_{i+1} + \overline{x}_{i-1} = 1 \wedge x_i \in \{0,1\} \\
0 & \text{ if } \overline{x}_{i+1} + \overline{x}_{i-1} \not = 1 \wedge \overline{x}_i \in \{0,1\} \\
3 & \text{ if } \overline{x}_{i+2} + \overline{x}_{i+1} + \overline{x}_{i-1}  = 1 \wedge \overline{x}_i \in \{2,3\} \\
2 & \text{ if } \overline{x}_{i+2} + \overline{x}_{i+1} + \overline{x}_{i-1} \not = 1 \wedge \overline{x}_i \in \{2,3\} \\
5 & \text{ if } \overline{x}_{i-2} + \overline{x}_{i+1} + \overline{x}_{i-1} = 1 \wedge \overline{x}_i \in \{4,5\} \\
4 & \text{ if } \overline{x}_{i-2} + \overline{x}_{i+1} + \overline{x}_{i-1} \not = 1 \wedge \overline{x}_i \in \{4,5\} 
\end{cases}$$
where $\overline{x}_i = \begin{cases} 
1 & \text{ if } x \in \{1,3,5\} \\
0 & \sim 
\end{cases}$   for $i = 1, \hdots, n$ and taking $n+k$ and $0-k$ modulo $n$ for any positive integer $k$ and where $n$ is some positive integer. In simple words, we are non uniformly repeating several copies of gadget in Figure \ref{plot:gadget}. Based on our lasts results, our main aim here is to study simulation capabilities of this system. We remark that, for example, it is not clear wheter we can simulate an evaluation of an arbitrary boolen circuit in two dimensional rule $1$ cellular automata. We remark also that in the case in which there are no cell with special neighbourhood (so the cellular automaton is an elementary cellular automaton) the function is the well known rule $90$ which is a class $4$ rule in Wolfram classification \cite{wolfram1984universality}.

In order to illustrate the dynamics of the system, we show a simulation starting from a random initial condition in Figure \ref{fig:dynbull}. In the latter simulation, we have generated randomly a few bulls in three different areas of length $8$ cells (at the begining, in the middle and at the end of the ring) and the rest of the states of the rest of the cells were randomly generated with uniform probability.
\begin{figure}[!tbp]
	\centering
	\includegraphics[scale=0.6]{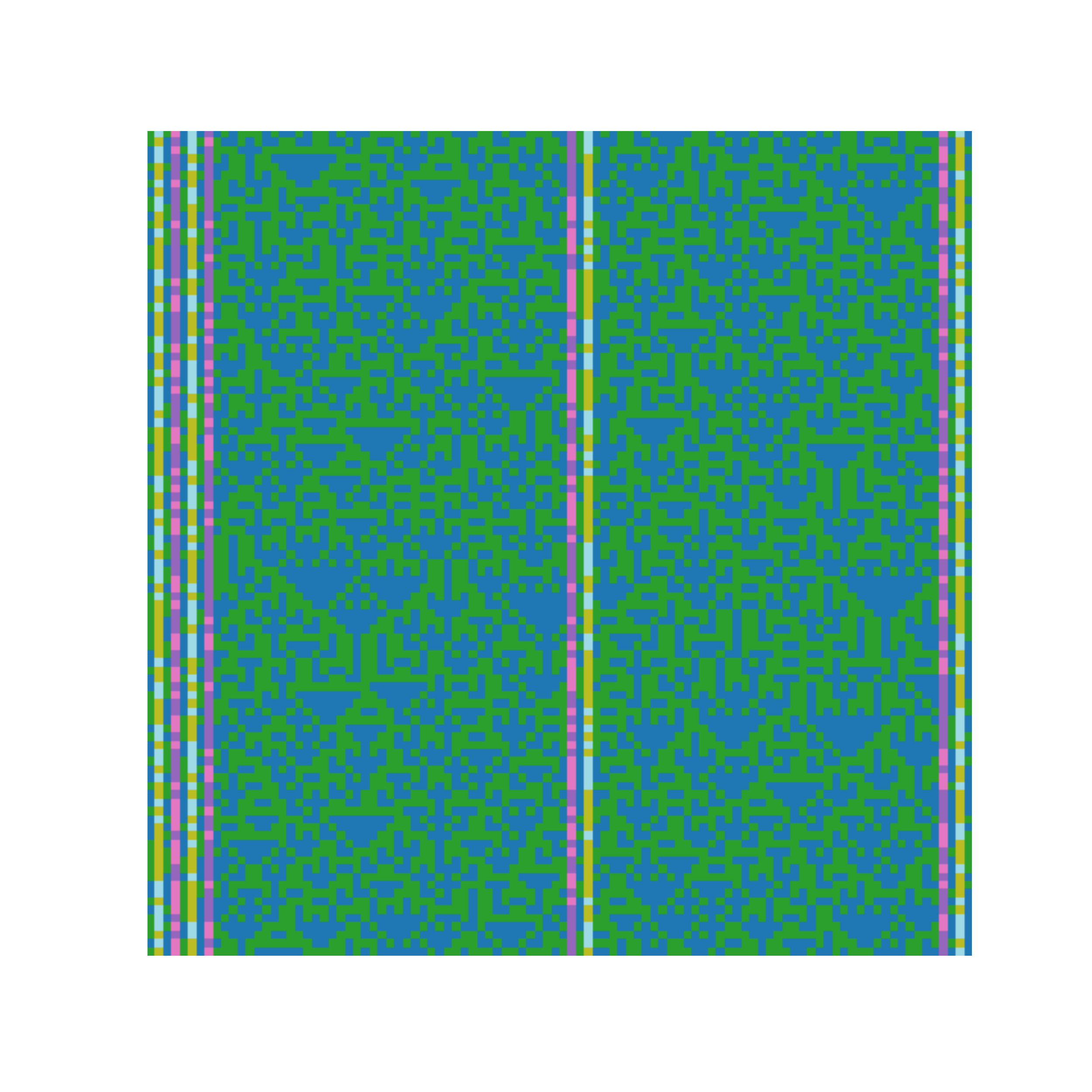}
	\caption{Dynamics of Rule $1$ Bull cellular automaton starting from a random initial condition.  Green cells are regular cells in state $1$; blue cells are regular cells in state $0$; purple cells are right-modified neighbourhood cells in state $0$; pink cells are modified right-neighbourhood cells in state $1$; yellow cells are left-modified neighbourhood cells in state $0$ and light blue cells are modified left-neighbourhood cells in state $1$ }
	\label{fig:dynbull}
\end{figure}
In this subsection, we have repeated the same experiments we did for totalistic two dimensional cellular automata in previous sections, i.e., we start by searching for a set of fixed points by simulating for small values of $n$ ($n=12$) the dynamics of the system starting from any initial condition. Note that in this case, the state values also code the type of neighbourhood of the cell so, we are also taking into account any possible assignation for different types of neighbourhoods in any cell. More precisely, given a initial condition we simulate the system for $t=100$ steps and we verify if the system has reach a fixed point. Then, we search for boolean gates by perturbing this sample of fixed points in the same way we did in previous sections. We show the fixed point which attains the great spectrum (the one which simulates the greatest amount of different boolean gates) and we exhibit the its dynamics in order to understand how it simulate certain boolean gates.

\subsubsection{Simulation set-up}
We start by describing the parameters of the following simulations:

\begin{enumerate}
	\item Number of cells ($n$) : 12
	\item Simulation time $t$: 100	
	\item Number of fixed points considered: 608
\end{enumerate}
\subsubsection{Results}
\begin{table}[]
\resizebox{\textwidth}{!}{%
\begin{tabular}{|l|l|l|l|l|l|l|l|l|l|l|l|l|l|l|l|l|}
\hline
      & 0   & 1  & 2  & 3  & 4  & 5  & 6  & 7  & 8  & 9  & 10  & 11 & 12 & 13 & 14 & 15 \\ \hline
1Bull & 100 & 67 & 96 & 70 & 98 & 71 & 98 & 69 & 98 & 67 & 100 & 69 & 99 & 69 & 96 & 71 \\ \hline
\end{tabular}%
}
\caption{Spectrum for rule $1$ bull cellular automaton. Numbers are percent frequency of each boolean gate related to the total amount of gates simulated by a sample of 608 fixed points obtained by simulating the system starting from different initial conditions.}
\label{tab:spectrumbull}
\end{table}
\begin{figure}[H]
	\centering
	\includegraphics[scale=0.55]{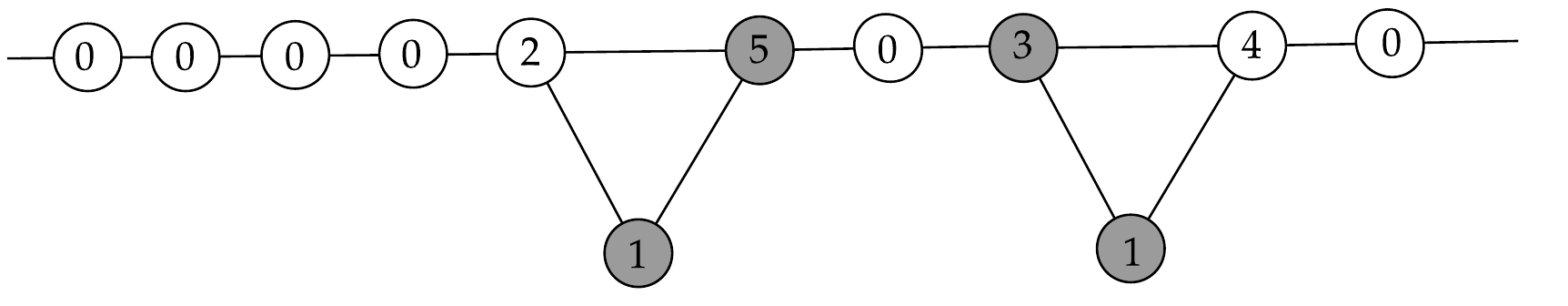}
	\caption{Fixed point for rule 1 bull cellular automaton with the greatest spectrum value. Gray nodes are considered active and white nodes inactive. Coded state is shown inside each node.}
	\label{fig:fpbull}
\end{figure}
\paragraph{Non-uniformity produces all possible boolean network}
From Table \ref{tab:spectrumbull} we deduce that system can simulate all two input-one output boolean gates. That is very interesting as it confirms our initial insight from previous result on gadget shown in Figure \ref{plot:gadget}. In order to precise what role plays these modified neighbourhoods in terms of how many of them we need to simulate a large amount of boolean gates, we show in Figure \ref{fig:fpbull} the most representative fixed point of the sample needs roughly $33\%$ of cells with a extended neighborhood (two bull graphs) to produce the greatest spectrum. Moreover, we have analysed its particular spectrum of generated boolean gates and we observe it can simulate all of them.

Finally, we are interested in exhibit the dynamical behaviour of the sistem starting from different perturbations of this fixed point in some given input-output assignation.  This experiment will simulate a NAND gate. As it is shown in Figure \ref{fig:NANDBull}, we fixed the first and the fourth cells as a inputs (they are marked with red dots in the figure) and the fifth cell as output. 
 We studied the whole dynamics for all the simulation time $t=100$ however, we have found that, actually, at different time steps system is capable of simulate a NAND gate. In fact, one can deduce from in Figure \ref{fig:NANDBull} (which shows $t=27$ time steps) , system is exhibiting a periodic behaviour for perturbations $(1,0)$ and $(0,1)$ and it remains in a fixed point for the rest of them. We observe that depending of the period of both attractors we can periodically see the simulation of the NAND gate for different simulation times. 
\begin{figure}
	\centering
	\includegraphics[scale=0.5]{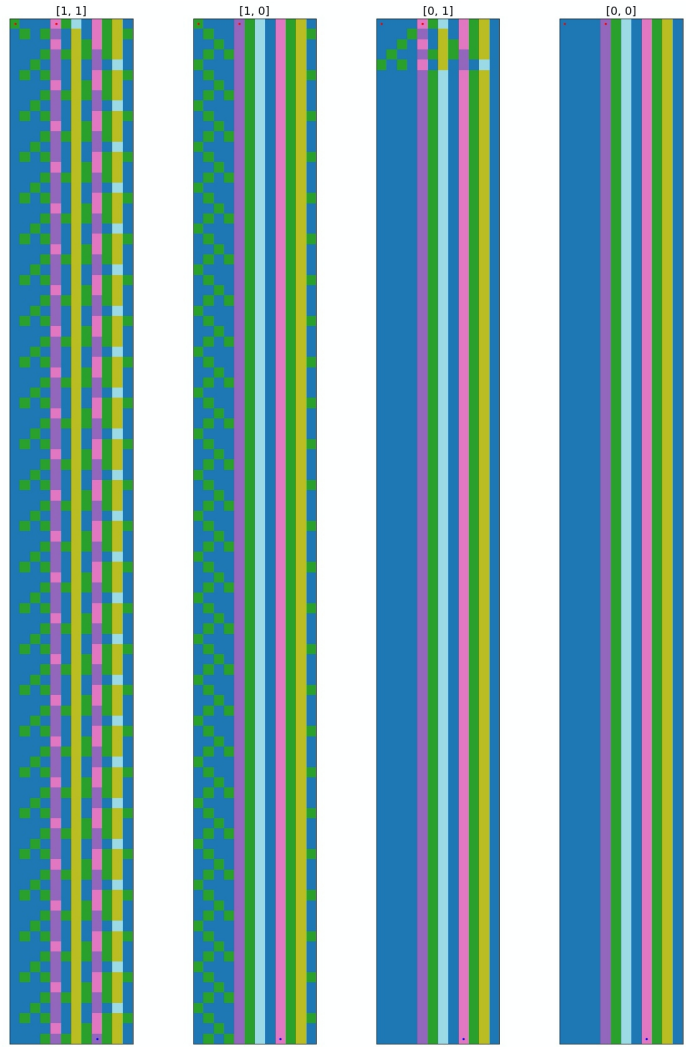}
	\caption{Dynamics of Rule $1$ Bull cellular automaton starting from fixed point shown in Figure \ref{fig:fpbull}. Title of plots indicates a perturbation of inputs. Inputs are marked by red dots and output is marked by a blue dot. Green cells are regular cells in state $1$; blue cells are regular cells in state $0$;  purple cells are right-modified neighbourhood cells in state $0$; pink cells are modified right-neighbourhood cells in state $1$; yellow cells are left-modified neighbourhood cells in state $0$ and light blue cells are modified left-neighbourhood cells in state $1$  }
	\label{fig:NANDBull}
\end{figure}

\begin{figure}[!tbp]
    \centering
    \includegraphics[scale=0.6]{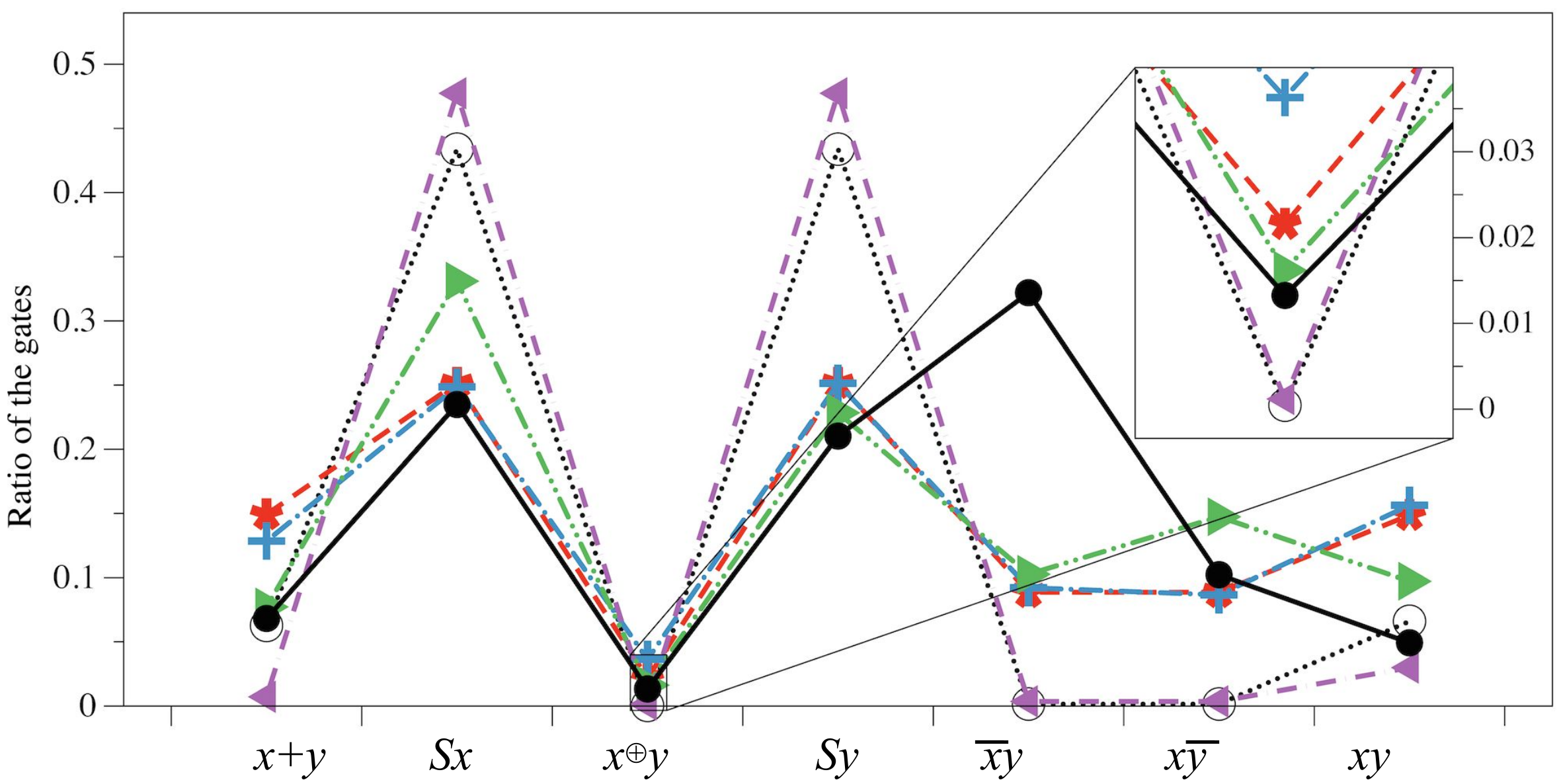}
    \caption{Comparative ratios of Boolean gates discovered in 
    mycelium network of real fungal colony~\cite{adamatzky2020booleanFungal}, black disc and solid line;
    slime mould \emph{Physarum polycephalum}~\cite{harding2018discovering}, black circle and dotted line;
    succulent plant~\cite{adamatzky2018computers}, red snowflake and dashed line; 
    single molecule of protein verotoxin~\cite{adamatzky2017computing}, light blue `+' and dash-dot line;
    actin bundles network~\cite{adamatzky2019computing}, green triangle pointing right and dash-dot-dot line;
    actin monomer~\cite{adamatzky2017logical}, magenta triangle pointing left and dashed line. Area of {\sc xor} gate is magnified in the insert. Lines are to guide eye only.}
    \label{fig:gatesDistribution}
\end{figure}

In numerical experiments we calculated spectra of Boolean gates implementable by totalistic automata network with various probabilities of connectivity (Tabs.~\ref{tab:p01}, \ref{tab:p05}, \ref{tab:p08}). Let us compare these with our previous experimental laboratory and numerical modelling results shown in Fig.~\ref{fig:gatesDistribution}.  The ratios of the gate discovered are obtained in  experimental laboratory reservoir computing with slime mould \emph{Physarum polycephalum}~\cite{harding2018discovering}, succulent plant~\cite{adamatzky2018computers} and numerical modelling experiments on computing with 
protein verotoxin~\cite{adamatzky2017computing},actin bundles network~\cite{adamatzky2019computing}, actin monomer~\cite{adamatzky2017logical}, and fungal colony~\cite{adamatzky2020boolean}. The distributions are quite similar. The gates selecting one of the inputs are in majority, followed by {\sc or} gate, {\sc not-and} an {\sc and-not} gates. The gate {\sc and} is usually underrepresented in experimental and modelling experiments. The gate {\sc xor} is a rare find. The similarity of the distributions give us a hint on universality of all types of biological networks and the totalistic automata network echoing them.

 The fact that only automata networks governed by totalistic rules with at least one isolated interval can generate any Boolean function echoes our previous numerical experiments on Boolean gates in protein molecules~\cite{adamatzky2017computing,adamatzky2017logical}: a variety of logical gates realisable on excitable automata networks, where functions rely on excitation interval is much higher than that of the excitable automata networks with threshold excitation functions. 
 
\section{Discussion}

In this article we have studied theoretically and computationally the complexity of totalistic networks, according to their ability to simulate different Boolean functions (or, equivalently, Boolean circuits). It was shown that using networks of isolated and interval totalistic rules any Boolean function is generated. Using threshold networks it is possible to construct any monotonic function and with those defined by a matrix (linear: with local disjunction or the XOR) the generation of Boolean functions is poor: only constants or those of their own type (OR or XOR ). Furthermore, by means of computational experiments, the spectrum of Boolean functions generated by totalistic networks on random graphs and in a two-dimensional grid was established.

It is important to point out that for the generation of Boolean functions not only the totalistic rule and the considered graph are important, but also the fixed points. In this sense, there is a relationship between the totalistic function and the balance of active (1) and inactive (0) states of the fixed point considered. For example, if the local function reaches the state 1 with at least one active state, then the fixed point may have very few nodes in active state and still its disturbance can produce a wide spectrum of Boolean functions. In particular, for this type of functions we established that the fixed point $\vec{0}$ is enough to determine a complete spectrum of Boolean functions. However, if the local totalistic rule requires two or more active states to be activated, then the fixed point $\vec{0}$ is not a sufficient support to generate an interesting set of Boolean functions and, if they exist, fixed points with active states should be considered. In this perspective, a future work could be the study of the relationship between the perturbations of fixed points, the dynamics generated (how many nodes of the network change state before reaching a steady state again) and the Boolean functions obtained. In some way, this problem seems to be related to the amplitude of the “avalanche” produced by the disturbance (cells that change their value) and the latter could be studied from the point of view of the sand pile model proposed by Per Bak and the self-organized criticality paradigm \cite{bak1988self}. 

On the other hand, in this work we have associated Boolean functions with circuits implemented in the network, but apart from proving that it is possible to generate the universal base functions (AND, OR, NAND, XOR) that allow us to construct arbitrary Boolean functions. However, this is a result, we would say, extreme: we are the ones who, based on the local functions considered, build these gadgets, so the network. In the biological problem, the network is given. Therefore, we also did numerical experiments to observe the type of functions that appear according to the type of totalistic function and the topology of the network. This led us to looking at the problem from a different angle: searching for “wild” Boolean functions, that is, setting a totalistic function and discovering subgraphs with active and inactive states that emulate interesting Boolean functions. Specifically, we study one of them, the bull gadget, inserting it into a one-dimensional totalistic automaton. Clearly, this path: the search for “wild gadgets” and its applications, is a promising line of future work.

 \section*{Acknowledgement}

AA has received funding from the European Union's Horizon 2020 research and innovation programme FET OPEN ``Challenging current thinking'' under grant agreement No 858132. EG residency in UWE has been supported by funding from the Leverhulme Trust under the Visiting Research Professorship grant VP2-2018-001 and  from the project the project 1200006, FONDECYT-Chile. MRW has recieved funding from ANID via PFCHA/DOCTORADO NACIONAL/2018 – 21180910 and was also supported by PIA AFB 170001.

\bibliographystyle{plain}
\bibliography{biblio}

\end{document}